\documentclass[aps,twocolumn,floatfix,superscriptaddress,nofootinbib]{revtex4-2}
\usepackage{booktabs}
\usepackage{graphicx}% Include figure files
\usepackage{braket}
\usepackage{stix}
\usepackage{xcolor}
\usepackage{appendix}
\usepackage{dcolumn}% Align table columns on decimal point
\usepackage{soul}
\usepackage{bm}% bold math
\usepackage{amsthm}
\usepackage{enumitem}
\usepackage{amsmath}
\newtheorem{theorem}{Theorem}[section]

\newtheorem{lemma}{Lemma}
\usepackage{xr-hyper}
%\externaldocument[supp-]{Supp_Info}
\usepackage{hyperref}
\hypersetup{colorlinks,
	citecolor=blue
}
\usepackage{cleveref}
\usepackage{apptools}
\AtAppendix{\counterwithin{lemma}{section}}

% \makeatletter
% \newcommand*{\addFileDependency}[1]{% argument=file name and extension
%   \typeout{(#1)}
%   \@addtofilelist{#1}
%   \IfFileExists{#1}{}{\typeout{No file #1.}}
% }
% \makeatother
% \newcommand*{\myexternaldocument}[1]{%
%     \externaldocument{#1}%
%     \addFileDependency{#1.tex}%
%     \addFileDependency{#1.aux}%
% }
\usepackage[ruled,vlined]{algorithm2e}
%\include{pythonlisting}

%\myexternaldocument{Supp_Info}
%\AtAppendix{\counterwithin{Corollary}{section}}
%\renewcommand{\footnotesize}{\small}
%\mciteErrorOnUnknownfalse

%\usepackage{hyperref}% add hypertext capabilities
%\usepackage[mathlines]{lineno}% Enable numbering of text and display math
%\linenumbers\relax % Commence numbering lines

%\usepackage[showframe,%Uncomment any one of the following lines to test 
%%scale=0.7, marginratio={1:1, 2:3}, ignoreall,% default settings
%%text={7in,10in},centering,
%%margin=1.5in,
%%total={6.5in,8.75in}, top=1.2in, left=0.9in, includefoot,
%%height=10in,a5paper,hmargin={3cm,0.8in},
%]{geometry}

\begin{document}

\preprint{APS/123-QED}

\title{Random projection using random quantum circuits}% Force line breaks with \\

\author{Keerthi Kumaran}

\affiliation{Department of Physics and Astronomy, Purdue University, West Lafayette, IN 47907}

\author{Manas Sajjan}

\affiliation{Department of Chemistry, Purdue University, West Lafayette, IN 47907}

\author{Sangchul Oh}

\affiliation{Department of Chemistry, Purdue University, West Lafayette, IN 47907}
\affiliation{Department of Physics, Southern Illinois University, Carbondale, Illinois 62901-4401}

\author{Sabre Kais}
\email{kais@purdue.edu}

\affiliation{Department of Physics and Astronomy, Purdue University, West Lafayette, IN 47907}

\affiliation{Department of Chemistry, Purdue University, West Lafayette, IN 47907}

\affiliation{Department of Electrical and Computer Engineering, Purdue University, West Lafayette, IN 47907}

\begin{abstract}
The random sampling task performed by Google's Sycamore processor gave us a glimpse of the "Quantum Supremacy era". This has definitely shed some spotlight on the power of random quantum circuits in this abstract task of sampling outputs from the (pseudo-) random circuits. In this manuscript, we explore a practical near-term use of local random quantum circuits in dimensional reduction of large low-rank data sets. We make use of the well-studied dimensionality reduction technique called the random projection method. This method has been extensively used in various applications such as image processing, logistic regression, entropy computation of low-rank matrices, etc. We prove that the matrix representations of local random quantum circuits with sufficiently shorter depths ($\sim O(n)$)
serve as good candidates for random projection. We demonstrate numerically that their projection abilities are not far off from the computationally expensive classical principal components analysis on MNIST and CIFAR-100 image data sets. We also benchmark the performance of quantum random projection against the commonly used classical random projection in the tasks of dimensionality reduction of image datasets and computing Von Neumann entropies of large low-rank density matrices. 
And finally using variational quantum singular value decomposition, we demonstrate a near-term implementation of extracting the singular vectors with dominant singular values after quantum random projecting a large low-rank matrix to lower dimensions. All such numerical experiments unequivocally demonstrate the ability of local random circuits to randomize a large Hilbert space at sufficiently shorter depths with robust retention of properties of large datasets in reduced dimensions. 
\end{abstract}

\maketitle

%%%%%%%%%%%%%%%%%%%%
%%% INTRODUCTION %%%
%%%%%%%%%%%%%%%%%%%%

\section{Introduction}

Many problems in machine learning and data science involve the dimensional reduction of large data sets with low ranks \cite{VELLIANGIRI2019104} (e.g. Image processing). Dimensional reduction as a preprocessing step reduces computational complexity in the later stages of processing. Principal Components Analysis (PCA) \cite{SALEM2019292}, reliant on Singular Value Decomposition (SVD), is one such method to reduce the dimension of data sets by retaining only the singular vectors with dominant singular values. There are quantum circuit implementations for PCA (and SVD) \cite{Lloyd_2014,PhysRevA.97.012327,kerenidis2016quantum,Wang2021variationalquantum} and for related applications \cite{PhysRevLett.120.050502}, some of which are near-term (Noisy Intermediate Scale Quantum (NISQ) technologies era \cite{Preskill2018quantumcomputingin}) algorithms \cite{Wang2021variationalquantum}. \\

Techniques like PCA (and SVD) involve a complexity of $O(N^3)$, where $N$ is the size (or the dimension) of data vectors. An alternative to such computationally expensive dimensional reduction methods is the random projection method \cite{10.1145/2842602,ACHLIOPTAS2003671,image}. In the random projection method, we multiply the data sets with certain random matrices and project them to a lower dimensional subspace. Recent years have witnessed fruitful usage of an especially thoughtful variant of such random projections which are known to preserve 
the distance between any two vectors in the data set (say \textbf{$\vec{x_1}$} and \textbf{$\vec{x_2}$})
in the projected subspace up to an error that scales as $O(\sqrt{\frac{\log(N)}{k}})$ where $N$ is the original dimension and $k$ is the reduced dimension of each data vector.  This choice is motivated by the Johnson Lindenstrauss lemma (JL lemma) \cite{JL} introduced at the end of the last century. Since this manuscript will exclusively use such transformations to validate all the key results, we shall hereafter refer to such candidates as good random projectors.  
Such projection techniques are beneficial to myriad applications because the preservation of distances between data vectors ensures that their distinctiveness is uncompromised thereby rendering them usable for discriminative tasks such as classification schemes like logistic regression  (\cite{paul2014random}).

Classically, this is advantageous compared to other methods like PCA  because the random matrix used for projection is independent of the data set considered. The time complexity involved in the random projection arises from matrix multiplication complexity $O(N^{2.37})$ \cite{alman2020refined} followed by the usual SVD complexity of $O(N^2 {\rm {poly}} \log(N))$ 
making the resulting scheme cheaper than the PCA (or SVD). It must be emphasized that a further reduction in the time complexity to $O(N \rm poly\log(N))$ can be afforded using the Fast Johnson-Lindenstrauss transforms \cite{doi:10.1137/060673096}. Several candidates have been studied in classical random projection including Haar random matrices, Gaussian random matrices etc. But the memory complexity of storing such matrices can be potentially huge (proportional to $N^2$ times the precision of each matrix entry). This has engendered the introduction of several competing candidates with better memory complexity(containing sparse matrices with random integer entries) and multiplication complexity. The latter category is mainly considered in practical applications today\cite{ACHLIOPTAS2003671} and will also be used to compare the results of the quantum variants in this manuscript.\\

Classically random projections performed by using projectors sampled from Haar random unitaries suffer from the innate problem of storage due to its exceptionally high memory usage. Even in the quantum setting, implementing Haar random unitaries requires exponential resources as shown in some counting arguments \cite{knill1995approximation}. As a result, it is natural to consider unitary $t-$designs  which only match the Haar measure up to $t$-th moments. 
Quantum implementation of such $t-$ designs, as has been studied in this manuscript, is efficient owing to the fact that local random quantum circuits approach approximate unitary $t-$ designs \cite{PhysRevA.80.012304,dankert2005efficient,4262758} in sufficiently shorter $(O(\log(N) t^{10.5})$ depths \cite{Harrow_2009,Brand_o_2016} (Here, we have assumed that the number of qubits $n$ required to encode a data vector or a wave vector of size $N$ is $\sim \log (N)$). It was shown recently that even shorter depths suffice \cite{Haferkamp2022randomquantum}. The primary workhorse of this manuscript will be based on such quantum circuits which as we shall eventually show not only performs better in accuracy than standard more commonly used classical variants but also require a lesser number of single qubit random rotation gates $O({\rm poly}(\log(N)))$ for implementation.

The flow of the paper is as follows. In Sec.II, we begin with an introduction to the JL Lemma and how that makes the random projection method effective. This is followed by a brief introduction to the Haar measure and approximate Haar unitaries generated from local random quantum circuits. Then, we explicitly prove that the local random quantum circuits which are exact unitary $2-$ designs can satisfy the JL lemma with the same high probability as Haar random matrices thereby making them good random projectors. We then extend the results to approximate unitary $2-$ designs and discuss the bounds on depths to achieve a certain error threshold in the JL lemma
and derive a slightly different probability of the satisfaction of the latter.  We would note that the quantum memory required to store a 2-design or approximate 2-design is $O({\rm poly}(\log N))$ where $N$ is the size of the data vector.  It is worth noting that it has been previously shown in Ref.\cite{sen2018quantum} that approximate unitary $t-$ designs with $t= O(k)$ can be used to satisfy the JL lemma,  thus corroborating our assertions that even they are good candidates for random projection. The exponentially low limit obtained in Ref\cite{sen2018quantum} is better than the limit derived in this paper only for system sizes $N \ge O(10^4)$. For $N \sim O(10^3)$, limits derived in this manuscript for unitary $2-$ designs are tighter.

For numerical quantification of the key assertions, we first use the MNIST, CIFAR-100 image data sets(\cite{deng2012mnist,Krizhevsky09learningmultiple}) and show that the quantum random projection preserves distances post-projection not far off from the computationally expensive algorithms like PCA (along the lines similar to Ref.\cite{image}) and is similar to the classical random projection. This task doesn't require one to know the singular values or the singular vectors explicitly. We compare the performance of quantum random projection with the commonly used classical random projection technique. Instead of benchmarking it against the Haar random matrices generated classically, we make use of classical random projectors whose storage and multiplications are efficient. To this end, we use
Subsampled Randomised Hadamard Transform (SRHT)(\cite{doi:10.1137/060673096}) for different sizes of data sets (1024, 2048 corresponding to 10 and 11 qubits, respectively). As a second instance, we look at a task that requires us to calculate the singular values of large low-rank data matrices and the singular vectors associated with them. In this regard, we perform the computation of entropies of low-rank density matrices by randomly projecting them to reduced subspace (along the lines of (Refs. \cite{kontopoulou2020randomized,dong2022robust}) to get the dominant singular values post-projection. We also demonstrate that one can construct the simplest quantum random projector by performing quantum random projection and extracting the dominant singular values using the variational quantum singular value decomposition (VQSVD) \cite{Wang2021variationalquantum}. Here, random projection to a lower dimension allows us to optimize using a lower dimensional variational ansatz at one end.  The combined effect of the variational nature of the algorithm and the fact that unitary $t$-designs are short-depth establishes good testing grounds for the implementation of this demonstration in near-term devices \cite{Preskill2018quantumcomputingin}
.These demonstrations highlight the ability of local random circuits to efficiently randomize a large Hilbert space(and hence require exponentially lesser parameters to create a random projector) and serve as good random projectors for dimensionality reduction. \\

\section{THEORETICAL BACKGROUND}\label{Theory}

\subsection{Random Projection}

The random projection method is a computationally efficient technique for dimensionality reduction and is useful in many problems in data science, signal processing, machine learning, etc (See, for example,[\cite{ACHLIOPTAS2003671,image}]). The reason behind the effectiveness of the method stems from the Johnson-Lindenstrauss lemma \cite{JL}.\\

\begin{lemma}
    For any $0 < \epsilon < 1$ and $N \in \mathbb{Z}_{+}$. Let us also consider $k \in \mathbb{Z}_{+}$ s.t. \\
    \begin{equation}
        k \sim O(\frac{\log(N)}{\epsilon^2})
    \end{equation}
    Then, for any set of vectors $S=\{\vec{x_i}\}_{i=1}^N$ with  $\vec{x_i} \in \mathbb{R}^d$, $\exists$ $f : \mathbb{R}^d \rightarrow \mathbb{R}^k$ $s.t.$  $\forall\:\: \vec{x_i},\vec{x_j} \in S$
   \begin{equation}
    (1-\epsilon)|\vec{x_1}-\vec{x_2}|_2\leq|f(\vec{x_1}) -f(\vec{x_2})|_2\leq (1+\epsilon)|\vec{x_1}-\vec{x_2}|_2  
    \label{JL eqn}
\end{equation}
where $||_2$ refers to the $l_2$ norm. 
\end{lemma}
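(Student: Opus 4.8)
The plan is to prove the lemma by the probabilistic method: I will construct $f$ as a \emph{random} linear map and show that, with probability strictly bounded away from zero, a single draw simultaneously satisfies the two-sided bound for all pairs of points in $S$. Concretely, take
\begin{equation}
f(\vec{x}) = \frac{1}{\sqrt{k}}\, A\,\vec{x},
\end{equation}
where $A$ is a $k \times d$ matrix whose entries are i.i.d.\ standard Gaussians (equivalently, up to normalization, the orthogonal projection onto a uniformly random $k$-dimensional subspace of $\mathbb{R}^d$; the Gaussian model is the more convenient one for the moment computations). Because $f$ is linear, $|f(\vec{x}_i) - f(\vec{x}_j)|_2 = |f(\vec{x}_i - \vec{x}_j)|_2$, so the whole problem reduces to controlling how well $f$ preserves the Euclidean norm of one fixed vector.

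The core step is the concentration estimate: for any fixed $\vec{v} \in \mathbb{R}^d$,
\begin{equation}
\Pr\!\left[\, (1-\epsilon)\,|\vec{v}|_2^2 \le |f(\vec{v})|_2^2 \le (1+\epsilon)\,|\vec{v}|_2^2 \,\right] \ge 1 - 2\exp\!\left(-c\,\epsilon^2 k\right)
\end{equation}
for an absolute constant $c > 0$. By homogeneity it suffices to take $\vec{v}$ a unit vector; then the coordinates of $A\vec{v}$ are i.i.d.\ standard Gaussians, so $k\,|f(\vec{v})|_2^2 = |A\vec{v}|_2^2$ is a chi-squared variable with $k$ degrees of freedom and mean $k$. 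The tail bound follows from the standard Chernoff argument applied to $\chi^2_k$: bound $\Pr[\chi^2_k \ge (1+\epsilon)k] \le \inf_{0 < t < 1/2} e^{-t(1+\epsilon)k}(1-2t)^{-k/2}$, optimize over $t$, and estimate the resulting expression using $\log(1-x) \le -x - x^2/2$; the lower tail is handled symmetrically. Keeping the quadratic term in this expansion is exactly what yields the $\epsilon^2$ (rather than $\epsilon$) in the exponent.

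To finish, apply this estimate to each difference vector $\vec{v}_{ij} = \vec{x}_i - \vec{x}_j$, $1 \le i < j \le N$. Each pair violates its two-sided bound (in squared norm, hence also—after taking square roots, which only shrinks the bad event—in norm) with probability at most $2\exp(-c\,\epsilon^2 k)$, so by a union bound the probability that \emph{some} pair fails is at most $\binom{N}{2}\cdot 2\exp(-c\,\epsilon^2 k) < N^2\exp(-c\,\epsilon^2 k)$. This is strictly less than $1$ once $k > C\log(N)/\epsilon^2$ for a suitable absolute constant $C$, i.e.\ whenever $k = O(\log(N)/\epsilon^2)$ as assumed. Therefore at least one realization of $A$ works for every pair at once; fixing that realization gives the deterministic map $f$ claimed in the statement. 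Note that $k$ depends only on $N$ and $\epsilon$ and not on the ambient dimension $d$—the content of the lemma.

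The only genuinely non-routine ingredient is the chi-squared concentration inequality with the correct $\epsilon^2 k$ scaling in the exponent; everything else (homogeneity, linearity of $f$, the union bound) is bookkeeping. I would also flag that the displayed inequality in the statement is written with $\vec{x}_1,\vec{x}_2$ while the quantifier ranges over arbitrary $\vec{x}_i,\vec{x}_j \in S$—a harmless typo—and the argument above establishes it for every pair in $S$ simultaneously, which is the intended reading.
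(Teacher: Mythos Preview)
Your proof is correct and is the standard probabilistic-method argument for the Johnson--Lindenstrauss lemma: Gaussian random projection, $\chi^2_k$ concentration with the $\epsilon^2 k$ exponent, and a union bound over the $\binom{N}{2}$ pairs. There is nothing to compare against, however, because the paper does not prove this lemma at all---it simply cites the original Johnson--Lindenstrauss reference and moves on, treating the statement as classical background. Your write-up is therefore strictly more than what the paper provides, and the approach you chose is exactly the one a reader following the citation would encounter in modern expositions.
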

\begin{proof}
{See Lemma  Ref.\cite{JL}}
\end{proof}

\subsubsection{Definition 1: random projections vs Good random projections} 
 
Multiplication with Gaussian or Haar random matrices along with a scaling factor followed by projection to a reduced subspace is one function that obeys Eq.\ref{JL eqn} \cite{ghojogh2021johnsonlindenstrauss,lacotte2020optimal}.  
Essentially, it follows from the fact that the expected value of Euclidean distance post-random projection is equal to the Euclidean distance in the original subspace. And the distances post-random projection are not distorted beyond an $\epsilon$ factor with high probability because the variance of the distances post-random projection is sufficiently low.

From now on, we will consider random projections that satisfy the JL lemma in Eq.\ref{JL eqn} to be good random projections. In this regard, the JL lemma says that any set of $N$ points in a high-dimensional Euclidean space (say, $\mathbb{R}^N$)can be embedded into a lower number of dimensions (say, $ k = O(\epsilon^{-2} \log N)$) by a random projection, preserving all the pairwise distances to within a multiplicative factor of $1 \pm \epsilon$. This is also equivalent to preserving all the pairwise inner products (or angles). Formally,

 \begin{equation}
    (1-\epsilon)|\vec{x_1}-\vec{x_2}|_2\leq|\Pi.\vec{x_1} -\Pi.\vec{x_2}|_2\leq (1+\epsilon)|\vec{x_1}-\vec{x_2}|_2, 
    \label{JL_eqn_projector}
\end{equation}

where $||_2$ refers to the $l_2$ norm and $\vec{x_i} \in \mathbb{R}^N \:\:\forall\: i$ and $\Pi$ denote the random projection matrix of size $k\times N$ (or $N \times k$ in which case the random matrix multiplies the data vectors from the right) which obeys Eq.\ref{JL_eqn_projector} and will be called good random projectors from now onwards. \\

Several other candidates which satisfy JL lemma have been considered for random projection in various applications. These random matrices include Subsampled Randomised Hadamard Transform (SRHT) and Input Sparsity Transform (IST). (\cite{ACHLIOPTAS2003671,doi:10.1137/060673096,7055df679c064c5cb60e75f342fc3c1d,10.1145/2488608.2488622}). These random projectors are database friendly because, unlike Gaussian or Haar random matrices whose storage memory cost is proportional to the number of matrix entries and precision, these could be retrieved by matrices that are sparse and have whole number entries.\\

For benchmarking the quantum random projection in our analysis later, we will be using the SRHT (\cite{Tro11:Improved-Analysis}) to compare the performances of random projection using random quantum circuits. We picked SRHT because we want to compare different random matrices that could be efficiently stored and multiplied. In a classical setting, that would be the SRHT  and in a quantum setting, it would be the 2-designs that act as quantum random projectors.  We construct a SRHT random projector as in the Algorithm \ref{algo:DHS}:

\begin{algorithm}[H]

\label{algo:DHS}
\SetAlgoLined
  \textbf{INPUT:} integers $N = 2^n$,$k$ with $k\sim log(N)/\epsilon^2$ \\
  1. Assign a diagonal matrix \textbf{D} $ \in \mathbb{R}^{N\times N}$ whose elements are independent random signs 1,-1.  \\
  2. Assign a matrix $H\in\mathbb{R}^{N \times N}$ to be the normalized Walsh-Hadamard Matrix. \\
  3.Assign a matrix $S\in\mathbb{R}^{N \times k}$ by randomly sampling $k$ columns from the $N \times N$ identity matrix.\\
  4. The Subsampled Randomised Hadamard Transform matrix (which is our CRP) is obtained as $\Pi_{CRP} = \sqrt{\frac{N}{k}} D\cdot H\cdot S$.\\
 \caption{Constructing classical random projector (CRP) $\Pi_{CRP}^{N \times k}$ which satisfies JL lemma.(see Eq.\ref{JL_eqn_projector}) }
\end{algorithm}

\subsection{Approximate Unitary $t$-designs}

In the next section, we will show that the random matrices sampled uniformly from the Haar measure satisfy JL lemma. Though the exact replication of Haar random unitaries is not possible as a quantum circuit because of the fact that they require exponential resources\cite{knill1995approximation}, we will show that to satisfy JL lemma, exact or approximate $t$-designs\cite{4262758}, which matches the Haar measure only until 2nd moment would suffice. We will introduce the definitions related to the approximate $t$-designs in this section and provide theorems on approximate $t$-(or 2) designs satisfying the JL lemma.\\

\subsubsection{Definition 1: Moment Operator}
The $t^{th}$ moment of a superoperator defined with respect to a probability distribution $\mathcal{U}(N)$ defined on the unitary group $\mathbb{U}(N)$ is defined as 
\begin{equation}
    \Phi^{(t)}_{\mathcal{U}(N)} (\cdot) = \int_{U \sim \mathcal{U}(N)} U^{\otimes t} (\cdot)(U^{\dagger})^{\otimes t} d\nu (U),
\end{equation}
where $d\nu (U)$ is the volume element of the probability distribution $\mathcal{U}(N)$.

\subsubsection{Definition 2: Exact Unitary $t$-design }
Let us define $\Delta^{(t)}_{\mathcal{U}(N)}(\cdot)$ \cite{10.1145/2488608.2488622}\cite{Nakaji_2021}\cite{PRXQuantum.3.010313} as
\begin{align}
    \Delta^{(t)}_{\mathcal{U}(N)}(\cdot) &:= \int_{U \sim \mathcal{U}} U^{\otimes t} (\cdot)(U^{\dagger})^{\otimes t} d\nu (U) \nonumber\\ &- \int_{U^\prime \sim \mu_H} U^{\prime \otimes t} (\cdot)(U^{\prime \dagger})^{\otimes t} d\nu (U^\prime)
\end{align} 
where $\mu_H$ refers to the uniform distribution over the Haar measure. Unitaries like $U$ sampled from a  distribution $\mathcal{U}(N)$ are said to form a $t-$ design
$\rm{iff}\:\: \Delta^{(t)} (X) = \mathbb{0}\:\: \forall \:\: X=f(U) \in \mathbb{U}(N)$. This essentially means that the $\mathcal{U}(N)$ mimics the Haar measure up to the $t$-th moment.\\
\subsubsection{Definition 3: $\alpha$ Approximate unitary $t$-designs}
The unitary group $\mathbb{U}(N)$ is said to form an $\alpha$ approximate unitary $t-$ design iff 
\begin{equation}
    || \Delta^{(t)}_{\mathcal{U}(N)} ||_{\diamond}  \leq {\alpha}/N^t
    \label{approx-t}
\end{equation}
where $|| \cdot ||_{\diamond}$ refers to the diamond norm (see for example \cite{Haferkamp2022randomquantum}). Though the $\alpha$ approximate unitary design definition here involves the diamond norm, formulations using other norms exist\cite{low2010pseudorandomness}, and the theorems in the following section generalize for those formulations as well. Local random quantum circuits with lengths $O(\log(N)(\log(N)+\log(1/\alpha)))$ become $\alpha$ approximate 2-designs \cite{Haferkamp2022randomquantum}. \\

\section{Random quantum circuits as Random projectors}

In this section, we show that local random quantum circuits which are approximate unitary 2-designs(or exact unitary 2-designs) are suitable candidates for the random projection (will be called quantum projectors from now on). We show that quantum projectors satisfy Johnson Lindenstrauss's lemma so that their random projection is a $l_2$ subspace embedding with a very high probability of having a very low error. And if one were to compute specific quantities like entropy, one should quantify whether such random matrices produce projected singular values that are closer to the true singular values with higher probability. In a later section, we will discuss how the projection can be done on real quantum computers and how the reduced dimensional vectors and their singular values can be read out from near-term quantum computers. In the following theorems, let us denote the Haar measure distribution as $\mu_H$ and the distribution corresponding to $\alpha$ approximate $t=2$ design as $\mu_{2,\alpha}$. Proofs of the theorems can be found in the Appendix\ref{PROOFS}.\\

\begin{theorem}
\label{Haarrp}
Let $ U \in$ $\mathbb{U}(N)$ be sampled uniformly from the Haar measure($\mu_H$) and let $\vec{x_1}$, $\vec{x_2}$ $\in$ $\mathbb{R}^N$. Then the matrix $\Pi^{k \times N}$ obtained by considering any k rows of $U$ followed by multiplication with $\sqrt{\frac{N}{k}}$ satisfy 
\begin{equation}
     (1-\epsilon)|\vec{x_1}-\vec{x_2}|_2\leq|\Pi \cdot (\vec{x_1} -\vec{x_2})|_2\leq (1+\epsilon)|\vec{x_1}-\vec{x_2}|_2
\end{equation}
with probability greater than $(1- \frac{N-k}{4kN\epsilon^2})$ with $\epsilon \in R_{\ge 0}$ being the error threshold.
\end{theorem}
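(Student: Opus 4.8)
The plan is to run a standard second-moment (Chebyshev) argument, using the left-invariance of the Haar measure to reduce everything to the law of a few squared components of a random unit vector. Set $\vec v = \vec x_1 - \vec x_2$ and $L = |\vec v|_2$; since $\Pi$ is linear and the inequality is homogeneous in $\vec v$, it suffices to control the single random variable $Y := |\Pi\cdot\vec v|_2^2$. First I would use that left multiplication by a Haar-random $U$ sends the fixed vector $\vec v$ to $L\,w$, where $w$ is uniformly distributed on the unit sphere of $\mathbb{C}^N$; by this rotational symmetry the law of $w$ (and of $Y$) does not depend on which $k$ rows of $U$ are retained, so we may keep the first $k$, giving $Y = \frac{N}{k} L^2 Z$ with $Z = \sum_{i=1}^k |w_i|^2$.

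The easy half is the first moment: exchangeability of the $|w_i|^2$ together with $\sum_{i=1}^N |w_i|^2 = 1$ gives $\mathbb{E}[Z] = k/N$, hence $\mathbb{E}[Y] = L^2 = |\vec x_1 - \vec x_2|_2^2$, i.e. the projection is unbiased — this is the structural reason the inequality in Eq.~\eqref{JL_eqn_projector} can hold at all.

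Next, and this is the step I expect to be the real work, I would compute $\mathrm{Var}(Y)$. This requires only the second moment operator of the Haar measure, so the same computation carries over verbatim to exact $2$-designs (and, with a controlled error governed by Eq.~\eqref{approx-t}, to $\alpha$-approximate ones). Concretely, $\mathrm{Var}(Y)$ is fixed by $\mathbb{E}_U[|(U\vec v)_i|^2|(U\vec v)_j|^2]$ for $i,j\le k$, which by Weingarten calculus — equivalently, by noting that $(|w_i|^2)_{i=1}^N$ is Dirichlet$(1,\dots,1)$, so that $Z\sim\mathrm{Beta}(k,N-k)$ — equals $2/[N(N+1)]$ when $i=j$ and $1/[N(N+1)]$ when $i\ne j$. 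Summing gives $\mathrm{Var}(Z)=k(N-k)/[N^2(N+1)]$ and therefore $\mathrm{Var}(Y)=\frac{N-k}{k(N+1)}L^4\le\frac{N-k}{kN}L^4$.

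Finally I would apply Chebyshev's inequality to $Y$: $\Pr\!\big(|Y-L^2|>2\epsilon L^2\big)\le \mathrm{Var}(Y)/(2\epsilon L^2)^2\le \frac{N-k}{4kN\epsilon^2}$, and then observe that on the complementary event $|Y-L^2|\le 2\epsilon L^2$ one has $\sqrt{1-2\epsilon}\,L\le|\Pi\cdot\vec v|_2\le\sqrt{1+2\epsilon}\,L$, which to leading order in $\epsilon$ is the claimed two-sided bound $(1-\epsilon)|\vec x_1-\vec x_2|_2\le|\Pi\cdot(\vec x_1-\vec x_2)|_2\le(1+\epsilon)|\vec x_1-\vec x_2|_2$; hence that bound holds with probability at least $1-\frac{N-k}{4kN\epsilon^2}$. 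The genuine obstacle is getting the fourth-moment integrals in $\mathrm{Var}(Y)$ exactly right (or invoking the Beta/Dirichlet description cleanly), since everything downstream is a one-line substitution into Chebyshev; a secondary subtlety is the square-root step at the end, which is also what pins down the constant $4$ in the probability bound.
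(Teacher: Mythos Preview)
Your proposal is correct and follows essentially the same route as the paper: reduce to a unit vector, use Haar invariance to get exchangeability of the components $|y_i|^2$, compute the mean and variance of $|\Pi\vec v|_2^2$, apply Chebyshev with deviation $2\epsilon$, and then pass to the square root. The only difference is in how the fourth moments are obtained --- you use the Dirichlet/Beta identification to get the exact variance $\frac{N-k}{k(N+1)}\le\frac{N-k}{kN}$, while the paper invokes the anti-concentration value $\sum_i E[|y_i|^4]=2/N$ as a large-$N$ approximation and arrives at $\frac{N-k}{k(N-1)}$; your computation is slightly cleaner and in fact delivers the stated constant without the asymptotic caveat.
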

\begin{proof}
See Appendix \ref{Proof1}
\end{proof}

\begin{theorem}
\label{2designrp}
Let $ U \in$ $\mathbb{U}(N)$ be sampled uniformly from the $\alpha$ approximate unitary 2- design ($\mu_{2,\alpha}$) and let $\vec{x_1}$, $\vec{x_2}$ $\in$ $\mathbb{R}^N$. Then the matrix $\Pi^{k \times N}$ obtained by considering any k rows of $U$ followed by multiplication with $\sqrt{\frac{N}{k}}$ satisfy
\begin{equation}
     (1-\epsilon)|\vec{x_1}-\vec{x_2}|_2\leq|\Pi \cdot (\vec{x_1} -\vec{x_2})|_2\leq (1+\epsilon)|\vec{x_1}-\vec{x_2}|_2
\end{equation}
with a probability greater than $1- (\frac{N-k}{4 kN\epsilon^2} + \frac{\alpha}{4\epsilon^2 k})$ with $\epsilon \in R_{\ge 0}$ being the error threshold.
\end{theorem}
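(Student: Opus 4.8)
The plan is to obtain Theorem~\ref{2designrp} as a perturbation of Theorem~\ref{Haarrp}. The only way the Haar measure enters the proof of the latter is through the first two moments of the scalar random variable $Q:=|\Pi\cdot(\vec{x_1}-\vec{x_2})|_2^2\big/|\vec{x_1}-\vec{x_2}|_2^2$, and an $\alpha$-approximate $2$-design reproduces exactly those moments up to an error controlled by $\|\Delta^{(2)}_{\mu_{2,\alpha}}\|_\diamond\le\alpha/N^2$. By homogeneity I may take $v:=\vec{x_1}-\vec{x_2}$ to be a unit vector, so that $Q=\frac{N}{k}\sum_{i=1}^k|(Uv)_i|^2=\frac{N}{k}\,\mathrm{Tr}\!\left[P_k\,U\ket{v}\!\bra{v}U^{\dagger}\right]$, with $P_k$ the projector onto the span of the first $k$ computational-basis vectors. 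Since $|Q-1|\le\epsilon$ forces $(1-\epsilon)^2\le Q\le(1+\epsilon)^2$, the two-sided bound in the statement reduces to a Chebyshev-type tail estimate for $Q$ about $1$, of the form $\Pr[\,|Q-1|\ge c\,\epsilon\,]$ for a fixed constant $c$; the precise choice of $c$ only affects the numerical constant $4$ and is the same one already used for Theorem~\ref{Haarrp}.

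I would then express the two relevant moments through the moment operators of Definitions~1--2: $\mathbb{E}_{\mu_{2,\alpha}}[Q]=\frac{N}{k}\,\mathrm{Tr}\!\left[P_k\,\Phi^{(1)}_{\mu_{2,\alpha}}(\ket{v}\!\bra{v})\right]$ and $\mathbb{E}_{\mu_{2,\alpha}}[Q^2]=\frac{N^2}{k^2}\,\mathrm{Tr}\!\left[(P_k\otimes P_k)\,\Phi^{(2)}_{\mu_{2,\alpha}}(\ket{v}\!\bra{v}^{\otimes2})\right]$. Writing $\Phi^{(t)}_{\mu_{2,\alpha}}=\Phi^{(t)}_{\mu_H}+\Delta^{(t)}_{\mu_{2,\alpha}}$, the Haar pieces are exactly the quantities computed in the proof of Theorem~\ref{Haarrp}, namely $\mathbb{E}_{\mu_H}[Q]=1$ and $\mathrm{Var}_{\mu_H}(Q)=\tfrac{N-k}{k(N+1)}\le\tfrac{N-k}{kN}$. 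What remains is to bound the two correction terms $\mathrm{Tr}[P_k\,\Delta^{(1)}(\ket{v}\!\bra{v})]$ and $\mathrm{Tr}[(P_k\otimes P_k)\,\Delta^{(2)}(\ket{v}\!\bra{v}^{\otimes2})]$. For these I would use H\"older's inequality $|\mathrm{Tr}[AB]|\le\|A\|_\infty\|B\|_1$ together with $\|P_k\|_\infty=\|P_k\otimes P_k\|_\infty=1$, $\|\ket{v}\!\bra{v}^{\otimes t}\|_1=1$, the induced-norm bound $\|\Delta^{(t)}(\rho)\|_1\le\|\Delta^{(t)}\|_\diamond\|\rho\|_1$, the defining inequality Eq.~\eqref{approx-t}, and --- for the first moment, since an approximate $2$-design need not be an exact $1$-design --- the estimate $\|\Delta^{(1)}_{\mu_{2,\alpha}}\|_\diamond\le\|\Delta^{(2)}_{\mu_{2,\alpha}}\|_\diamond$ obtained by realizing $\Delta^{(1)}(\rho)=\mathrm{Tr}_2[\Delta^{(2)}(\rho\otimes \tfrac{I}{N})]$ and using that the partial trace is trace-norm contractive. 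This yields $|\mathbb{E}_{\mu_{2,\alpha}}[Q]-1|=O(\alpha/k)$ and $|\mathbb{E}_{\mu_{2,\alpha}}[Q^2]-\mathbb{E}_{\mu_H}[Q^2]|=O(\alpha/k)$, hence $\mathrm{Var}_{\mu_{2,\alpha}}(Q)\le\tfrac{N-k}{kN}+O(\alpha/k)$.

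Finally I would close with Chebyshev's inequality applied to $Q$ about its own mean, absorb the $O(\alpha/k)$ mean shift into the deviation parameter, and collect constants so that the failure probability is $\tfrac{N-k}{4kN\epsilon^2}+\tfrac{\alpha}{4\epsilon^2 k}$, i.e. the success probability $1-\big(\tfrac{N-k}{4kN\epsilon^2}+\tfrac{\alpha}{4\epsilon^2 k}\big)$ claimed in the theorem. One cosmetic point handled en route: the $\vec{x_i}$ are real while $U$ is complex, but $Q$ is built from the manifestly real quantities $|(Uv)_i|^2$, so the moment computations are unaffected.

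I expect the transfer step --- converting the diamond-norm closeness of $\mu_{2,\alpha}$ to Haar into closeness of the \emph{scalar} moments $\mathbb{E}[Q]$ and $\mathbb{E}[Q^2]$ --- to be the crux. The two delicate points are tracking the exact powers of $N$ and $k$ so the $\alpha$-correction lands precisely as $\tfrac{\alpha}{4\epsilon^2 k}$ rather than as a weaker bound, and handling the first moment cleanly given that $\mu_{2,\alpha}$ is only assumed to be an approximate $2$-design; the rest is the same Chebyshev bookkeeping already carried out for Theorem~\ref{Haarrp}.
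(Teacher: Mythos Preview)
Your proposal is correct and follows essentially the same route as the paper: write the first two moments of $Q$ as traces against the moment superoperators, control the deviation from Haar by the approximate-$2$-design bound, and close with Chebyshev exactly as in Theorem~\ref{Haarrp}.

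The only notable packaging difference is that the paper compresses your two moment estimates into a single one by applying $\Delta^{(2)}$ to the \emph{centered} operator $M=(\ket{v}\!\bra{v}-I/N)^{\otimes 2}$, so that $\mathrm{Tr}\big[P_k^{\otimes 2}U^{\otimes 2}M(U^{\dagger})^{\otimes 2}P_k^{\otimes 2}\big]$ already equals $(\sum_i|y_i|^2-k/N)^2$ and its expectation is the variance directly; the bound then comes from $|P_k^{\otimes 2}|_2=k$ together with the monomial form of the $2$-design condition, giving $|Var_{\mu_{2,\alpha}}-Var_{\mu_H}|\le k\alpha/N^2$, i.e.\ $\alpha/k$ after the $(N/k)^2$ rescaling. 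Your H\"older route with $\|P_k^{\otimes 2}\|_\infty=1$ actually yields the tighter $\alpha/k^2$ on the second moment, which you then relax to $O(\alpha/k)$; and your explicit handling of the first moment via $\Delta^{(1)}(\rho)=\mathrm{Tr}_2[\Delta^{(2)}(\rho\otimes I/N)]$ is more careful than the paper, which tacitly takes $\mathbb{E}_{\mu_{2,\alpha}}[\sum_i|y_i|^2]=k/N$ when writing its variance formula. Either way the final Chebyshev bookkeeping and constants are identical.
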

\begin{proof}
See Appendix \ref{Proof2}
\end{proof}

It is worth mentioning that upper bounds on the distortion have been obtained before with exponential scaling (\cite{sen2018quantum}), namely $2^{4}exp(-2^{-4}\epsilon^2 k)$ for Haar measure and $2^{10}exp(-2^{-10}\epsilon^2k) $ for approximate $t-$designs. These limits are better than the limits obtained here only for ($N,k$) > $10^4$. For the cases that are to be explored in the paper, our limits are tighter than the exponential limits.

\begin{algorithm}[H]

\label{algo:QRP}
\SetAlgoLined
  \textbf{INPUT:} integers $N = 2^n$,$k$ with $k\sim log(N)/\epsilon^2$ \\
  1. Choose $\alpha \sim O(\epsilon ^2 k) $ \\
  2. Construct a local random quantum circuit with length $\sim O(n(n+ \ln(1/\alpha)))$  which is an $\alpha$  approximate 2-design  or pick an exact unitary 2-design ansatz. \\
  3.Append a projection operator $P_k$ acting after the local random quantum circuit to form a  $\Pi^{k \times N}$.\\
 \caption{Constructing quantum random Projector (QRP) $\Pi_{QRP}^{N \times k}$ which satisfy JL lemma \ref{Haarrp},\ref{2designrp}}
\end{algorithm}

For the plots in the experiments section of the paper, we use the ansatz used in \cite{McClean_2018} which is assumed to be an exact 2-design ansatz beyond a certain depth Fig.\ref{fig:rqc}. The main text contains the depths at which the ansatz matches the exact 2-design limit($\sim 150$). There are many candidate local random circuit architectures which are $\alpha$ approximate 2-designs \cite{Haferkamp2022randomquantum}. Instead of studying the projection abilities of different local random circuits architecture, the appendix \ref{NON EXACT} contains some experiments where we look at less expensive ansatz and hence is in an approximate unitary 2- design regime by choosing a lower depth ($\sim 50$) (analogous to \cite{PRXQuantum.3.010313}) of the same circuit Fig. \ref{fig:rqc}.

\begin{figure}[htbp]
  \centering
  \includegraphics[width=0.4\textwidth]{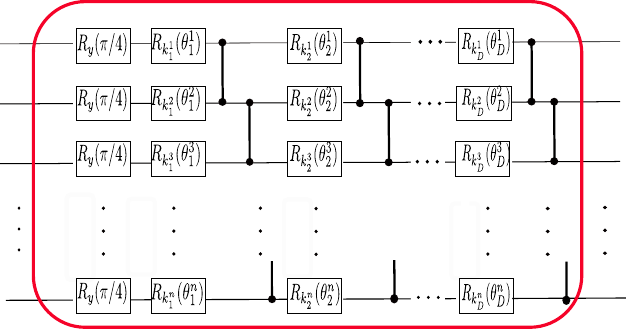}
  \caption{ The local random quantum circuit used in preparing a quantum random projector is the ansatz that has been used in \cite{McClean_2018} and is known to converge to an exact 2-design limit of the variance of the local cost function beyond a certain depth. The circuit contains a layer of $R_y(\pi /4)$ rotations (often used to make all the directions symmetric in a variational training procedure. We don't necessarily need to have this component). This is followed by alternating random single-qubit rotations and ladders of CPHASE operations repeated D times. For $n$ = 10,11 (the dimensions studied in this paper), the circuit reaches the exact 2-design limit (variance limit) at D $\geq$ 150.
}  \label{fig:rqc}
\end{figure}

\section{Experiments on quantum random Projectors}

In this section, we consider two different experiments to benchmark the performance of quantum random projection discussed in the previous section against the SRHT projection which will be labeled as classical random projection in the plots. This should be looked at as a comparison of random projectors' that can be stored and applied efficiently in terms of memory and time complexity in classical vs quantum settings. Since quantum random projectors approximate the Haar measure, their projection abilities are expected to be better than the SRHT projectors because the latter is less random compared to the Haar measure. However, in certain applications, it is known that they both converge to similar performance when the size of the data set tends to infinity \cite{lacotte2020optimal}. We see in the appendix \ref{LARGER} that their performances start becoming closer when we increase the size of the data matrices, and vectors from 1024 to 2048 (corresponding to 10 and 11 qubits, respectively). \\

We initially consider the task that doesn't require us to know the singular values and is concerned with only dimensionality reduction. In this regard, we reduce the dimensions of the MNIST \cite{deng2012mnist} and CIFAR 100\cite{Krizhevsky09learningmultiple} image datasets and benchmark the performance of quantum random projection against classical random projection. We also compare it with the computationally expensive principal components analysis (PCA) which is supposed to give the exact projection to the dominant singular vectors of the datasets and cannot be outperformed beyond a certain rank.\\

In the second task, we calculate the Von Neumann entropy of low-rank density matrices (along the lines of \cite{kontopoulou2020randomized}) which requires us to know the singular values after random projection in addition to the dimensionality reduction. We compare the performance of quantum random projection (QRP) vs classical random projection (CRP) for this task over different ranks (r) of the density matrices.\\

In this section, we pick the local random quantum circuit from Fig. \ref{fig:rqc} and we assume that we can make arbitrary projection operators with any number of basis vectors, i.e, if the $P_k$ operator projects to the first $k$ basis state $(\ket{p_1},\ket{p_2},..,\ket{p_k})$ in any basis. Then, 
\begin{equation}
    P_k = \sum_{i=1} ^k \ket{p_i}\bra{p_i}
\end{equation}
where we don't have any restriction on what basis we pick and what values k can take. In a later section, we discuss the simplest projection operator one can construct by measuring one or more qubits and restricting to particular outputs (0 or 1) in those qubits, as shown in Fig.\ref{fig:QRP}. It is worth noting that this scheme has a structure similar to the quantum autoencoders (\cite{Romero_2017}) but the circuit here is data-agnostic.
\begin{figure}[htbp]
  \centering
  \includegraphics[width=0.48\textwidth]{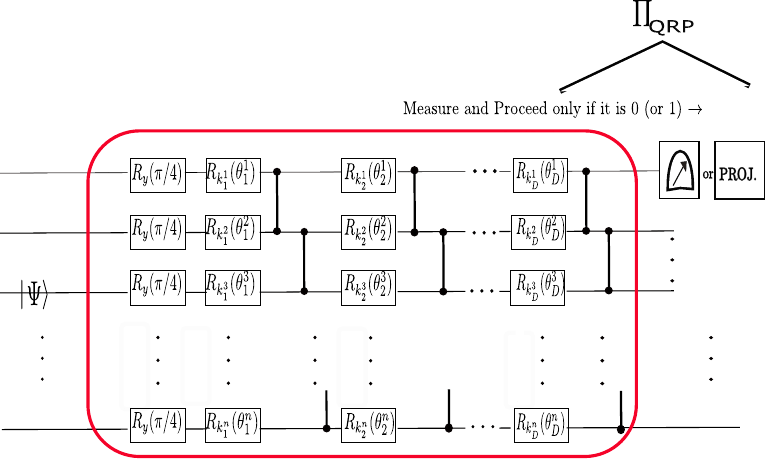}
  \caption{The figure shows the schematic of performing the quantum random projection. The data vector has to be encoded into the circuit through one of the existing encoding schemes {\color{black}(See main text)}. This is followed by the local random quantum circuit  and partial measurements (the number of qubits measured depends on how low your final reduced dimensions are) or an arbitrary projection operator. For partial measurements, the algorithm proceeds only if the measurement results in qubits in only 0 (or only 1). This is equivalent to reducing the data set's size by 1/2,1/4,1/8 and so on depending on how many qubits are measured.
}  \label{fig:QRP}
\end{figure}

\subsection{Dimensionality reduction of Image data sets}
In this subsection, we benchmark the performance of the  QRP against CRP in the task of dimension reduction of subsets of two different image datasets, MNIST and CIFAR-100. We also plot the performance of the computationally expensive PCA which is supposed to capture all the nonzero singular valued singular vectors. When the reduced dimension is greater than the rank of the system, PCA could never be outperformed.\\

MNIST contains 28x28 grayscale images. The matrix representations of the images were boosted to 32x32 so that they can be reshaped into 1024x1 normalised vectors by adding zeros(Note that this is not a common quantum encoding scheme. We use QRP on the normalized data vectors for a direct comparison with CRP). We have to do this preprocessing step because the projectors that we consider (both CRP and QRP) are of the form $2^n \times k$ and hence take only $2^n$ dimensional vectors as the input.\\
 
CIFAR-100, in addition to being 28x28 images, are also colored images and had to be converted to 32x32 grayscale so that they can be fed as input to our projectors. But unlike MNIST, which contains handwritten integers from 0 to 9, the CIFAR-100 dataset contains images belonging to 100 different classes including Airplanes, Automobiles, birds, cats, trucks, etc. As a result, CIFAR-100 is expected to have more features in their datasets and hence greater rank compared to MNIST if we consider subsets from each of these datasets.\\
 
 To perform the comparison between CRP and QRP, we took 1000 images from each of these datasets. And in each of these subsets, we reshaped the images into 1024x1 normalized vectors and performed random projection to lower dimensions (x-axis of the Fig. \ref{fig: JL errors}). Then, we randomly sampled two data vectors and compared the error percentage in their $l_2$ norm (Euclidean distance) between them in the original space and the reduced dimensional space obtained after random projection. This procedure is repeated 10,000 times and the mean error percentages and their 95 $\%$ confidence intervals for different reduced dimensions has been reported in the plots Fig.\ref{fig: JL errors}.\\

 The random projections are performed by multiplying the vectors with random matrices (see Algorithm\:\:\ref{algo:DHS} and Algorithm\:\:\ref{algo:QRP})
 \begin{eqnarray}
    \vec{\tilde {x}} = \Pi_{CRP} \cdot \vec{x} \\
     \ket{\tilde {x}} = P_k.U_{QRP} \cdot \ket{x}
 \end{eqnarray}
where $\Pi_{CRP}$ is the SRHT projector and $U_{QRP}$, $P_k$ are the sampled from the matrix representation of local random quantum circuit and projectors used. And PCA projection is obtained by first computing singular value decomposition on the dataset and projecting them to the subspace of dominant singular vectors.

\begin{equation}
   \vec{\tilde {x}} = \Pi_{PCA} \cdot \vec{x}
\end{equation}
 
\begin{figure}[htbp]
  \centering
  \includegraphics[width=0.48\textwidth]{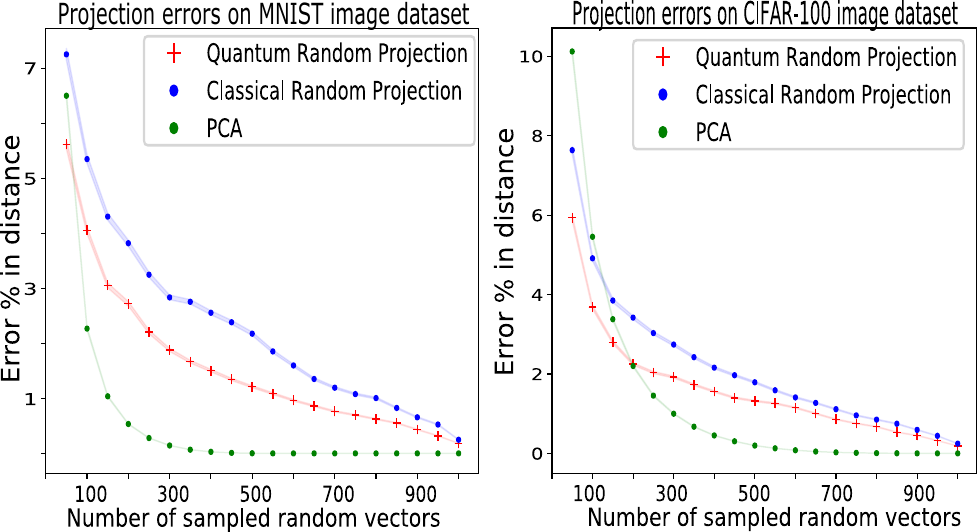}
  \caption{The plots show the mean percentage errors in the distance between 10,000 different random pair of data vectors in the MNIST and CIFAR-100 data sets. The envelopes represent their 95 \% confidence intervals. We see that PCA outperforms the random projection methods beyond a certain rank. Amongst the random projection methods, though there is not much difference between the classical random projection (CRP) and quantum random projection(QRP), we observe that the latter performs slightly better.}
  \label{fig: JL errors}
\end{figure}

Fig.\ref{fig: JL errors} show that the PCA outperforms the random projection methods beyond a certain rank. This is because, beyond the rank of the dataset considered, PCA projects exactly to the subspace with non-zero singular values. Despite that, we see that random projection methods which are not computationally extensive (because they don't compute the subspace with non-zero singular values) perform to the same extent and even better than the PCA at lower reduced dimensions. This dominance in performance at lower reduced dimensions is visible in larger dimensional datasets (see Appendix \ref{LARGER}). We also see that PCA takes more reduced dimensional vectors to catch up with the random projection algorithms in the case of CIFAR-100 because it has comparatively more rank (loosely because it has more features) than the MNIST dataset. {\color{black} These data vectors dimensionally reduced via quantum random projection could be used in quantum machine learning applications such as training an image recognition/classification model (See for example \cite{9550027}).\\\\}

Within the random projection methods, quantum random projection performs slightly better than the classical random projection mainly because Haar random matrices are more random and have tighter JL lemma bounds than the classical random projector. The performance of quantum random projectors which are away from the exact 2-design limit has been analyzed in the Appendix \ref{NON EXACT} by looking at shorter depths ($\sim 50$) of Fig.\ref{fig:rqc} and hence lesser expressive ansatz.\\

{\color{black} The discussion in this section assumed the existence of an exact amplitude encoding scheme for the data vectors. This would require impractical depths of $O(2^n)$ unless the data vectors are genuinely quantum, e.g.,  groundstates of a family of local Hamiltonians. However, for general data vectors like image data vectors, we do not necessarily need exact encoding.  Preserving the distinctness of image data vectors $(\vec{x},\vec{y})$ to a good enough accuracy enables us to use them for many image processing applications such as recognition, and classification. In this regard, there has been substantial work on Approximate amplitude encoding. These schemes encompass approximately encoding data vectors whose amplitudes are all positive \cite{Zoufal_2019}, real\cite{Nakaji_2022}, and even complex data vectors\cite{mitsuda2023approximate} using shallow parametrized quantum circuits.\\\\ With the plots in Fig.\ref{fig: JL errors}, we showed that for exactly encoded data vectors ($\vec{x},\vec{y}$) and quantum random projected vectors ($\vec{\tilde{x}},\vec{\tilde{y}}$)\\
    \begin{equation}
        ||\vec{x}-\vec{y}| - |\vec{\tilde{x}}-\vec{\tilde{y}}|| \leq \delta
        \label{encode1}
    \end{equation}
    on average for pairs of images in the data set used. Here $\delta$ is a very small fraction of $|\vec{x}-\vec{y}|$. A good approximate amplitude encoding scheme is bound to preserve this distance with minimal error, since it preserves the distinctness of the samples as well as (calling $\vec{x}_{app},\vec{y}_{app}$ approximate encoded vectors)\\
    \begin{equation}
        ||\vec{x}-\vec{y}| - |\vec{x}_{app}-\vec{y}_{app}|| \leq \Delta
        \label{encode2}
    \end{equation}
   on average. Here $\Delta$ is a small fraction of $|\vec{x}-\vec{y}|$.\\\\
   With the equations Eqs.\ref{encode1} and \ref{encode2}, it is clear that even with the approximate amplitude encoding, quantum random projection would preserve the distinctness of samples (up to a perturbation of $\delta + \Delta$) and be useful for image processing applications. The exact value of $\Delta$ depends on the efficiency of the approximate encoding used.\\\\
  The other alternative to circumvent the impractical depths of the exact data encoding issue is by adopting different encoding schemes. One can start by reducing the resolution of the images (equivalent to reducing the pixels), which results in reduced classical image data vector dimension (to say $m < 2^n$) and use any other existing data encoding schemes that use qubits' dimensions greater than $m$ but with polynomial depths(For example \cite{Le2011AFR},\cite{2013QuIP...12.2833Z}).\\\\
  If $\Phi(.)$ is the encoding function that takes the original data vector and encodes it as a data vector of dimension $2^d$. Then, to check how well the distinctness is preserved, experiments need to be run on the $d$ qubits with a quantum random projector corresponding to $d$ qubits. Mathematically, we need to check how low the following values are (on average) for two data vectors $\vec{x},\vec{y}$ from the original dataset
    \begin{equation}
        ||\Phi(\vec{x})-\Phi(\vec{y})| - |\Phi(\vec{x})_r-{\Phi(\vec{y})}_r||
        \label{encoding}
    \end{equation}
    where $\Phi(\vec{{x}})_r,\Phi(\vec{y})_r$ are reduced randomly projected encoded vectors.\\\\
    In this work, we confined ourselves to experiments involving an exact encoding scheme despite impractical depths because the preservation of distance for the exact encoding scheme implies the same for the approximate encoding schemes as described earlier. Checking the preservation of distance for other encoding schemes would require knowing the exact form of encoding in Eq.\ref{encoding}.\\\\
     }

Just like its classical counterpart, we can also reconstruct the images back to the original size after the random projection. For classical methods (for a data vector $\vec{x}$ and its reduced data vector $\vec{\tilde{x}}$)

\begin{eqnarray}
    \vec{x}_{recons.} = \Pi_{PCA}^{T} \cdot \vec{\tilde {x}}\\
    \vec{x}_{recons.} = \Pi_{CRP}^{T}\cdot \vec{\tilde {x}} 
\end{eqnarray}
For the quantum case, we need to put in the extra qubits or the subspace to which we projected to get back to the original size and then apply the inverse of the unitary circuit used for projection. For example, if we had projected to the subspace where one of the qubits is in $\ket{0}$ state. We boost the size back to the original size by having a new qubit at $\ket{0}$ and add the inverse unitary circuit ($U_{QRP} ^\dagger$) on this new system. For a general projector,  $\ket{\tilde {x}} \rightarrow $$\ket{\tilde {x}} \otimes \ket{\tilde{p}}$ where tensor product with the $\ket{\tilde{p}}$ ensures that we get back the original size of the data(image). And the reconstruction is done as follows (For the dataset $\ket{x}$)
\begin{equation}
    \ket{x}_{recons.} = U^{\dagger}_{QRP} \cdot \ket{\tilde {x}}
\end{equation}

This is similar to the reconstruction done in \cite{Romero_2017}. These reconstructions work on the premise that the product $\Pi^{\dagger}\Pi \sim {\mathbb{I}}$ of the original data dimension. It is trivial to see that this holds for the PCA projector. It turns out that this also holds for the random projectors. This is because in a larger dimensional space, finding almost orthogonal vectors becomes more common, this has been studied in \cite{HechtNielsen94ContextVectors} and was used in the discussion of \cite{image}. Fig.\ref{fig: recons} shows how one would reconstruct an image from the MNIST dataset after dimensionally reducing a subset of the MNIST images.\\\\
{\color{black} With the reconstructed quantum data vectors, there are processing applications which have computational advantage over classical processing. For example, the complexity of the quantum edge detection algorithm \cite{Yao_2017} is polynomial and doesn't require exponential resources if we have an image encoded either exactly or approximately. The measurement outputs of the edge detection algorithm contain information about the edges. To get the outputs of this experiment, one can also adopt the classical shadows \cite{Huang_2020,Huang_2021} approach to get the probabilities of all the bit strings with less number of measurements than full-state tomography.}

\begin{figure*}[htbp!]
  \centering
  \includegraphics[width=0.8\textwidth]{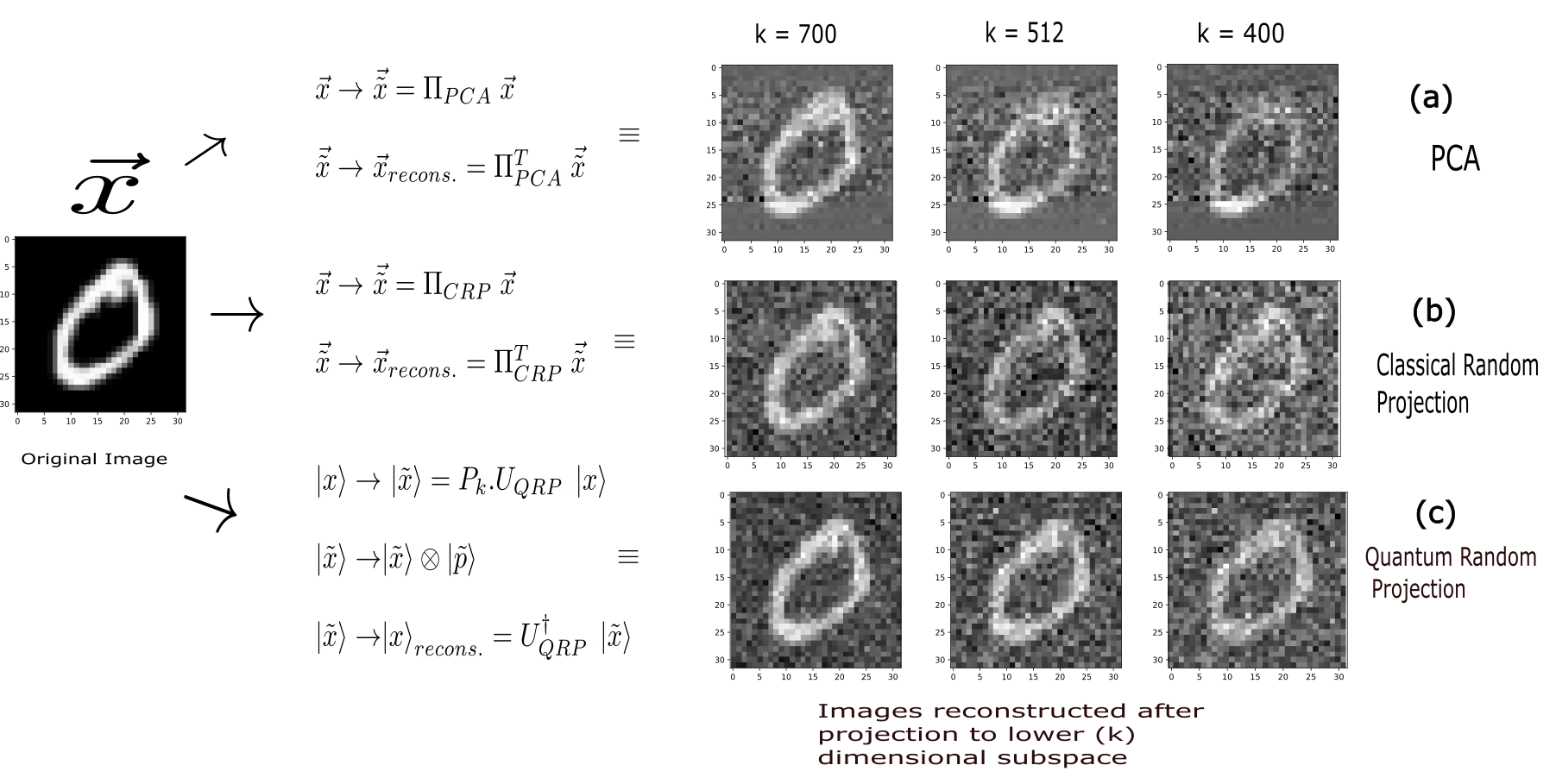}
  \caption{The schematic of steps involved in the dimensionality reduction and image reconstruction of image datasets using (a)PCA, (b)CRP, and (c)QRP. The figure shows the reconstructed images for various reduced dimensions. Though projectors with dimensions 700 and 400 are not straightforward to construct, a reduced dimension of 512 represents projection by measuring one of the qubits (so the size drops from 1024 to 512) and processing further only if it's 0 or 1.{\color{black} The figure illustrates the reconstruction of one of the data vectors from the MNIST data subset we are experimenting, a quantitative description of how the reconstruction performs compared to classical methods has been discussed in the Appendix \ref{imagereconsapp} along with a description about the construction of projection operators.}}
  \label{fig: recons}
\end{figure*}

\subsection{Entropy estimation of low-rank density matrices}

In this subsection, we compare the performance of the quantum random projectors against the classical random projector, SRHT on a task that requires one to obtain the approximate singular values of the dominant singular vectors of a data matrix after the dimensionality reduction. Unlike the previous task, this is concerned with reducing the dimensions of a large data matrix instead of individual data vectors by random projection. After the dimensionality reduction, we check how well the system captures the properties of the dataset by computing the error percentage in a particular property of the data matrix which requires the knowledge of all its singular values.\\

Specifically, we will consider randomly generated semi-positive definite density matrices with random singular vectors but their singular values follow a certain profile. We then compute their entropy after quantum random projection and check their accuracy(along the lines of \cite{kontopoulou2020randomized}). The exact singular values profile of these density matrices depends on the nature of the system. In this experiment, we consider singular values which are linearly decaying and exponentially decaying until the rank of the system and zero afterward. These profiles could be motivated through the existence of physical systems with such profiles. A thermal ensemble of a simple harmonic oscillator mode of frequency $\nu$ with N internal degrees of freedom has an exponentially decaying profile. Here, singular values of $\rho$ will be proportional to $1,e^{-h \nu},e^{-2 h \nu},e^{-3 h \nu},..$ and so on. And, it is known that maximal second order R\'{e}nyi entropy ensemble of a system with a simple harmonic oscillator mode of frequency $\nu$ and N internal degrees of freedom follows a linearly decaying  singular value profile  for its density matrix \cite{Giudice_2021}. The main text contains the plots related to the linearly decaying singular profile and the appendix \ref{EXP} contains the plots related to the exponential decay profile.\\

Here is the procedure to perform random projection given a semi-positive definite matrix $M$ of dimension $N\times N$
\begin{itemize}
    \item Project the original density matrix of size $N\times N$ to a lower dimension $N \times k$ using $\Pi_{CRP}$ and $\Pi_{QRP}$
    \item Perform SVD (classical) or QSVD (quantum) on the lower dimensional matrix to get the singular vectors with singular values $\tilde{p_1}$,$\tilde{p_2}$,$\tilde{p_3}$,....$\tilde{p_k}$ which are approximations to  $p_1,p_2,...p_k$.
    \item Then we obtain an approximation to entropy (S) using $\tilde{S} = \sum \tilde{p_i}\ln{\frac{1}{\tilde{p_i}}}$
\end{itemize}
The accuracy in the approximated entropy is bounded in the following theorem \ref{entropy accuracy}
\begin{theorem}
\label{entropy accuracy}
For a random matrix $\Pi$ satisfying the Johnson-Lindenstrauss (JL) lemma with a distortion $\epsilon \sqrt{\delta}$, where $\epsilon \leq 1/6$ and $\delta \leq 1/2$, the difference in the Von Neumann entropy of a density matrix $\rho$ computed using the random projection with $\Pi$ (denoted as $\tilde{S}(\rho)$) and the true entropy ($S(\rho)$) can be bounded as follows
\begin{equation}
    |{\tilde{S}(\rho) - S(\rho)}| \leq \sqrt{3\epsilon}S(\rho) + \sqrt{\frac{9}{2}}\epsilon
\end{equation}
with probability at least ($1-\delta$).
\end{theorem}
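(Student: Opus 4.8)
The plan is to prove the bound in two stages. In the first I establish that a random projection $\Pi$ obeying the JL guarantee of Eq.~\ref{JL_eqn_projector} distorts the spectrum of $\rho$ only mildly; in the second I propagate that spectral closeness through the scalar entropy function $g(t)=-t\ln t$. Throughout write $\rho=\sum_{i=1}^{r}p_i\,u_iu_i^{T}$ for the spectral decomposition of the rank-$r$ density matrix, and let $\tilde p_1\ge\tilde p_2\ge\cdots$ be the singular values produced by the algorithm, so that $S(\rho)=\sum_i g(p_i)$ and $\tilde S(\rho)=\sum_i g(\tilde p_i)$.

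\emph{Stage 1: spectral and trace approximation.} Since the JL inequality for the vectors $u_i$ and the normalized differences $(u_i\pm u_j)/\|u_i\pm u_j\|$ simultaneously controls all norms $\|\Pi u_i\|$ and inner products $\langle\Pi u_i,\Pi u_j\rangle$, the matrix $\Pi$ acts as an approximate isometry on $\mathrm{range}(\rho)$. Two quantitative consequences are needed. (i) The squared projected singular values, which are the eigenvalues of $\Pi\rho^{2}\Pi^{T}$ (up to the one‑ versus two‑sided convention of the algorithm), stay within a $(1\pm O(\epsilon))$ multiplicative factor of the eigenvalues $p_i^{2}$ of $\rho^{2}$; because $\tilde p_i,p_i\ge 0$ satisfy $|\tilde p_i-p_i|\le\sqrt{|\tilde p_i^{2}-p_i^{2}|}$, this yields the multiplicative spectral bound $|\tilde p_i-p_i|\le\sqrt{3\epsilon}\,p_i$, and it is exactly this passage from squared to unsquared singular values that turns $\epsilon$ into $\sqrt{\epsilon}$ in the final estimate. (ii) Summing the norm‑preservation statement against the weights $p_i$ gives $\big|\sum_i\tilde p_i-1\big|=\big|\sum_i(\tilde p_i-p_i)\big|\le\sqrt{9/2}\,\epsilon$, i.e.\ the projection nearly preserves the trace. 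The parameters enter as follows: taking the JL distortion equal to $\epsilon\sqrt\delta$, the failure probabilities of Theorems~\ref{Haarrp} and~\ref{2designrp} (a union bound over the $O(r)$ vectors above, for a suitable reduced dimension $k$) sum to at most $\delta$, which is the $1-\delta$ in the statement; the hypotheses $\epsilon\le 1/6$ and $\delta\le 1/2$ keep every distortion factor safely below $1$.

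\emph{Stage 2: from spectrum to entropy.} Write $\tilde S(\rho)-S(\rho)=\sum_i\big(g(\tilde p_i)-g(p_i)\big)$ and expand $g$ to first order, $g(\tilde p_i)-g(p_i)=-(\tilde p_i-p_i)-(\ln p_i)(\tilde p_i-p_i)+\tfrac12 g''(\xi_i)(\tilde p_i-p_i)^{2}$ with $\xi_i$ between $p_i$ and $\tilde p_i$. Summing, the first term contributes at most $\big|\sum_i(\tilde p_i-p_i)\big|\le\sqrt{9/2}\,\epsilon$ by (ii); the second contributes at most $\sqrt{3\epsilon}\sum_i p_i\ln(1/p_i)=\sqrt{3\epsilon}\,S(\rho)$ by (i) together with $p_i\le 1$; and the quadratic remainder, bounded via $|g''(\xi_i)|=1/\xi_i\le 1/((1-\sqrt{3\epsilon})p_i)$ and $(\tilde p_i-p_i)^{2}\le 3\epsilon\,p_i^{2}$, is of order $\epsilon$ and, with $\epsilon\le 1/6$, is accommodated within the stated constants. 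For the eigenvalues so small that this linearization is not usable one falls back on the concavity of $g$ with $g(0)=0$, hence its subadditivity $|g(a)-g(b)|\le g(|a-b|)$, which together with the \emph{multiplicative} bound from (i) keeps that contribution under control without ever evaluating the divergent slope $g'(t)=-1-\ln t$ at small $t$. Collecting the three pieces on the event of probability $\ge 1-\delta$ gives $|\tilde S(\rho)-S(\rho)|\le\sqrt{3\epsilon}\,S(\rho)+\sqrt{9/2}\,\epsilon$.

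\emph{Where the difficulty lies.} The main obstacle is precisely the tail of tiny eigenvalues: $g$ has unbounded derivative and curvature at the origin, so the linear‑plus‑quadratic expansion is legitimate only where $p_i$ is bounded below, and for the rest one needs a genuinely \emph{relative} spectral guarantee $|\tilde p_i-p_i|\lesssim\sqrt\epsilon\,p_i$ rather than an additive one. Securing that relative bound down to the smallest nonzero $p_i$ — applying JL to each normalized eigenvector individually, spending the probability budget on the union bound, and stitching the subadditivity estimate on the minuscule eigenvalues to the Taylor estimate on the rest so that the constants recombine cleanly into $\sqrt{3\epsilon}$ and $\sqrt{9/2}$ rather than something carrying a stray $\ln(1/\epsilon)$ — is the delicate bookkeeping, and it closely follows the randomized von Neumann entropy analysis of Ref.~\cite{kontopoulou2020randomized}.
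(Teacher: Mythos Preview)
Your Stage~1 conclusion is the right one, and the paper reaches it more directly: instead of applying JL vector-by-vector with a union bound, it invokes the result (Woodruff) that a JL matrix with second-moment parameter $\epsilon\sqrt\delta$ satisfies $\|W^{T}\Pi^{T}\Pi W-I\|_{2}\le 3\epsilon$ with probability $\ge 1-\delta$ for any orthonormal $W$, and then an Ostrowski-type relative perturbation lemma gives $|p_i^{2}-\tilde p_i^{2}|\le 3\epsilon\,p_i^{2}$ in one shot.

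Where your argument goes wrong is the passage from squared to unsquared singular values. From $|\tilde p_i^{2}-p_i^{2}|\le 3\epsilon p_i^{2}$ one obtains $(1-3\epsilon)^{1/2}\le\tilde p_i/p_i\le(1+3\epsilon)^{1/2}$, a $(1\pm O(\epsilon))$ window; your estimate $|\tilde p_i-p_i|\le\sqrt{|\tilde p_i^{2}-p_i^{2}|}\le\sqrt{3\epsilon}\,p_i$ is valid but discards a factor of $\sqrt\epsilon$. The $\sqrt{3\epsilon}$ in the theorem is \emph{not} produced by square-rooting the spectral bound; it enters only at the last line through the deliberately crude inequality $(1+3\epsilon)^{1/2}\le 1+\sqrt{3\epsilon}$, while the additive $\sqrt{9/2}\,\epsilon$ comes from $\tfrac12\sqrt{1+3\epsilon}\,\ln\frac{1}{1-3\epsilon}$, which is genuinely $O(\epsilon)$. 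Consequently your Taylor decomposition, fed with the weakened bound~(i), gives $\big|\sum_i(\tilde p_i-p_i)\big|\le\sqrt{3\epsilon}$ for the ``first term'', not the $\sqrt{9/2}\,\epsilon$ you assert in~(ii); and the alternative justification you sketch for~(ii) (``summing the norm-preservation statement against the weights $p_i$'') does not control $\sum_i\tilde p_i$, since the $\tilde p_i$ are singular values of $\rho\Pi$, not simply $p_i\|\Pi u_i\|$.

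The paper's Stage~2 sidesteps all of this by plugging the two-sided bound on $\tilde p_i$ directly into the entropy term, $\tilde p_i\ln(1/\tilde p_i)\le (1+3\epsilon)^{1/2}p_i\ln\bigl(1/((1-3\epsilon)^{1/2}p_i)\bigr)$, summing over $i$, and finishing with two elementary scalar inequalities. Because the spectral control is multiplicative and uniform down to the smallest nonzero $p_i$, no Taylor expansion, no small-eigenvalue splitting, and no subadditivity argument are needed.
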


 \begin{proof}{}
See Appendix \ref{Proof3}
\end{proof}

\begin{figure}[htbp]
  \centering
  \includegraphics[width=0.48\textwidth]{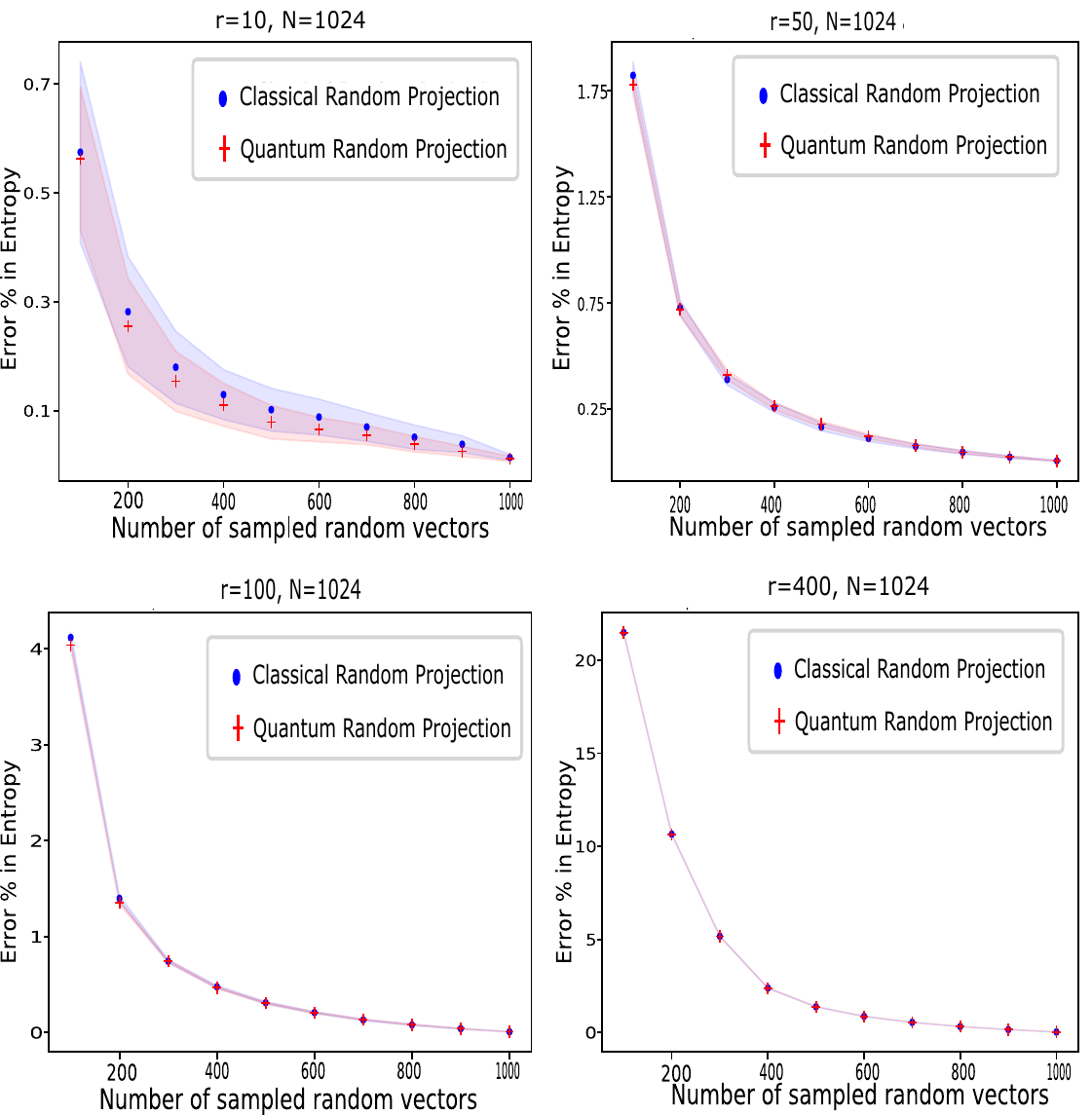}
  \caption{The plots in this figure show the accuracies of quantum random projection and classical random projection in the entropy computation of randomly generated density matrices of size N=1024 and ranks $r=10,50,100,400$ with a linearly decaying singular value profile. The envelopes represent their 90 percent confidence intervals by running the experiments over 100 randomly generated density matrices. The accuracies improve with  decrease in the ranks as expected. }
  \label{fig: linear profile10}
\end{figure}

\begin{figure}[htbp]
  \centering
  \includegraphics[width=0.5\textwidth]{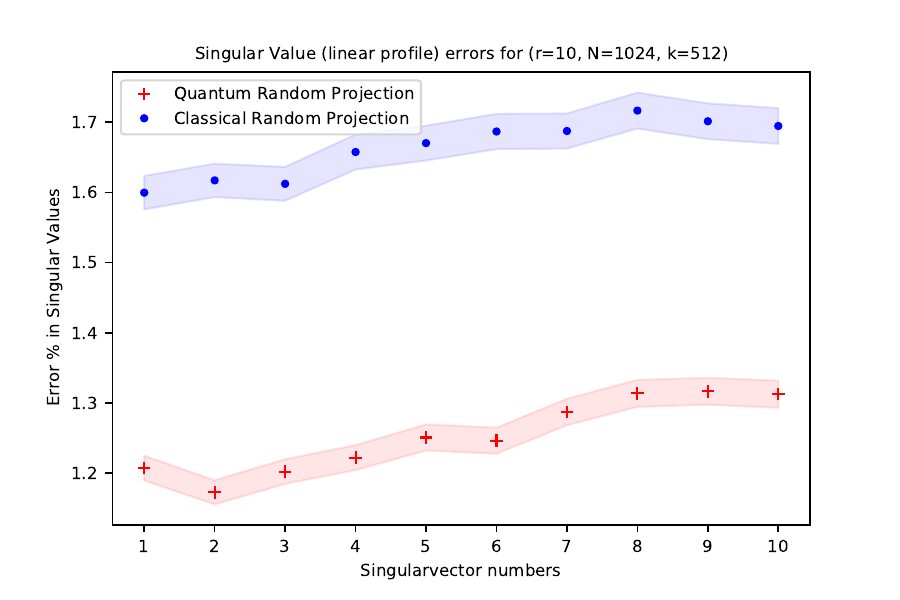}
  \caption{The plot shows the accuracy with which the quantum random projector and the classical random projector pick the singular values of the density matrix for $r = 10$ when reducing the system size by half. The envelope represents 95 \% confidence intervals by running the experiments over 10,000 randomly generated density matrices }
  \label{fig: Individual eigenvalues linear 10}
\end{figure}

The Fig.\ref{fig: linear profile10}  shows the error percentage in the computed entropy after random projection for density matrices of size $1024\times 1024$  with linearly decaying singular values until a certain rank ($r = 10, 50, 100, 40$ in the plot) and zero afterward. The x-axis represents the different reduced dimensions ($k$). The accuracies are better for low ranks as expected and get worse for larger ranks. We observe that the quantum random Projector and the classical random projector perform to similar extents (if not a better quantum performance than classical performance for density matrices 
 with very low rank) in this task. This matches the trends reported in \cite{kontopoulou2020randomized} where they reported similar performance for various other classical random projection matrices like Gaussian, SRHT, and IST. We also show the accuracies with which the quantum and classical random projectors capture the singular values of the system for the rank $r=10$ when the system's size has been reduced by half in figure \ref{fig: Individual eigenvalues linear 10}. Here, we see that the quantum random projectors perform better than their classical counterpart mainly because Haar random matrices that the random circuits try to approximate is more random than any classical random projectors that could be stored with similar or comparable complexity. The appendix contains a discussion regarding how the accuracy improves when we increase the size of the original datasets from 1024 $\times$ 1024 to 2048 $\times$ 2048.\\
 
 We discuss the same plots for density matrices with exponentially decaying singular value profile until a certain rank in the Appendix  \ref{EXP}. We observe there that increasing the rank doesn't change the singular value profile as much and hence the accuracy with which the random projection algorithms work remains pretty much constant. The appendix \ref{NON EXACT} contains the error plots for the accuracy in individual singular values for the case $r=10$ obtained using lesser expressive random circuits (depth $\sim 50$).

\section{How to project in a real Quantum Computer?}

For the quantum random projection to work, in addition to sampling a unitary from the exact (or approximate 2 designs), we also need to have a circuit component for projector operators. In one of the previous sections, we considered arbitrary projection operators which might not be able to be efficiently implemented in a quantum computer with polynomial resources. However, we can look at the simplest projection operations that one can use for the quantum random projection. In Fig.\ref{fig:QRP}, we looked at the simplest projection operation, which is measuring some of the qubits and proceeding only if the qubits are in a certain state ($\ket{0}$ or $\ket{1}$). This is equivalent to projecting it to the subspace where those qubits take that specific value. For example, when you have a circuit of 10 qubits, measuring one of the qubits and proceeding only when that qubit is in $\ket{0}$ means a reduction in the data vector (or ket) dimension from 1024 to 512.\\

However, the projection through measurement discussed above is different compared to the classical projection because a quantum measurement (wavefunction collapse) automatically takes care of the normalization factor and the extra $\sqrt{\frac{N}{k}}$ is not needed. Since the Hilbert space we consider here is large and the ket entries are randomized, the normalization that happens because of the wavefunction is the same as the prefactor we would get in a classical random projection.\\

To demonstrate the quantum random projection with a simple projection operation, we will consider projection operators of the form $\frac{1}{2}(1+\sigma_z^i)$ which is a projection operator to the space where $i^{th}$ qubit is at $\ket{0}$ state.To demonstrate this, we perform such a quantum random projection for a large data matrix with a linearly decaying singular values profile of size $1024\times 1024$ and rank $r =5$ by reducing the data vectors to sizes 512, 256 and 128 by projecting out 1, 2 and 3 qubits respectively. Then, we retrieve the dominant singular vectors by performing a variational quantum singular value decomposition (VQSVD)\cite{Wang2021variationalquantum}. But since the data matrix has been dimensionally reduced, the ansatz we use for finding the right singular vectors is also of reduced size (Fig.\ref{fig: V QSVD}). The details regarding the implementation of VQSVD, and the ansatz type used can be found in the Appendix \ref{VQSVD App}.\\
\begin{figure}[htbp]
  \centering
  \includegraphics[width=0.48\textwidth]{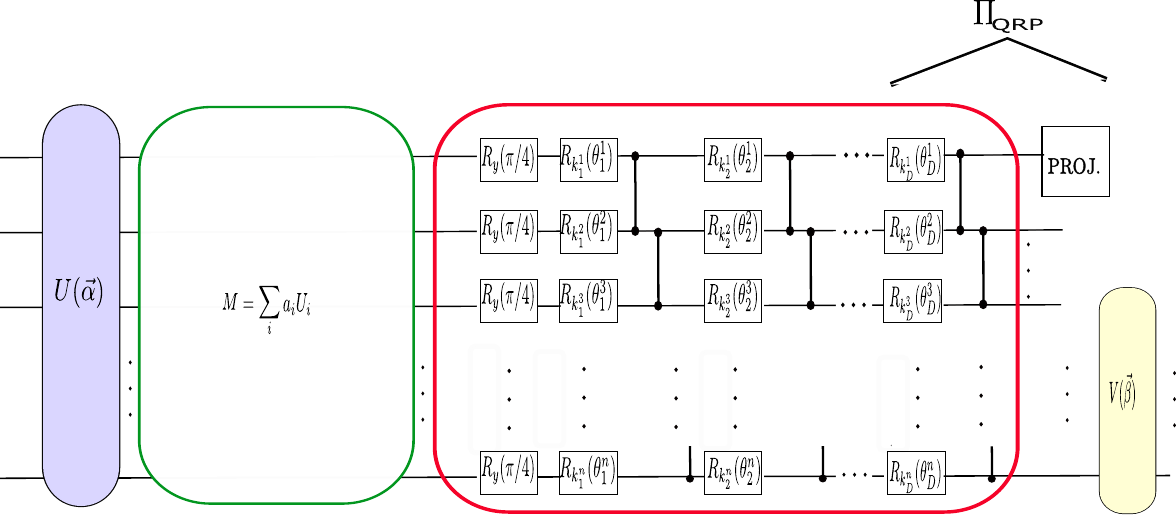}
  \caption{The figure shows the schematic of the variational quantum SVD post quantum random projection to lower dimensions. The data matrix $M$ needs to be loaded using a set of unitary gates with techniques like importance sampling(see related discussion in the appendix of \cite{Wang2021variationalquantum}). Similar to the setup in Fig.\ref{fig:QRP}, we perform projection by measuring a few qubits at the top. This is followed by a training procedure to obtain the dominant singular vectors and their singular values. The singular vectors on the right end belong to the lower dimensional space and hence require a lower dimensional ansatz.}
  \label{fig: V QSVD}
\end{figure}
\begin{figure}[htbp]
  \centering
  \includegraphics[width=0.5\textwidth]{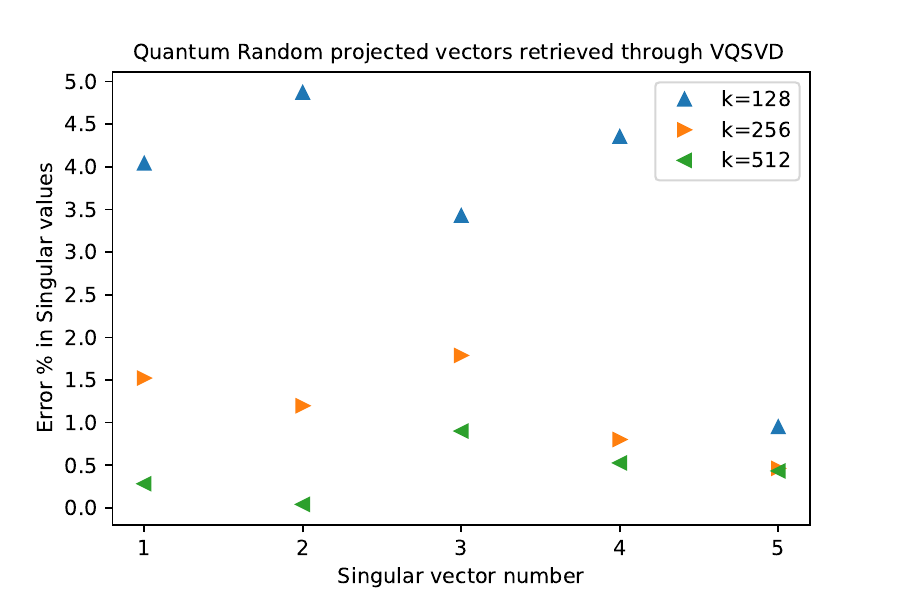}
  \caption{The figure shows the errors in the singular values obtained by reconstructing the dominant singular vectors post quantum random projecting a randomly generated data matrix of rank $r=5$ using variational quantum SVD for various reduced dimensions $k = 512, 256, 128$ .}
  \label{fig: V QSVD results}
\end{figure}

The Fig.\ref{fig: V QSVD results} shows the accuracy with which we were able to reconstruct the singular vectors after quantum random projection for individual singular vectors. This demonstrates how one can perform quantum random projection in near-term devices as the VQSVD algorithm used to retrieve the dominant vectors is a near-term algorithm. The accuracy with which the singular values have been retrieved depends on the expressivity of the ansatz and whether or not it falls into a barren plateau during the training procedure. We haven't discussed the most accurate retrieval of the singular vectors, as that is beyond the scope of this paper. There are many strategies to avoid falling into the barren plateau and improve the convergence rate \cite{Cerezo_2021,Cerezo_2022,Grant2019initialization}. We used the identity block strategy \cite{Grant2019initialization} to avoid barren plateaus (more details on that in the appendix \ref{VQSVD App}.)

\section{Conclusion}\label{concluding_section}
In this work, we explored a practically useful application of local random quantum circuits in the task of dimensional reduction of large low-rank data sets. The main essence of the applicability of the local random circuits in this task is their ability to anticoncentrate rapidly at linear or sub-linear depths \cite{Hangleiter_2018,Dalzell_2022}. This makes them a good random projector to lower dimensions, meaning they preserve the distinctness of different dominant data vectors in a large dataset after dimensional reduction.\\

The theorems discussed in the paper show that just like the Haar random matrices which are good random projectors, their approximate quantum implementations, the exact and approximate t-designs are also good random projectors. The rapid anti-concentration of Hilbert space at linear depths means that the number of random parameters(the random rotation parameters) required to create and reproduce a random projector is logarithmic in the size of the data sets. Such efficiency in the storage complexity of classically generated Haar random matrices or in any classical random projector is not possible. We then, benchmarked its performance against the commonly used classical random projector, Subsampled Randomised Hadamard Transform (SRHT). The quantum random projectors performed slightly better than this classical candidate because they are trying to approximate Haar random matrices which are more random than the classical candidate. We then demonstrated these comparisons for various tasks such as image compression, reconstruction, retrieving the singular values of the dominant singular vectors post-dimension reduction, etc.\\

Though the initial discussion assumed arbitrary projection operators to arbitrary subspaces, we showed simplest projection operators and projection subspaces exist. We demonstrated this simplest quantum random projection and retrieved the dominant singular vectors post-quantum random projection via VQSVD \cite{Wang2021variationalquantum}. This shows the applicability of such quantum random projections and their retrievals in near-term devices. 

{\color{black} Dimensionality reduction facilitated by random projections as discussed in this work can also precede kernel-based variants of PCA wherein eigenvalue decomposition of the Gram matrix associated with the higher-dimensional embedding (often called kernel) is sought \cite{wang2012kernel} especially if the said Gram matrix is low-rank. Beyond the precincts of classical data, such a technique can act as an effective precursor to improve the efficiency of simulation even on quantum data as has been studied in recent work \cite{app13053172}. The essential crux of the idea is heavily rooted on PCA but applied to quantum data wherein repeated Schmidt decomposition of the states and vectorized form of arbitrary operators are performed followed by subsequent removal of singular vectors associated with non-dominant singular values akin to PCA. The techniques explored in this work involving good random projections can be used in conjunction prior to the application of such a protocol to contract the effective space of the states/operators involved. Owing to the demonstrated near-term applicability, similar reduction can also be afforded 
as a preprocessing step in a host of quantum algorithms manipulating quantum data \cite{D2CS00203E} on noisy hardwares. Such protocols are of active interest to the scientific community due to their profound physico-chemical applications ranging from exotic condensed-matter physical systems like Rydberg excitonic arrays\cite{sajjan2022magnetic}, modeling higher dimensional spin-graphical architectures in quantum gravity \cite{borissov1996geometry} and in learning theory of neural networks\cite{sajjan2023imaginary}, constructing unknown hamiltonians through time-series analysis \cite{gupta2023hamiltonian, yu2023robust}, tomographic estimation of quantum state \cite{gupta2022variational,gupta2021maximal},  in electronic structure of molecules and periodic materials \cite{sajjan2021quantum}, quantum preparation of low energy states of desired symmetry \cite{sajjan2021quantum, selvarajan2022variational} or even order-disorder transitions in conventional Ising spin glass using quantum annealers \cite{PhysRevE.105.035305} and quantum variants of Sherrington-Kirkpatrick model \cite{schindler2022variational} to name a few.}\\

We did an extensive comparison using deep ($\sim 150$) exact 2-design ansatz and deferred the discussion about circuits away from the exact 2-design limit to the appendix \ref{NON EXACT}. This is because there exist various random circuit architectures which anti concentrate just like the exact 2-design ansatz and hence could be good candidates for random projection. This could be a good starting point for future study. Also, it is worth studying and constructing quantum random projectors suited for specific applications and datasets( for example, the datasets in health care \cite{Mirniaharikandehei2020ApplyingAR,Heidari2020ApplyingAR}). It has to be noted that the results derived in the main text assumed noiseless quantum gates and measurements. Similar theorems need to be understood for real quantum computers where different noise sources are unavoidable. This leads to a possible future study to understand the extent to which the theorems in the main text are valid on real quantum computers by performing statistical analysis on the bitstrings from the output of real quantum computers (See Refs. \cite{PhysRevA.107.022610,Oh_2022}).

\section{Code availability}
The classical and the quantum random projection matrix(and the rotation parameters used to generate the circuit) used for the comparisons, the data matrix used to generate Fig. \ref{fig: V QSVD results} along with the code for generating the plots in this paper will be made available upon reasonable request. The simulation for the retrieval of dominant singular vectors through VQSVD was done in the Paddle quantum framework \cite{Paddlequantum}.

%%%%%%%%%%%%%%%%%%%%%%%%
%%% ACKNOWLEDGEMENTS %%%
%%%%%%%%%%%%%%%%%%%%%%%%
\section{ACKNOWLEDGEMENTS}

We would like to acknowledge funding from the Office of Science through the Quantum Science Center (QSC), a National Quantum Information Science Research Center,  the U.S. Department of Energy (DOE) (Office of Basic Energy Sciences) under award no. DE-SC0019215, and the National Science Foundation under award number 1955907.

%%%%%%%%%%%%%%%%%%%%
%%% BIBLIOGRAPHY %%%
%%%%%%%%%%%%%%%%%%%%
%\bibliography{ref}

\begin{thebibliography}{74}%
\makeatletter
\providecommand \@ifxundefined [1]{%
 \@ifx{#1\undefined}
}%
\providecommand \@ifnum [1]{%
 \ifnum #1\expandafter \@firstoftwo
 \else \expandafter \@secondoftwo
 \fi
}%
\providecommand \@ifx [1]{%
 \ifx #1\expandafter \@firstoftwo
 \else \expandafter \@secondoftwo
 \fi
}%
\providecommand \natexlab [1]{#1}%
\providecommand \enquote  [1]{``#1''}%
\providecommand \bibnamefont  [1]{#1}%
\providecommand \bibfnamefont [1]{#1}%
\providecommand \citenamefont [1]{#1}%
\providecommand \href@noop [0]{\@secondoftwo}%
\providecommand \href [0]{\begingroup \@sanitize@url \@href}%
\providecommand \@href[1]{\@@startlink{#1}\@@href}%
\providecommand \@@href[1]{\endgroup#1\@@endlink}%
\providecommand \@sanitize@url [0]{\catcode `\\12\catcode `\$12\catcode `\&12\catcode `\#12\catcode `\^12\catcode `\_12\catcode `\%12\relax}%
\providecommand \@@startlink[1]{}%
\providecommand \@@endlink[0]{}%
\providecommand \url  [0]{\begingroup\@sanitize@url \@url }%
\providecommand \@url [1]{\endgroup\@href {#1}{\urlprefix }}%
\providecommand \urlprefix  [0]{URL }%
\providecommand \Eprint [0]{\href }%
\providecommand \doibase [0]{https://doi.org/}%
\providecommand \selectlanguage [0]{\@gobble}%
\providecommand \bibinfo  [0]{\@secondoftwo}%
\providecommand \bibfield  [0]{\@secondoftwo}%
\providecommand \translation [1]{[#1]}%
\providecommand \BibitemOpen [0]{}%
\providecommand \bibitemStop [0]{}%
\providecommand \bibitemNoStop [0]{.\EOS\space}%
\providecommand \EOS [0]{\spacefactor3000\relax}%
\providecommand \BibitemShut  [1]{\csname bibitem#1\endcsname}%
\let\auto@bib@innerbib\@empty
%</preamble>
\bibitem [{\citenamefont {Velliangiri}\ \emph {et~al.}(2019)\citenamefont {Velliangiri}, \citenamefont {Alagumuthukrishnan},\ and\ \citenamefont {{Thankumar joseph}}}]{VELLIANGIRI2019104}%
  \BibitemOpen
  \bibfield  {author} {\bibinfo {author} {\bibfnamefont {S.}~\bibnamefont {Velliangiri}}, \bibinfo {author} {\bibfnamefont {S.}~\bibnamefont {Alagumuthukrishnan}},\ and\ \bibinfo {author} {\bibfnamefont {S.~I.}\ \bibnamefont {{Thankumar joseph}}},\ }\bibfield  {title} {\bibinfo {title} {A review of dimensionality reduction techniques for efficient computation},\ }\href {https://doi.org/https://doi.org/10.1016/j.procs.2020.01.079} {\bibfield  {journal} {\bibinfo  {journal} {Procedia Computer Science}\ }\textbf {\bibinfo {volume} {165}},\ \bibinfo {pages} {104} (\bibinfo {year} {2019})},\ \bibinfo {note} {2nd International Conference on Recent Trends in Advanced Computing ICRTAC -DISRUP - TIV INNOVATION , 2019 November 11-12, 2019}\BibitemShut {NoStop}%
\bibitem [{\citenamefont {Salem}\ and\ \citenamefont {Hussein}(2019)}]{SALEM2019292}%
  \BibitemOpen
  \bibfield  {author} {\bibinfo {author} {\bibfnamefont {N.}~\bibnamefont {Salem}}\ and\ \bibinfo {author} {\bibfnamefont {S.}~\bibnamefont {Hussein}},\ }\bibfield  {title} {\bibinfo {title} {Data dimensional reduction and principal components analysis},\ }\href {https://doi.org/https://doi.org/10.1016/j.procs.2019.12.111} {\bibfield  {journal} {\bibinfo  {journal} {Procedia Computer Science}\ }\textbf {\bibinfo {volume} {163}},\ \bibinfo {pages} {292} (\bibinfo {year} {2019})},\ \bibinfo {note} {16th Learning and Technology Conference 2019Artificial Intelligence and Machine Learning: Embedding the Intelligence}\BibitemShut {NoStop}%
\bibitem [{\citenamefont {Lloyd}\ \emph {et~al.}(2014)\citenamefont {Lloyd}, \citenamefont {Mohseni},\ and\ \citenamefont {Rebentrost}}]{Lloyd_2014}%
  \BibitemOpen
  \bibfield  {author} {\bibinfo {author} {\bibfnamefont {S.}~\bibnamefont {Lloyd}}, \bibinfo {author} {\bibfnamefont {M.}~\bibnamefont {Mohseni}},\ and\ \bibinfo {author} {\bibfnamefont {P.}~\bibnamefont {Rebentrost}},\ }\bibfield  {title} {\bibinfo {title} {Quantum principal component analysis},\ }\href {https://doi.org/10.1038/nphys3029} {\bibfield  {journal} {\bibinfo  {journal} {Nature Physics}\ }\textbf {\bibinfo {volume} {10}},\ \bibinfo {pages} {631} (\bibinfo {year} {2014})}\BibitemShut {NoStop}%
\bibitem [{\citenamefont {Rebentrost}\ \emph {et~al.}(2018)\citenamefont {Rebentrost}, \citenamefont {Steffens}, \citenamefont {Marvian},\ and\ \citenamefont {Lloyd}}]{PhysRevA.97.012327}%
  \BibitemOpen
  \bibfield  {author} {\bibinfo {author} {\bibfnamefont {P.}~\bibnamefont {Rebentrost}}, \bibinfo {author} {\bibfnamefont {A.}~\bibnamefont {Steffens}}, \bibinfo {author} {\bibfnamefont {I.}~\bibnamefont {Marvian}},\ and\ \bibinfo {author} {\bibfnamefont {S.}~\bibnamefont {Lloyd}},\ }\bibfield  {title} {\bibinfo {title} {Quantum singular-value decomposition of nonsparse low-rank matrices},\ }\href {https://doi.org/10.1103/PhysRevA.97.012327} {\bibfield  {journal} {\bibinfo  {journal} {Phys. Rev. A}\ }\textbf {\bibinfo {volume} {97}},\ \bibinfo {pages} {012327} (\bibinfo {year} {2018})}\BibitemShut {NoStop}%
\bibitem [{\citenamefont {Kerenidis}\ and\ \citenamefont {Prakash}(2016)}]{kerenidis2016quantum}%
  \BibitemOpen
  \bibfield  {author} {\bibinfo {author} {\bibfnamefont {I.}~\bibnamefont {Kerenidis}}\ and\ \bibinfo {author} {\bibfnamefont {A.}~\bibnamefont {Prakash}},\ }\href@noop {} {\bibinfo {title} {Quantum recommendation systems}} (\bibinfo {year} {2016}),\ \Eprint {https://arxiv.org/abs/1603.08675} {arXiv:1603.08675 [quant-ph]} \BibitemShut {NoStop}%
\bibitem [{\citenamefont {Wang}\ \emph {et~al.}(2021)\citenamefont {Wang}, \citenamefont {Song},\ and\ \citenamefont {Wang}}]{Wang2021variationalquantum}%
  \BibitemOpen
  \bibfield  {author} {\bibinfo {author} {\bibfnamefont {X.}~\bibnamefont {Wang}}, \bibinfo {author} {\bibfnamefont {Z.}~\bibnamefont {Song}},\ and\ \bibinfo {author} {\bibfnamefont {Y.}~\bibnamefont {Wang}},\ }\bibfield  {title} {\bibinfo {title} {Variational {Q}uantum {S}ingular {V}alue {D}ecomposition},\ }\href {https://doi.org/10.22331/q-2021-06-29-483} {\bibfield  {journal} {\bibinfo  {journal} {{Quantum}}\ }\textbf {\bibinfo {volume} {5}},\ \bibinfo {pages} {483} (\bibinfo {year} {2021})}\BibitemShut {NoStop}%
\bibitem [{\citenamefont {Wossnig}\ \emph {et~al.}(2018)\citenamefont {Wossnig}, \citenamefont {Zhao},\ and\ \citenamefont {Prakash}}]{PhysRevLett.120.050502}%
  \BibitemOpen
  \bibfield  {author} {\bibinfo {author} {\bibfnamefont {L.}~\bibnamefont {Wossnig}}, \bibinfo {author} {\bibfnamefont {Z.}~\bibnamefont {Zhao}},\ and\ \bibinfo {author} {\bibfnamefont {A.}~\bibnamefont {Prakash}},\ }\bibfield  {title} {\bibinfo {title} {Quantum linear system algorithm for dense matrices},\ }\href {https://doi.org/10.1103/PhysRevLett.120.050502} {\bibfield  {journal} {\bibinfo  {journal} {Phys. Rev. Lett.}\ }\textbf {\bibinfo {volume} {120}},\ \bibinfo {pages} {050502} (\bibinfo {year} {2018})}\BibitemShut {NoStop}%
\bibitem [{\citenamefont {Preskill}(2018)}]{Preskill2018quantumcomputingin}%
  \BibitemOpen
  \bibfield  {author} {\bibinfo {author} {\bibfnamefont {J.}~\bibnamefont {Preskill}},\ }\bibfield  {title} {\bibinfo {title} {Quantum {C}omputing in the {NISQ} era and beyond},\ }\href {https://doi.org/10.22331/q-2018-08-06-79} {\bibfield  {journal} {\bibinfo  {journal} {{Quantum}}\ }\textbf {\bibinfo {volume} {2}},\ \bibinfo {pages} {79} (\bibinfo {year} {2018})}\BibitemShut {NoStop}%
\bibitem [{\citenamefont {Drineas}\ and\ \citenamefont {Mahoney}(2016)}]{10.1145/2842602}%
  \BibitemOpen
  \bibfield  {author} {\bibinfo {author} {\bibfnamefont {P.}~\bibnamefont {Drineas}}\ and\ \bibinfo {author} {\bibfnamefont {M.~W.}\ \bibnamefont {Mahoney}},\ }\bibfield  {title} {\bibinfo {title} {Randnla: Randomized numerical linear algebra},\ }\href {https://doi.org/10.1145/2842602} {\bibfield  {journal} {\bibinfo  {journal} {Commun. ACM}\ }\textbf {\bibinfo {volume} {59}},\ \bibinfo {pages} {80–90} (\bibinfo {year} {2016})}\BibitemShut {NoStop}%
\bibitem [{\citenamefont {Achlioptas}(2003)}]{ACHLIOPTAS2003671}%
  \BibitemOpen
  \bibfield  {author} {\bibinfo {author} {\bibfnamefont {D.}~\bibnamefont {Achlioptas}},\ }\bibfield  {title} {\bibinfo {title} {Database-friendly random projections: Johnson-lindenstrauss with binary coins},\ }\href {https://doi.org/https://doi.org/10.1016/S0022-0000(03)00025-4} {\bibfield  {journal} {\bibinfo  {journal} {Journal of Computer and System Sciences}\ }\textbf {\bibinfo {volume} {66}},\ \bibinfo {pages} {671} (\bibinfo {year} {2003})},\ \bibinfo {note} {special Issue on PODS 2001}\BibitemShut {NoStop}%
\bibitem [{\citenamefont {Bingham}\ and\ \citenamefont {Mannila}(2001)}]{image}%
  \BibitemOpen
  \bibfield  {author} {\bibinfo {author} {\bibfnamefont {E.}~\bibnamefont {Bingham}}\ and\ \bibinfo {author} {\bibfnamefont {H.}~\bibnamefont {Mannila}},\ }\href {https://cs-people.bu.edu/evimaria/cs565/kdd-rp.pdf} {\bibinfo {title} {Random projection in dimensionality reduction: Applications to image and textdata}} (\bibinfo {year} {2001})\BibitemShut {NoStop}%
\bibitem [{\citenamefont {Johnson}\ and\ \citenamefont {Lindenstrauss}(1984)}]{JL}%
  \BibitemOpen
  \bibfield  {author} {\bibinfo {author} {\bibfnamefont {W.}~\bibnamefont {Johnson}}\ and\ \bibinfo {author} {\bibfnamefont {J.}~\bibnamefont {Lindenstrauss}},\ }\href@noop {} {\bibinfo {title} {Extensions of lipschitz mappings into a hilbert space.}} (\bibinfo {year} {1984})\BibitemShut {NoStop}%
\bibitem [{\citenamefont {Paul}\ \emph {et~al.}(2014)\citenamefont {Paul}, \citenamefont {Boutsidis}, \citenamefont {Magdon-Ismail},\ and\ \citenamefont {Drineas}}]{paul2014random}%
  \BibitemOpen
  \bibfield  {author} {\bibinfo {author} {\bibfnamefont {S.}~\bibnamefont {Paul}}, \bibinfo {author} {\bibfnamefont {C.}~\bibnamefont {Boutsidis}}, \bibinfo {author} {\bibfnamefont {M.}~\bibnamefont {Magdon-Ismail}},\ and\ \bibinfo {author} {\bibfnamefont {P.}~\bibnamefont {Drineas}},\ }\href@noop {} {\bibinfo {title} {Random projections for linear support vector machines}} (\bibinfo {year} {2014}),\ \Eprint {https://arxiv.org/abs/1211.6085} {arXiv:1211.6085 [cs.LG]} \BibitemShut {NoStop}%
\bibitem [{\citenamefont {Alman}\ and\ \citenamefont {Williams}(2020)}]{alman2020refined}%
  \BibitemOpen
  \bibfield  {author} {\bibinfo {author} {\bibfnamefont {J.}~\bibnamefont {Alman}}\ and\ \bibinfo {author} {\bibfnamefont {V.~V.}\ \bibnamefont {Williams}},\ }\href@noop {} {\bibinfo {title} {A refined laser method and faster matrix multiplication}} (\bibinfo {year} {2020}),\ \Eprint {https://arxiv.org/abs/2010.05846} {arXiv:2010.05846 [cs.DS]} \BibitemShut {NoStop}%
\bibitem [{\citenamefont {Ailon}\ and\ \citenamefont {Chazelle}(2009)}]{doi:10.1137/060673096}%
  \BibitemOpen
  \bibfield  {author} {\bibinfo {author} {\bibfnamefont {N.}~\bibnamefont {Ailon}}\ and\ \bibinfo {author} {\bibfnamefont {B.}~\bibnamefont {Chazelle}},\ }\bibfield  {title} {\bibinfo {title} {The fast johnson–lindenstrauss transform and approximate nearest neighbors},\ }\href {https://doi.org/10.1137/060673096} {\bibfield  {journal} {\bibinfo  {journal} {SIAM Journal on Computing}\ }\textbf {\bibinfo {volume} {39}},\ \bibinfo {pages} {302} (\bibinfo {year} {2009})},\ \Eprint {https://arxiv.org/abs/https://doi.org/10.1137/060673096} {https://doi.org/10.1137/060673096} \BibitemShut {NoStop}%
\bibitem [{\citenamefont {Knill}(1995)}]{knill1995approximation}%
  \BibitemOpen
  \bibfield  {author} {\bibinfo {author} {\bibfnamefont {E.}~\bibnamefont {Knill}},\ }\href@noop {} {\bibinfo {title} {Approximation by quantum circuits}} (\bibinfo {year} {1995}),\ \Eprint {https://arxiv.org/abs/quant-ph/9508006} {arXiv:quant-ph/9508006 [quant-ph]} \BibitemShut {NoStop}%
\bibitem [{\citenamefont {Dankert}\ \emph {et~al.}(2009)\citenamefont {Dankert}, \citenamefont {Cleve}, \citenamefont {Emerson},\ and\ \citenamefont {Livine}}]{PhysRevA.80.012304}%
  \BibitemOpen
  \bibfield  {author} {\bibinfo {author} {\bibfnamefont {C.}~\bibnamefont {Dankert}}, \bibinfo {author} {\bibfnamefont {R.}~\bibnamefont {Cleve}}, \bibinfo {author} {\bibfnamefont {J.}~\bibnamefont {Emerson}},\ and\ \bibinfo {author} {\bibfnamefont {E.}~\bibnamefont {Livine}},\ }\bibfield  {title} {\bibinfo {title} {Exact and approximate unitary 2-designs and their application to fidelity estimation},\ }\href {https://doi.org/10.1103/PhysRevA.80.012304} {\bibfield  {journal} {\bibinfo  {journal} {Phys. Rev. A}\ }\textbf {\bibinfo {volume} {80}},\ \bibinfo {pages} {012304} (\bibinfo {year} {2009})}\BibitemShut {NoStop}%
\bibitem [{\citenamefont {Dankert}(2005)}]{dankert2005efficient}%
  \BibitemOpen
  \bibfield  {author} {\bibinfo {author} {\bibfnamefont {C.}~\bibnamefont {Dankert}},\ }\href@noop {} {\bibinfo {title} {Efficient simulation of random quantum states and operators}} (\bibinfo {year} {2005}),\ \Eprint {https://arxiv.org/abs/quant-ph/0512217} {arXiv:quant-ph/0512217 [quant-ph]} \BibitemShut {NoStop}%
\bibitem [{\citenamefont {Ambainis}\ and\ \citenamefont {Emerson}(2007)}]{4262758}%
  \BibitemOpen
  \bibfield  {author} {\bibinfo {author} {\bibfnamefont {A.}~\bibnamefont {Ambainis}}\ and\ \bibinfo {author} {\bibfnamefont {J.}~\bibnamefont {Emerson}},\ }\bibfield  {title} {\bibinfo {title} {Quantum t-designs: t-wise independence in the quantum world},\ }in\ \href {https://doi.org/10.1109/CCC.2007.26} {\emph {\bibinfo {booktitle} {Twenty-Second Annual IEEE Conference on Computational Complexity (CCC'07)}}}\ (\bibinfo {year} {2007})\ pp.\ \bibinfo {pages} {129--140}\BibitemShut {NoStop}%
\bibitem [{\citenamefont {Harrow}\ and\ \citenamefont {Low}(2009)}]{Harrow_2009}%
  \BibitemOpen
  \bibfield  {author} {\bibinfo {author} {\bibfnamefont {A.~W.}\ \bibnamefont {Harrow}}\ and\ \bibinfo {author} {\bibfnamefont {R.~A.}\ \bibnamefont {Low}},\ }\bibfield  {title} {\bibinfo {title} {Random quantum circuits are approximate 2-designs},\ }\href {https://doi.org/10.1007/s00220-009-0873-6} {\bibfield  {journal} {\bibinfo  {journal} {Communications in Mathematical Physics}\ }\textbf {\bibinfo {volume} {291}},\ \bibinfo {pages} {257} (\bibinfo {year} {2009})}\BibitemShut {NoStop}%
\bibitem [{\citenamefont {Brand{\~{a}}o}\ \emph {et~al.}(2016)\citenamefont {Brand{\~{a}}o}, \citenamefont {Harrow},\ and\ \citenamefont {Horodecki}}]{Brand_o_2016}%
  \BibitemOpen
  \bibfield  {author} {\bibinfo {author} {\bibfnamefont {F.~G. S.~L.}\ \bibnamefont {Brand{\~{a}}o}}, \bibinfo {author} {\bibfnamefont {A.~W.}\ \bibnamefont {Harrow}},\ and\ \bibinfo {author} {\bibfnamefont {M.}~\bibnamefont {Horodecki}},\ }\bibfield  {title} {\bibinfo {title} {Local random quantum circuits are approximate polynomial-designs},\ }\href {https://doi.org/10.1007/s00220-016-2706-8} {\bibfield  {journal} {\bibinfo  {journal} {Communications in Mathematical Physics}\ }\textbf {\bibinfo {volume} {346}},\ \bibinfo {pages} {397} (\bibinfo {year} {2016})}\BibitemShut {NoStop}%
\bibitem [{\citenamefont {Haferkamp}(2022)}]{Haferkamp2022randomquantum}%
  \BibitemOpen
  \bibfield  {author} {\bibinfo {author} {\bibfnamefont {J.}~\bibnamefont {Haferkamp}},\ }\bibfield  {title} {\bibinfo {title} {Random quantum circuits are approximate unitary {$t$}-designs in depth {$O\left(nt^{5+o(1)}\right)$}},\ }\href {https://doi.org/10.22331/q-2022-09-08-795} {\bibfield  {journal} {\bibinfo  {journal} {{Quantum}}\ }\textbf {\bibinfo {volume} {6}},\ \bibinfo {pages} {795} (\bibinfo {year} {2022})}\BibitemShut {NoStop}%
\bibitem [{\citenamefont {Sen}(2018)}]{sen2018quantum}%
  \BibitemOpen
  \bibfield  {author} {\bibinfo {author} {\bibfnamefont {P.}~\bibnamefont {Sen}},\ }\href@noop {} {\bibinfo {title} {A quantum johnson-lindenstrauss lemma via unitary t-designs}} (\bibinfo {year} {2018}),\ \Eprint {https://arxiv.org/abs/1807.08779} {arXiv:1807.08779 [quant-ph]} \BibitemShut {NoStop}%
\bibitem [{\citenamefont {Deng}(2012)}]{deng2012mnist}%
  \BibitemOpen
  \bibfield  {author} {\bibinfo {author} {\bibfnamefont {L.}~\bibnamefont {Deng}},\ }\bibfield  {title} {\bibinfo {title} {The mnist database of handwritten digit images for machine learning research},\ }\href@noop {} {\bibfield  {journal} {\bibinfo  {journal} {IEEE Signal Processing Magazine}\ }\textbf {\bibinfo {volume} {29}},\ \bibinfo {pages} {141} (\bibinfo {year} {2012})}\BibitemShut {NoStop}%
\bibitem [{\citenamefont {Krizhevsky}(2009)}]{Krizhevsky09learningmultiple}%
  \BibitemOpen
  \bibfield  {author} {\bibinfo {author} {\bibfnamefont {A.}~\bibnamefont {Krizhevsky}},\ }\href@noop {} {\emph {\bibinfo {title} {Learning multiple layers of features from tiny images}}},\ \bibinfo {type} {Tech. Rep.}\ (\bibinfo {year} {2009})\BibitemShut {NoStop}%
\bibitem [{\citenamefont {Kontopoulou}\ \emph {et~al.}(2020)\citenamefont {Kontopoulou}, \citenamefont {Dexter}, \citenamefont {Szpankowski}, \citenamefont {Grama},\ and\ \citenamefont {Drineas}}]{kontopoulou2020randomized}%
  \BibitemOpen
  \bibfield  {author} {\bibinfo {author} {\bibfnamefont {E.-M.}\ \bibnamefont {Kontopoulou}}, \bibinfo {author} {\bibfnamefont {G.-P.}\ \bibnamefont {Dexter}}, \bibinfo {author} {\bibfnamefont {W.}~\bibnamefont {Szpankowski}}, \bibinfo {author} {\bibfnamefont {A.}~\bibnamefont {Grama}},\ and\ \bibinfo {author} {\bibfnamefont {P.}~\bibnamefont {Drineas}},\ }\href@noop {} {\bibinfo {title} {Randomized linear algebra approaches to estimate the von neumann entropy of density matrices}} (\bibinfo {year} {2020}),\ \Eprint {https://arxiv.org/abs/1801.01072} {arXiv:1801.01072 [cs.IT]} \BibitemShut {NoStop}%
\bibitem [{\citenamefont {Dong}\ \emph {et~al.}(2022)\citenamefont {Dong}, \citenamefont {Gong}, \citenamefont {Yu}, \citenamefont {Chen},\ and\ \citenamefont {Li}}]{dong2022robust}%
  \BibitemOpen
  \bibfield  {author} {\bibinfo {author} {\bibfnamefont {Y.}~\bibnamefont {Dong}}, \bibinfo {author} {\bibfnamefont {T.}~\bibnamefont {Gong}}, \bibinfo {author} {\bibfnamefont {S.}~\bibnamefont {Yu}}, \bibinfo {author} {\bibfnamefont {H.}~\bibnamefont {Chen}},\ and\ \bibinfo {author} {\bibfnamefont {C.}~\bibnamefont {Li}},\ }\href@noop {} {\bibinfo {title} {Robust and fast measure of information via low-rank representation}} (\bibinfo {year} {2022}),\ \Eprint {https://arxiv.org/abs/2211.16784} {arXiv:2211.16784 [cs.LG]} \BibitemShut {NoStop}%
\bibitem [{\citenamefont {Ghojogh}\ \emph {et~al.}(2021)\citenamefont {Ghojogh}, \citenamefont {Ghodsi}, \citenamefont {Karray},\ and\ \citenamefont {Crowley}}]{ghojogh2021johnsonlindenstrauss}%
  \BibitemOpen
  \bibfield  {author} {\bibinfo {author} {\bibfnamefont {B.}~\bibnamefont {Ghojogh}}, \bibinfo {author} {\bibfnamefont {A.}~\bibnamefont {Ghodsi}}, \bibinfo {author} {\bibfnamefont {F.}~\bibnamefont {Karray}},\ and\ \bibinfo {author} {\bibfnamefont {M.}~\bibnamefont {Crowley}},\ }\href@noop {} {\bibinfo {title} {Johnson-lindenstrauss lemma, linear and nonlinear random projections, random fourier features, and random kitchen sinks: Tutorial and survey}} (\bibinfo {year} {2021}),\ \Eprint {https://arxiv.org/abs/2108.04172} {arXiv:2108.04172 [stat.ML]} \BibitemShut {NoStop}%
\bibitem [{\citenamefont {Lacotte}\ and\ \citenamefont {Pilanci}(2020)}]{lacotte2020optimal}%
  \BibitemOpen
  \bibfield  {author} {\bibinfo {author} {\bibfnamefont {J.}~\bibnamefont {Lacotte}}\ and\ \bibinfo {author} {\bibfnamefont {M.}~\bibnamefont {Pilanci}},\ }\href@noop {} {\bibinfo {title} {Optimal randomized first-order methods for least-squares problems}} (\bibinfo {year} {2020}),\ \Eprint {https://arxiv.org/abs/2002.09488} {arXiv:2002.09488 [math.OC]} \BibitemShut {NoStop}%
\bibitem [{\citenamefont {Drineas}\ \emph {et~al.}(2011)\citenamefont {Drineas}, \citenamefont {Mahoney}, \citenamefont {Muthukrishnan},\ and\ \citenamefont {Sarl{\'o}s}}]{7055df679c064c5cb60e75f342fc3c1d}%
  \BibitemOpen
  \bibfield  {author} {\bibinfo {author} {\bibfnamefont {P.}~\bibnamefont
  {Drineas}}, and\ \bibinfo {author} {\bibfnamefont {M.}~\bibnamefont {Mahoney}},
  and\ \bibinfo {author} {\bibfnamefont {S.}~\bibnamefont {Muthukrishnan}},\ and\
  \bibinfo {author} {\bibfnamefont {T.}~\bibnamefont {Sarl{\'o}s}},\ }\bibfield
   {title} {\bibinfo {title} {Faster least
  squares approximation} \}\bibfield {journal}{\bibinfo {journal}{Numerische Mathematik}\ }\textbf{\bibinfo {volume} {117}},\ \bibinfo {pages} {219} (\bibinfo {year}
  {2011})},\ \bibinfo {note} {copyright: Copyright 2021 Elsevier B.V., All
  rights reserved.}\BibitemShut {Stop}%

\bibitem [{\citenamefont {Nelson}\ and\ \citenamefont {Nguy\r{A}n}(2013)}]{10.1145/2488608.2488622}%
  \BibitemOpen
  \bibfield  {author} {\bibinfo {author} {\bibfnamefont {J.}~\bibnamefont {Nelson}}\ and\ \bibinfo {author} {\bibfnamefont {H.~L.}\ \bibnamefont {Nguy\r{A}n}},\ }\bibfield  {title} {\bibinfo {title} {Sparsity lower bounds for dimensionality reducing maps}\ }(\bibinfo  {publisher} {Association for Computing Machinery},\ \bibinfo {address} {New York, NY, USA},\ \bibinfo {year} {2013})\BibitemShut {NoStop}%
\bibitem [{\citenamefont {Tropp}(2011)}]{Tro11:Improved-Analysis}%
  \BibitemOpen
  \bibfield  {author} {\bibinfo {author} {\bibfnamefont {J.~A.}\ \bibnamefont {Tropp}},\ }\bibfield  {title} {\bibinfo {title} {Improved analysis of the subsampled randomized hadamard transform},\ }\href {https://doi.org/10.1142/S1793536911000787} {\bibfield  {journal} {\bibinfo  {journal} {Adv. Adapt. Data Anal.}\ }\textbf {\bibinfo {volume} {3}},\ \bibinfo {pages} {115} (\bibinfo {year} {2011})}\BibitemShut {NoStop}%
\bibitem [{\citenamefont {Nakaji}\ and\ \citenamefont {Yamamoto}(2021)}]{Nakaji_2021}%
  \BibitemOpen
  \bibfield  {author} {\bibinfo {author} {\bibfnamefont {K.}~\bibnamefont {Nakaji}}\ and\ \bibinfo {author} {\bibfnamefont {N.}~\bibnamefont {Yamamoto}},\ }\bibfield  {title} {\bibinfo {title} {Expressibility of the alternating layered ansatz for quantum computation},\ }\href {https://doi.org/10.22331/q-2021-04-19-434} {\bibfield  {journal} {\bibinfo  {journal} {Quantum}\ }\textbf {\bibinfo {volume} {5}},\ \bibinfo {pages} {434} (\bibinfo {year} {2021})}\BibitemShut {NoStop}%
\bibitem [{\citenamefont {Holmes}\ \emph {et~al.}(2022)\citenamefont {Holmes}, \citenamefont {Sharma}, \citenamefont {Cerezo},\ and\ \citenamefont {Coles}}]{PRXQuantum.3.010313}%
  \BibitemOpen
  \bibfield  {author} {\bibinfo {author} {\bibfnamefont {Z.}~\bibnamefont {Holmes}}, \bibinfo {author} {\bibfnamefont {K.}~\bibnamefont {Sharma}}, \bibinfo {author} {\bibfnamefont {M.}~\bibnamefont {Cerezo}},\ and\ \bibinfo {author} {\bibfnamefont {P.~J.}\ \bibnamefont {Coles}},\ }\bibfield  {title} {\bibinfo {title} {Connecting ansatz expressibility to gradient magnitudes and barren plateaus},\ }\href {https://doi.org/10.1103/PRXQuantum.3.010313} {\bibfield  {journal} {\bibinfo  {journal} {PRX Quantum}\ }\textbf {\bibinfo {volume} {3}},\ \bibinfo {pages} {010313} (\bibinfo {year} {2022})}\BibitemShut {NoStop}%
\bibitem [{\citenamefont {Low}(2010)}]{low2010pseudorandomness}%
  \BibitemOpen
  \bibfield  {author} {\bibinfo {author} {\bibfnamefont {R.~A.}\ \bibnamefont {Low}},\ }\href@noop {} {\bibinfo {title} {Pseudo-randomness and learning in quantum computation}} (\bibinfo {year} {2010}),\ \Eprint {https://arxiv.org/abs/1006.5227} {arXiv:1006.5227 [quant-ph]} \BibitemShut {NoStop}%
\bibitem [{\citenamefont {McClean}\ \emph {et~al.}(2018)\citenamefont {McClean}, \citenamefont {Boixo}, \citenamefont {Smelyanskiy}, \citenamefont {Babbush},\ and\ \citenamefont {Neven}}]{McClean_2018}%
  \BibitemOpen
  \bibfield  {author} {\bibinfo {author} {\bibfnamefont {J.~R.}\ \bibnamefont {McClean}}, \bibinfo {author} {\bibfnamefont {S.}~\bibnamefont {Boixo}}, \bibinfo {author} {\bibfnamefont {V.~N.}\ \bibnamefont {Smelyanskiy}}, \bibinfo {author} {\bibfnamefont {R.}~\bibnamefont {Babbush}},\ and\ \bibinfo {author} {\bibfnamefont {H.}~\bibnamefont {Neven}},\ }\bibfield  {title} {\bibinfo {title} {Barren plateaus in quantum neural network training landscapes},\ }\bibfield  {journal} {\bibinfo  {journal} {Nature Communications}\ }\textbf {\bibinfo {volume} {9}},\ \href {https://doi.org/10.1038/s41467-018-07090-4} {10.1038/s41467-018-07090-4} (\bibinfo {year} {2018})\BibitemShut {NoStop}%
\bibitem [{\citenamefont {Romero}\ \emph {et~al.}(2017)\citenamefont {Romero}, \citenamefont {Olson},\ and\ \citenamefont {Aspuru-Guzik}}]{Romero_2017}%
  \BibitemOpen
  \bibfield  {author} {\bibinfo {author} {\bibfnamefont {J.}~\bibnamefont {Romero}}, \bibinfo {author} {\bibfnamefont {J.~P.}\ \bibnamefont {Olson}},\ and\ \bibinfo {author} {\bibfnamefont {A.}~\bibnamefont {Aspuru-Guzik}},\ }\bibfield  {title} {\bibinfo {title} {Quantum autoencoders for efficient compression of quantum data},\ }\href {https://doi.org/10.1088/2058-9565/aa8072} {\bibfield  {journal} {\bibinfo  {journal} {Quantum Science and Technology}\ }\textbf {\bibinfo {volume} {2}},\ \bibinfo {pages} {045001} (\bibinfo {year} {2017})}\BibitemShut {NoStop}%
\bibitem [{\citenamefont {Lü}\ \emph {et~al.}(2021)\citenamefont {Lü}, \citenamefont {Gao}, \citenamefont {Lü}, \citenamefont {Ogorzałek},\ and\ \citenamefont {Zheng}}]{9550027}%
  \BibitemOpen
  \bibfield  {author} {\bibinfo {author} {\bibfnamefont {Y.}~\bibnamefont {Lü}}, \bibinfo {author} {\bibfnamefont {Q.}~\bibnamefont {Gao}}, \bibinfo {author} {\bibfnamefont {J.}~\bibnamefont {Lü}}, \bibinfo {author} {\bibfnamefont {M.}~\bibnamefont {Ogorzałek}},\ and\ \bibinfo {author} {\bibfnamefont {J.}~\bibnamefont {Zheng}},\ }\bibfield  {title} {\bibinfo {title} {A quantum convolutional neural network for image classification},\ }in\ \href {https://doi.org/10.23919/CCC52363.2021.9550027} {\emph {\bibinfo {booktitle} {2021 40th Chinese Control Conference (CCC)}}}\ (\bibinfo {year} {2021})\ pp.\ \bibinfo {pages} {6329--6334}\BibitemShut {NoStop}%
\bibitem [{\citenamefont {Zoufal}\ \emph {et~al.}(2019)\citenamefont {Zoufal}, \citenamefont {Lucchi},\ and\ \citenamefont {Woerner}}]{Zoufal_2019}%
  \BibitemOpen
  \bibfield  {author} {\bibinfo {author} {\bibfnamefont {C.}~\bibnamefont {Zoufal}}, \bibinfo {author} {\bibfnamefont {A.}~\bibnamefont {Lucchi}},\ and\ \bibinfo {author} {\bibfnamefont {S.}~\bibnamefont {Woerner}},\ }\bibfield  {title} {\bibinfo {title} {Quantum generative adversarial networks for learning and loading random distributions},\ }\bibfield  {journal} {\bibinfo  {journal} {npj Quantum Information}\ }\textbf {\bibinfo {volume} {5}},\ \href {https://doi.org/10.1038/s41534-019-0223-2} {10.1038/s41534-019-0223-2} (\bibinfo {year} {2019})\BibitemShut {NoStop}%
\bibitem [{\citenamefont {Nakaji}\ \emph {et~al.}(2022)\citenamefont {Nakaji}, \citenamefont {Uno}, \citenamefont {Suzuki}, \citenamefont {Raymond}, \citenamefont {Onodera}, \citenamefont {Tanaka}, \citenamefont {Tezuka}, \citenamefont {Mitsuda},\ and\ \citenamefont {Yamamoto}}]{Nakaji_2022}%
  \BibitemOpen
  \bibfield  {author} {\bibinfo {author} {\bibfnamefont {K.}~\bibnamefont {Nakaji}}, \bibinfo {author} {\bibfnamefont {S.}~\bibnamefont {Uno}}, \bibinfo {author} {\bibfnamefont {Y.}~\bibnamefont {Suzuki}}, \bibinfo {author} {\bibfnamefont {R.}~\bibnamefont {Raymond}}, \bibinfo {author} {\bibfnamefont {T.}~\bibnamefont {Onodera}}, \bibinfo {author} {\bibfnamefont {T.}~\bibnamefont {Tanaka}}, \bibinfo {author} {\bibfnamefont {H.}~\bibnamefont {Tezuka}}, \bibinfo {author} {\bibfnamefont {N.}~\bibnamefont {Mitsuda}},\ and\ \bibinfo {author} {\bibfnamefont {N.}~\bibnamefont {Yamamoto}},\ }\bibfield  {title} {\bibinfo {title} {Approximate amplitude encoding in shallow parameterized quantum circuits and its application to financial market indicators},\ }\bibfield  {journal} {\bibinfo  {journal} {Physical Review Research}\ }\textbf {\bibinfo {volume} {4}},\ \href {https://doi.org/10.1103/physrevresearch.4.023136} {10.1103/physrevresearch.4.023136} (\bibinfo {year} {2022})\BibitemShut {NoStop}%
\bibitem [{\citenamefont {Mitsuda}\ \emph {et~al.}(2023)\citenamefont {Mitsuda}, \citenamefont {Nakaji}, \citenamefont {Suzuki}, \citenamefont {Tanaka}, \citenamefont {Raymond}, \citenamefont {Tezuka}, \citenamefont {Onodera},\ and\ \citenamefont {Yamamoto}}]{mitsuda2023approximate}%
  \BibitemOpen
  \bibfield  {author} {\bibinfo {author} {\bibfnamefont {N.}~\bibnamefont {Mitsuda}}, \bibinfo {author} {\bibfnamefont {K.}~\bibnamefont {Nakaji}}, \bibinfo {author} {\bibfnamefont {Y.}~\bibnamefont {Suzuki}}, \bibinfo {author} {\bibfnamefont {T.}~\bibnamefont {Tanaka}}, \bibinfo {author} {\bibfnamefont {R.}~\bibnamefont {Raymond}}, \bibinfo {author} {\bibfnamefont {H.}~\bibnamefont {Tezuka}}, \bibinfo {author} {\bibfnamefont {T.}~\bibnamefont {Onodera}},\ and\ \bibinfo {author} {\bibfnamefont {N.}~\bibnamefont {Yamamoto}},\ }\href@noop {} {\bibinfo {title} {Approximate complex amplitude encoding algorithm and its application to data classification problems}} (\bibinfo {year} {2023}),\ \Eprint {https://arxiv.org/abs/2211.13039} {arXiv:2211.13039 [quant-ph]} \BibitemShut {NoStop}%
\bibitem [{\citenamefont {Le}\ \emph {et~al.}(2011)\citenamefont {Le}, \citenamefont {Dong},\ and\ \citenamefont {Hirota}}]{Le2011AFR}%
  \BibitemOpen
  \bibfield  {author} {\bibinfo {author} {\bibfnamefont {P.~Q.}\ \bibnamefont {Le}}, \bibinfo {author} {\bibfnamefont {F.}~\bibnamefont {Dong}},\ and\ \bibinfo {author} {\bibfnamefont {K.}~\bibnamefont {Hirota}},\ }\bibfield  {title} {\bibinfo {title} {A flexible representation of quantum images for polynomial preparation, image compression, and processing operations},\ }\href {https://api.semanticscholar.org/CorpusID:6541235} {\bibfield  {journal} {\bibinfo  {journal} {Quantum Information Processing}\ }\textbf {\bibinfo {volume} {10}},\ \bibinfo {pages} {63} (\bibinfo {year} {2011})}\BibitemShut {NoStop}%
\bibitem [{\citenamefont {{Zhang}}\ \emph {et~al.}(2013)\citenamefont {{Zhang}}, \citenamefont {{Lu}}, \citenamefont {{Gao}},\ and\ \citenamefont {{Wang}}}]{2013QuIP...12.2833Z}%
  \BibitemOpen
  \bibfield  {author} {\bibinfo {author} {\bibfnamefont {Y.}~\bibnamefont {{Zhang}}}, \bibinfo {author} {\bibfnamefont {K.}~\bibnamefont {{Lu}}}, \bibinfo {author} {\bibfnamefont {Y.}~\bibnamefont {{Gao}}},\ and\ \bibinfo {author} {\bibfnamefont {M.}~\bibnamefont {{Wang}}},\ }\bibfield  {title} {\bibinfo {title} {{NEQR: a novel enhanced quantum representation of digital images}},\ }\href {https://doi.org/10.1007/s11128-013-0567-z} {\bibfield  {journal} {\bibinfo  {journal} {Quantum Information Processing}\ }\textbf {\bibinfo {volume} {12}},\ \bibinfo {pages} {2833} (\bibinfo {year} {2013})}\BibitemShut {NoStop}%
\bibitem [{\citenamefont {R.}(1994)}]{HechtNielsen94ContextVectors}%
  \BibitemOpen
  \bibfield  {author} {\bibinfo {author} {\bibfnamefont {H.-N.}\ \bibnamefont {R.}},\ }\bibfield  {title} {\bibinfo {title} {Context vectors: general purpose approximate meaning representations self-organized from raw data},\ }\href@noop {} {\bibfield  {journal} {\bibinfo  {journal} {Computational Intelligence: Imitating Life, IEEE Press}\ ,\ \bibinfo {pages} {43}} (\bibinfo {year} {1994})}\BibitemShut {NoStop}%
\bibitem [{\citenamefont {Yao}\ \emph {et~al.}(2017)\citenamefont {Yao}, \citenamefont {Wang}, \citenamefont {Liao}, \citenamefont {Chen}, \citenamefont {Pan}, \citenamefont {Li}, \citenamefont {Zhang}, \citenamefont {Lin}, \citenamefont {Wang}, \citenamefont {Luo}, \citenamefont {Zheng}, \citenamefont {Li}, \citenamefont {Zhao}, \citenamefont {Peng},\ and\ \citenamefont {Suter}}]{Yao_2017}%
  \BibitemOpen
  \bibfield  {author} {\bibinfo {author} {\bibfnamefont {X.-W.}\ \bibnamefont {Yao}}, \bibinfo {author} {\bibfnamefont {H.}~\bibnamefont {Wang}}, \bibinfo {author} {\bibfnamefont {Z.}~\bibnamefont {Liao}}, \bibinfo {author} {\bibfnamefont {M.-C.}\ \bibnamefont {Chen}}, \bibinfo {author} {\bibfnamefont {J.}~\bibnamefont {Pan}}, \bibinfo {author} {\bibfnamefont {J.}~\bibnamefont {Li}}, \bibinfo {author} {\bibfnamefont {K.}~\bibnamefont {Zhang}}, \bibinfo {author} {\bibfnamefont {X.}~\bibnamefont {Lin}}, \bibinfo {author} {\bibfnamefont {Z.}~\bibnamefont {Wang}}, \bibinfo {author} {\bibfnamefont {Z.}~\bibnamefont {Luo}}, \bibinfo {author} {\bibfnamefont {W.}~\bibnamefont {Zheng}}, \bibinfo {author} {\bibfnamefont {J.}~\bibnamefont {Li}}, \bibinfo {author} {\bibfnamefont {M.}~\bibnamefont {Zhao}}, \bibinfo {author} {\bibfnamefont {X.}~\bibnamefont {Peng}},\ and\ \bibinfo {author} {\bibfnamefont {D.}~\bibnamefont {Suter}},\ }\bibfield  {title} {\bibinfo {title} {Quantum image processing and its application to edge
  detection: Theory and experiment},\ }\bibfield  {journal} {\bibinfo  {journal} {Physical Review X}\ }\textbf {\bibinfo {volume} {7}},\ \href {https://doi.org/10.1103/physrevx.7.031041} {10.1103/physrevx.7.031041} (\bibinfo {year} {2017})\BibitemShut {NoStop}%
\bibitem [{\citenamefont {Huang}\ \emph {et~al.}(2020)\citenamefont {Huang}, \citenamefont {Kueng},\ and\ \citenamefont {Preskill}}]{Huang_2020}%
  \BibitemOpen
  \bibfield  {author} {\bibinfo {author} {\bibfnamefont {H.-Y.}\ \bibnamefont {Huang}}, \bibinfo {author} {\bibfnamefont {R.}~\bibnamefont {Kueng}},\ and\ \bibinfo {author} {\bibfnamefont {J.}~\bibnamefont {Preskill}},\ }\bibfield  {title} {\bibinfo {title} {Predicting many properties of a quantum system from very few measurements},\ }\href {https://doi.org/10.1038/s41567-020-0932-7} {\bibfield  {journal} {\bibinfo  {journal} {Nature Physics}\ }\textbf {\bibinfo {volume} {16}},\ \bibinfo {pages} {1050–1057} (\bibinfo {year} {2020})}\BibitemShut {NoStop}%
\bibitem [{\citenamefont {Huang}\ \emph {et~al.}(2021)\citenamefont {Huang}, \citenamefont {Kueng},\ and\ \citenamefont {Preskill}}]{Huang_2021}%
  \BibitemOpen
  \bibfield  {author} {\bibinfo {author} {\bibfnamefont {H.-Y.}\ \bibnamefont {Huang}}, \bibinfo {author} {\bibfnamefont {R.}~\bibnamefont {Kueng}},\ and\ \bibinfo {author} {\bibfnamefont {J.}~\bibnamefont {Preskill}},\ }\bibfield  {title} {\bibinfo {title} {Efficient estimation of pauli observables by derandomization},\ }\bibfield  {journal} {\bibinfo  {journal} {Physical Review Letters}\ }\textbf {\bibinfo {volume} {127}},\ \href {https://doi.org/10.1103/physrevlett.127.030503} {10.1103/physrevlett.127.030503} (\bibinfo {year} {2021})\BibitemShut {NoStop}%
\bibitem [{\citenamefont {Giudice}\ \emph {et~al.}(2021)\citenamefont {Giudice}, \citenamefont {{\c{C} }akan}, \citenamefont {Cirac},\ and\ \citenamefont {Ba{\~{n}}uls}}]{Giudice_2021}%
  \BibitemOpen
  \bibfield  {author} {\bibinfo {author} {\bibfnamefont {G.}~\bibnamefont {Giudice}}, \bibinfo {author} {\bibfnamefont {A.}~\bibnamefont {{\c{C} }akan}}, \bibinfo {author} {\bibfnamefont {J.~I.}\ \bibnamefont {Cirac}},\ and\ \bibinfo {author} {\bibfnamefont {M.~C.}\ \bibnamefont {Ba{\~{n}}uls}},\ }\bibfield  {title} {\bibinfo {title} {R{\'{e}}nyi free energy and variational approximations to thermal states},\ }\bibfield  {journal} {\bibinfo  {journal} {Physical Review B}\ }\textbf {\bibinfo {volume} {103}},\ \href {https://doi.org/10.1103/physrevb.103.205128} {10.1103/physrevb.103.205128} (\bibinfo {year} {2021})\BibitemShut {NoStop}%
\bibitem [{\citenamefont {Cerezo}\ \emph {et~al.}(2021)\citenamefont {Cerezo}, \citenamefont {Sone}, \citenamefont {Volkoff}, \citenamefont {Cincio},\ and\ \citenamefont {Coles}}]{Cerezo_2021}%
  \BibitemOpen
  \bibfield  {author} {\bibinfo {author} {\bibfnamefont {M.}~\bibnamefont {Cerezo}}, \bibinfo {author} {\bibfnamefont {A.}~\bibnamefont {Sone}}, \bibinfo {author} {\bibfnamefont {T.}~\bibnamefont {Volkoff}}, \bibinfo {author} {\bibfnamefont {L.}~\bibnamefont {Cincio}},\ and\ \bibinfo {author} {\bibfnamefont {P.~J.}\ \bibnamefont {Coles}},\ }\bibfield  {title} {\bibinfo {title} {Cost function dependent barren plateaus in shallow parametrized quantum circuits},\ }\bibfield  {journal} {\bibinfo  {journal} {Nature Communications}\ }\textbf {\bibinfo {volume} {12}},\ \href {https://doi.org/10.1038/s41467-021-21728-w} {10.1038/s41467-021-21728-w} (\bibinfo {year} {2021})\BibitemShut {NoStop}%
\bibitem [{\citenamefont {Cerezo}\ \emph {et~al.}(2022)\citenamefont {Cerezo}, \citenamefont {Sharma}, \citenamefont {Arrasmith},\ and\ \citenamefont {Coles}}]{Cerezo_2022}%
  \BibitemOpen
  \bibfield  {author} {\bibinfo {author} {\bibfnamefont {M.}~\bibnamefont {Cerezo}}, \bibinfo {author} {\bibfnamefont {K.}~\bibnamefont {Sharma}}, \bibinfo {author} {\bibfnamefont {A.}~\bibnamefont {Arrasmith}},\ and\ \bibinfo {author} {\bibfnamefont {P.~J.}\ \bibnamefont {Coles}},\ }\bibfield  {title} {\bibinfo {title} {Variational quantum state eigensolver},\ }\bibfield  {journal} {\bibinfo  {journal} {npj Quantum Information}\ }\textbf {\bibinfo {volume} {8}},\ \href {https://doi.org/10.1038/s41534-022-00611-6} {10.1038/s41534-022-00611-6} (\bibinfo {year} {2022})\BibitemShut {NoStop}%
\bibitem [{\citenamefont {Grant}\ \emph {et~al.}(2019)\citenamefont {Grant}, \citenamefont {Wossnig}, \citenamefont {Ostaszewski},\ and\ \citenamefont {Benedetti}}]{Grant2019initialization}%
  \BibitemOpen
  \bibfield  {author} {\bibinfo {author} {\bibfnamefont {E.}~\bibnamefont {Grant}}, \bibinfo {author} {\bibfnamefont {L.}~\bibnamefont {Wossnig}}, \bibinfo {author} {\bibfnamefont {M.}~\bibnamefont {Ostaszewski}},\ and\ \bibinfo {author} {\bibfnamefont {M.}~\bibnamefont {Benedetti}},\ }\bibfield  {title} {\bibinfo {title} {An initialization strategy for addressing barren plateaus in parametrized quantum circuits},\ }\href {https://doi.org/10.22331/q-2019-12-09-214} {\bibfield  {journal} {\bibinfo  {journal} {{Quantum}}\ }\textbf {\bibinfo {volume} {3}},\ \bibinfo {pages} {214} (\bibinfo {year} {2019})}\BibitemShut {NoStop}%
\bibitem [{\citenamefont {Hangleiter}\ \emph {et~al.}(2018)\citenamefont {Hangleiter}, \citenamefont {Bermejo-Vega}, \citenamefont {Schwarz},\ and\ \citenamefont {Eisert}}]{Hangleiter_2018}%
  \BibitemOpen
  \bibfield  {author} {\bibinfo {author} {\bibfnamefont {D.}~\bibnamefont {Hangleiter}}, \bibinfo {author} {\bibfnamefont {J.}~\bibnamefont {Bermejo-Vega}}, \bibinfo {author} {\bibfnamefont {M.}~\bibnamefont {Schwarz}},\ and\ \bibinfo {author} {\bibfnamefont {J.}~\bibnamefont {Eisert}},\ }\bibfield  {title} {\bibinfo {title} {Anticoncentration theorems for schemes showing a quantum speedup},\ }\href {https://doi.org/10.22331/q-2018-05-22-65} {\bibfield  {journal} {\bibinfo  {journal} {Quantum}\ }\textbf {\bibinfo {volume} {2}},\ \bibinfo {pages} {65} (\bibinfo {year} {2018})}\BibitemShut {NoStop}%
\bibitem [{\citenamefont {Dalzell}\ \emph {et~al.}(2022)\citenamefont {Dalzell}, \citenamefont {Hunter-Jones},\ and\ \citenamefont {Brand{\~{a} }o}}]{Dalzell_2022}%
  \BibitemOpen
  \bibfield  {author} {\bibinfo {author} {\bibfnamefont {A.~M.}\ \bibnamefont {Dalzell}}, \bibinfo {author} {\bibfnamefont {N.}~\bibnamefont {Hunter-Jones}},\ and\ \bibinfo {author} {\bibfnamefont {F.~G. S.~L.}\ \bibnamefont {Brand{\~{a} }o}},\ }\bibfield  {title} {\bibinfo {title} {Random quantum circuits anticoncentrate in log depth},\ }\bibfield  {journal} {\bibinfo  {journal} {{PRX} Quantum}\ }\textbf {\bibinfo {volume} {3}},\ \href {https://doi.org/10.1103/prxquantum.3.010333} {10.1103/prxquantum.3.010333} (\bibinfo {year} {2022})\BibitemShut {NoStop}%
\bibitem [{\citenamefont {Wang}(2012)}]{wang2012kernel}%
  \BibitemOpen
  \bibfield  {author} {\bibinfo {author} {\bibfnamefont {Q.}~\bibnamefont {Wang}},\ }\bibfield  {title} {\bibinfo {title} {Kernel principal component analysis and its applications in face recognition and active shape models},\ }\href@noop {} {\bibfield  {journal} {\bibinfo  {journal} {arXiv preprint arXiv:1207.3538}\ } (\bibinfo {year} {2012})}\BibitemShut {NoStop}%
\bibitem [{\citenamefont {Daskin}\ \emph {et~al.}(2023)\citenamefont {Daskin}, \citenamefont {Gupta},\ and\ \citenamefont {Kais}}]{app13053172}%
  \BibitemOpen
  \bibfield  {author} {\bibinfo {author} {\bibfnamefont {A.}~\bibnamefont {Daskin}}, \bibinfo {author} {\bibfnamefont {R.}~\bibnamefont {Gupta}},\ and\ \bibinfo {author} {\bibfnamefont {S.}~\bibnamefont {Kais}},\ }\bibfield  {title} {\bibinfo {title} {Dimension reduction and redundancy removal through successive schmidt decompositions},\ }\bibfield  {journal} {\bibinfo  {journal} {Applied Sciences}\ }\textbf {\bibinfo {volume} {13}},\ \href {https://doi.org/10.3390/app13053172} {10.3390/app13053172} (\bibinfo {year} {2023})\BibitemShut {NoStop}%
\bibitem [{\citenamefont {Sajjan}\ \emph {et~al.}(2022{\natexlab{a}})\citenamefont {Sajjan}, \citenamefont {Li}, \citenamefont {Selvarajan}, \citenamefont {Sureshbabu}, \citenamefont {Kale}, \citenamefont {Gupta}, \citenamefont {Singh},\ and\ \citenamefont {Kais}}]{D2CS00203E}%
  \BibitemOpen
  \bibfield  {author} {\bibinfo {author} {\bibfnamefont {M.}~\bibnamefont {Sajjan}}, \bibinfo {author} {\bibfnamefont {J.}~\bibnamefont {Li}}, \bibinfo {author} {\bibfnamefont {R.}~\bibnamefont {Selvarajan}}, \bibinfo {author} {\bibfnamefont {S.~H.}\ \bibnamefont {Sureshbabu}}, \bibinfo {author} {\bibfnamefont {S.~S.}\ \bibnamefont {Kale}}, \bibinfo {author} {\bibfnamefont {R.}~\bibnamefont {Gupta}}, \bibinfo {author} {\bibfnamefont {V.}~\bibnamefont {Singh}},\ and\ \bibinfo {author} {\bibfnamefont {S.}~\bibnamefont {Kais}},\ }\bibfield  {title} {\bibinfo {title} {Quantum machine learning for chemistry and physics},\ }\href {https://doi.org/10.1039/D2CS00203E} {\bibfield  {journal} {\bibinfo  {journal} {Chem. Soc. Rev.}\ }\textbf {\bibinfo {volume} {51}},\ \bibinfo {pages} {6475} (\bibinfo {year} {2022}{\natexlab{a}})}\BibitemShut {NoStop}%
\bibitem [{\citenamefont {Sajjan}\ \emph {et~al.}(2022{\natexlab{b}})\citenamefont {Sajjan}, \citenamefont {Alaeian},\ and\ \citenamefont {Kais}}]{sajjan2022magnetic}%
  \BibitemOpen
  \bibfield  {author} {\bibinfo {author} {\bibfnamefont {M.}~\bibnamefont {Sajjan}}, \bibinfo {author} {\bibfnamefont {H.}~\bibnamefont {Alaeian}},\ and\ \bibinfo {author} {\bibfnamefont {S.}~\bibnamefont {Kais}},\ }\bibfield  {title} {\bibinfo {title} {Magnetic phases of spatially modulated spin-1 chains in rydberg excitons: Classical and quantum simulations},\ }\href@noop {} {\bibfield  {journal} {\bibinfo  {journal} {The Journal of Chemical Physics}\ }\textbf {\bibinfo {volume} {157}} (\bibinfo {year} {2022}{\natexlab{b}})}\BibitemShut {NoStop}%
\bibitem [{\citenamefont {Borissov}\ \emph {et~al.}(1996)\citenamefont {Borissov}, \citenamefont {Major},\ and\ \citenamefont {Smolin}}]{borissov1996geometry}%
  \BibitemOpen
  \bibfield  {author} {\bibinfo {author} {\bibfnamefont {R.}~\bibnamefont {Borissov}}, \bibinfo {author} {\bibfnamefont {S.}~\bibnamefont {Major}},\ and\ \bibinfo {author} {\bibfnamefont {L.}~\bibnamefont {Smolin}},\ }\bibfield  {title} {\bibinfo {title} {The geometry of quantum spin networks},\ }\href@noop {} {\bibfield  {journal} {\bibinfo  {journal} {Classical and Quantum Gravity}\ }\textbf {\bibinfo {volume} {13}},\ \bibinfo {pages} {3183} (\bibinfo {year} {1996})}\BibitemShut {NoStop}%
\bibitem [{\citenamefont {Sajjan}\ \emph {et~al.}(2023)\citenamefont {Sajjan}, \citenamefont {Singh}, \citenamefont {Selvarajan},\ and\ \citenamefont {Kais}}]{sajjan2023imaginary}%
  \BibitemOpen
  \bibfield  {author} {\bibinfo {author} {\bibfnamefont {M.}~\bibnamefont {Sajjan}}, \bibinfo {author} {\bibfnamefont {V.}~\bibnamefont {Singh}}, \bibinfo {author} {\bibfnamefont {R.}~\bibnamefont {Selvarajan}},\ and\ \bibinfo {author} {\bibfnamefont {S.}~\bibnamefont {Kais}},\ }\bibfield  {title} {\bibinfo {title} {Imaginary components of out-of-time-order correlator and information scrambling for navigating the learning landscape of a quantum machine learning model},\ }\href@noop {} {\bibfield  {journal} {\bibinfo  {journal} {Physical Review Research}\ }\textbf {\bibinfo {volume} {5}},\ \bibinfo {pages} {013146} (\bibinfo {year} {2023})}\BibitemShut {NoStop}%
\bibitem [{\citenamefont {Gupta}\ \emph {et~al.}(2023)\citenamefont {Gupta}, \citenamefont {Selvarajan}, \citenamefont {Sajjan}, \citenamefont {Levine},\ and\ \citenamefont {Kais}}]{gupta2023hamiltonian}%
  \BibitemOpen
  \bibfield  {author} {\bibinfo {author} {\bibfnamefont {R.}~\bibnamefont {Gupta}}, \bibinfo {author} {\bibfnamefont {R.}~\bibnamefont {Selvarajan}}, \bibinfo {author} {\bibfnamefont {M.}~\bibnamefont {Sajjan}}, \bibinfo {author} {\bibfnamefont {R.~D.}\ \bibnamefont {Levine}},\ and\ \bibinfo {author} {\bibfnamefont {S.}~\bibnamefont {Kais}},\ }\bibfield  {title} {\bibinfo {title} {Hamiltonian learning from time dynamics using variational algorithms},\ }\href@noop {} {\bibfield  {journal} {\bibinfo  {journal} {The Journal of Physical Chemistry A}\ }\textbf {\bibinfo {volume} {127}},\ \bibinfo {pages} {3246} (\bibinfo {year} {2023})}\BibitemShut {NoStop}%
\bibitem [{\citenamefont {Yu}\ \emph {et~al.}(2023)\citenamefont {Yu}, \citenamefont {Sun}, \citenamefont {Han},\ and\ \citenamefont {Yuan}}]{yu2023robust}%
  \BibitemOpen
  \bibfield  {author} {\bibinfo {author} {\bibfnamefont {W.}~\bibnamefont {Yu}}, \bibinfo {author} {\bibfnamefont {J.}~\bibnamefont {Sun}}, \bibinfo {author} {\bibfnamefont {Z.}~\bibnamefont {Han}},\ and\ \bibinfo {author} {\bibfnamefont {X.}~\bibnamefont {Yuan}},\ }\bibfield  {title} {\bibinfo {title} {Robust and efficient hamiltonian learning},\ }\href@noop {} {\bibfield  {journal} {\bibinfo  {journal} {Quantum}\ }\textbf {\bibinfo {volume} {7}},\ \bibinfo {pages} {1045} (\bibinfo {year} {2023})}\BibitemShut {NoStop}%
\bibitem [{\citenamefont {Gupta}\ \emph {et~al.}(2022)\citenamefont {Gupta}, \citenamefont {Sajjan}, \citenamefont {Levine},\ and\ \citenamefont {Kais}}]{gupta2022variational}%
  \BibitemOpen
  \bibfield  {author} {\bibinfo {author} {\bibfnamefont {R.}~\bibnamefont {Gupta}}, \bibinfo {author} {\bibfnamefont {M.}~\bibnamefont {Sajjan}}, \bibinfo {author} {\bibfnamefont {R.~D.}\ \bibnamefont {Levine}},\ and\ \bibinfo {author} {\bibfnamefont {S.}~\bibnamefont {Kais}},\ }\bibfield  {title} {\bibinfo {title} {Variational approach to quantum state tomography based on maximal entropy formalism},\ }\href@noop {} {\bibfield  {journal} {\bibinfo  {journal} {Physical Chemistry Chemical Physics}\ }\textbf {\bibinfo {volume} {24}},\ \bibinfo {pages} {28870} (\bibinfo {year} {2022})}\BibitemShut {NoStop}%
\bibitem [{\citenamefont {Gupta}\ \emph {et~al.}(2021)\citenamefont {Gupta}, \citenamefont {Xia}, \citenamefont {Levine},\ and\ \citenamefont {Kais}}]{gupta2021maximal}%
  \BibitemOpen
  \bibfield  {author} {\bibinfo {author} {\bibfnamefont {R.}~\bibnamefont {Gupta}}, \bibinfo {author} {\bibfnamefont {R.}~\bibnamefont {Xia}}, \bibinfo {author} {\bibfnamefont {R.~D.}\ \bibnamefont {Levine}},\ and\ \bibinfo {author} {\bibfnamefont {S.}~\bibnamefont {Kais}},\ }\bibfield  {title} {\bibinfo {title} {Maximal entropy approach for quantum state tomography},\ }\href@noop {} {\bibfield  {journal} {\bibinfo  {journal} {PRX Quantum}\ }\textbf {\bibinfo {volume} {2}},\ \bibinfo {pages} {010318} (\bibinfo {year} {2021})}\BibitemShut {NoStop}%
\bibitem [{\citenamefont {Sajjan}\ \emph {et~al.}(2021)\citenamefont {Sajjan}, \citenamefont {Sureshbabu},\ and\ \citenamefont {Kais}}]{sajjan2021quantum}%
  \BibitemOpen
  \bibfield  {author} {\bibinfo {author} {\bibfnamefont {M.}~\bibnamefont {Sajjan}}, \bibinfo {author} {\bibfnamefont {S.~H.}\ \bibnamefont {Sureshbabu}},\ and\ \bibinfo {author} {\bibfnamefont {S.}~\bibnamefont {Kais}},\ }\bibfield  {title} {\bibinfo {title} {Quantum machine - learning for eigenstate filtration in two - dimensional materials},\ }\href@noop {} {\bibfield  {journal} {\bibinfo  {journal} {Journal of the American Chemical Society}\ }\textbf {\bibinfo {volume} {143}},\ \bibinfo {pages} {18426} (\bibinfo {year} {2021})}\BibitemShut {NoStop}%
\bibitem [{\citenamefont {Selvarajan}\ \emph {et~al.}(2022)\citenamefont {Selvarajan}, \citenamefont {Sajjan},\ and\ \citenamefont {Kais}}]{selvarajan2022variational}%
  \BibitemOpen
  \bibfield  {author} {\bibinfo {author} {\bibfnamefont {R.}~\bibnamefont {Selvarajan}}, \bibinfo {author} {\bibfnamefont {M.}~\bibnamefont {Sajjan}},\ and\ \bibinfo {author} {\bibfnamefont {S.}~\bibnamefont {Kais}},\ }\bibfield  {title} {\bibinfo {title} {Variational quantum circuits to prepare low energy symmetry states},\ }\href@noop {} {\bibfield  {journal} {\bibinfo  {journal} {Symmetry}\ }\textbf {\bibinfo {volume} {14}},\ \bibinfo {pages} {457} (\bibinfo {year} {2022})}\BibitemShut {NoStop}%
\bibitem [{\citenamefont {Yaacoby}\ \emph {et~al.}(2022)\citenamefont {Yaacoby}, \citenamefont {Schaar}, \citenamefont {Kellerhals}, \citenamefont {Raz}, \citenamefont {Hermelin},\ and\ \citenamefont {Pugatch}}]{PhysRevE.105.035305}%
  \BibitemOpen
  \bibfield  {author} {\bibinfo {author} {\bibfnamefont {R.}~\bibnamefont {Yaacoby}}, \bibinfo {author} {\bibfnamefont {N.}~\bibnamefont {Schaar}}, \bibinfo {author} {\bibfnamefont {L.}~\bibnamefont {Kellerhals}}, \bibinfo {author} {\bibfnamefont {O.}~\bibnamefont {Raz}}, \bibinfo {author} {\bibfnamefont {D.}~\bibnamefont {Hermelin}},\ and\ \bibinfo {author} {\bibfnamefont {R.}~\bibnamefont {Pugatch}},\ }\bibfield  {title} {\bibinfo {title} {Comparison between a quantum annealer and a classical approximation algorithm for computing the ground state of an ising spin glass},\ }\href {https://doi.org/10.1103/PhysRevE.105.035305} {\bibfield  {journal} {\bibinfo  {journal} {Phys. Rev. E}\ }\textbf {\bibinfo {volume} {105}},\ \bibinfo {pages} {035305} (\bibinfo {year} {2022})}\BibitemShut {NoStop}%
\bibitem [{\citenamefont {Schindler}\ \emph {et~al.}(2022)\citenamefont {Schindler}, \citenamefont {Guaita}, \citenamefont {Shi}, \citenamefont {Demler},\ and\ \citenamefont {Cirac}}]{schindler2022variational}%
  \BibitemOpen
  \bibfield  {author} {\bibinfo {author} {\bibfnamefont {P.~M.}\ \bibnamefont {Schindler}}, \bibinfo {author} {\bibfnamefont {T.}~\bibnamefont {Guaita}}, \bibinfo {author} {\bibfnamefont {T.}~\bibnamefont {Shi}}, \bibinfo {author} {\bibfnamefont {E.}~\bibnamefont {Demler}},\ and\ \bibinfo {author} {\bibfnamefont {J.~I.}\ \bibnamefont {Cirac}},\ }\bibfield  {title} {\bibinfo {title} {Variational ansatz for the ground state of the quantum sherrington-kirkpatrick model},\ }\href@noop {} {\bibfield  {journal} {\bibinfo  {journal} {Physical Review Letters}\ }\textbf {\bibinfo {volume} {129}},\ \bibinfo {pages} {220401} (\bibinfo {year} {2022})}\BibitemShut {NoStop}%
\bibitem [{\citenamefont {Mirniaharikandehei}\ \emph {et~al.}(2020)\citenamefont {Mirniaharikandehei}, \citenamefont {Heidari}, \citenamefont {Danala}, \citenamefont {Lakshmivarahan}, \citenamefont {of~Electrical}, \citenamefont {Engineering}, \citenamefont {of~Oklahoma}, \citenamefont {Norman}, \citenamefont {Ok}, \citenamefont {Usa},\ and\ \citenamefont {of~Physical~Sciences}}]{Mirniaharikandehei2020ApplyingAR}%
  \BibitemOpen
  \bibfield  {author} {\bibinfo {author} {\bibfnamefont {S.}~\bibnamefont {Mirniaharikandehei}}, \bibinfo {author} {\bibfnamefont {M.}~\bibnamefont {Heidari}}, \bibinfo {author} {\bibfnamefont {G.}~\bibnamefont {Danala}}, \bibinfo {author} {\bibfnamefont {S.}~\bibnamefont {Lakshmivarahan}}, \bibinfo {author} {\bibfnamefont {B.~Z.~S.}\ \bibnamefont {of~Electrical}}, \bibinfo {author} {\bibfnamefont {C.}~\bibnamefont {Engineering}}, \bibinfo {author} {\bibfnamefont {U.}~\bibnamefont {of~Oklahoma}}, \bibinfo {author} {\bibnamefont {Norman}}, \bibinfo {author} {\bibnamefont {Ok}}, \bibinfo {author} {\bibnamefont {Usa}},\ and\ \bibinfo {author} {\bibfnamefont {S.}~\bibnamefont {of~Physical~Sciences}},\ }\bibfield  {title} {\bibinfo {title} {Applying a random projection algorithm to optimize machine learning model for predicting peritoneal metastasis in gastric cancer patients using ct images},\ }\href@noop {} {\bibfield  {journal} {\bibinfo  {journal} {Computer methods and programs in biomedicine}\ }\textbf
  {\bibinfo {volume} {200}},\ \bibinfo {pages} {105937} (\bibinfo {year} {2020})}\BibitemShut {NoStop}%
\bibitem [{\citenamefont {Heidari}\ \emph {et~al.}(2020)\citenamefont {Heidari}, \citenamefont {Lakshmivarahan}, \citenamefont {Mirniaharikandehei}, \citenamefont {Danala}, \citenamefont {Maryada}, \citenamefont {Liu}, \citenamefont {Zheng}, \citenamefont {of~Electrical}, \citenamefont {Engineering}, \citenamefont {of~Oklahoma}, \citenamefont {Norman}, \citenamefont {Ok}, \citenamefont {Usa},\ and\ \citenamefont {of~Physical~Sciences}}]{Heidari2020ApplyingAR}%
  \BibitemOpen
  \bibfield  {author} {\bibinfo {author} {\bibfnamefont {M.}~\bibnamefont {Heidari}}, \bibinfo {author} {\bibfnamefont {S.}~\bibnamefont {Lakshmivarahan}}, \bibinfo {author} {\bibfnamefont {S.}~\bibnamefont {Mirniaharikandehei}}, \bibinfo {author} {\bibfnamefont {G.}~\bibnamefont {Danala}}, \bibinfo {author} {\bibfnamefont {S.~K.~R.}\ \bibnamefont {Maryada}}, \bibinfo {author} {\bibfnamefont {H.}~\bibnamefont {Liu}}, \bibinfo {author} {\bibfnamefont {B.}~\bibnamefont {Zheng}}, \bibinfo {author} {\bibfnamefont {S.}~\bibnamefont {of~Electrical}}, \bibinfo {author} {\bibfnamefont {C.}~\bibnamefont {Engineering}}, \bibinfo {author} {\bibfnamefont {U.}~\bibnamefont {of~Oklahoma}}, \bibinfo {author} {\bibnamefont {Norman}}, \bibinfo {author} {\bibnamefont {Ok}}, \bibinfo {author} {\bibnamefont {Usa}},\ and\ \bibinfo {author} {\bibfnamefont {S.}~\bibnamefont {of~Physical~Sciences}},\ }\bibfield  {title} {\bibinfo {title} {Applying a random projection algorithm to optimize machine learning model for breast lesion
  classification},\ }\href@noop {} {\bibfield  {journal} {\bibinfo  {journal} {IEEE Transactions on Biomedical Engineering}\ }\textbf {\bibinfo {volume} {68}},\ \bibinfo {pages} {2764} (\bibinfo {year} {2020})}\BibitemShut {NoStop}%
\bibitem [{\citenamefont {Oh}\ and\ \citenamefont {Kais}(2023)}]{PhysRevA.107.022610}%
  \BibitemOpen
  \bibfield  {author} {\bibinfo {author} {\bibfnamefont {S.}~\bibnamefont {Oh}}\ and\ \bibinfo {author} {\bibfnamefont {S.}~\bibnamefont {Kais}},\ }\bibfield  {title} {\bibinfo {title} {Comparison of quantum advantage experiments using random circuit sampling},\ }\href {https://doi.org/10.1103/PhysRevA.107.022610} {\bibfield  {journal} {\bibinfo  {journal} {Phys. Rev. A}\ }\textbf {\bibinfo {volume} {107}},\ \bibinfo {pages} {022610} (\bibinfo {year} {2023})}\BibitemShut {NoStop}%
\bibitem [{\citenamefont {Oh}\ and\ \citenamefont {Kais}(2022)}]{Oh_2022}%
  \BibitemOpen
  \bibfield  {author} {\bibinfo {author} {\bibfnamefont {S.}~\bibnamefont {Oh}}\ and\ \bibinfo {author} {\bibfnamefont {S.}~\bibnamefont {Kais}},\ }\bibfield  {title} {\bibinfo {title} {Statistical analysis on random quantum circuit sampling by sycamore and zuchongzhi quantum processors},\ }\bibfield  {journal} {\bibinfo  {journal} {Physical Review A}\ }\textbf {\bibinfo {volume} {106}},\ \href {https://doi.org/10.1103/physreva.106.032433} {10.1103/physreva.106.032433} (\bibinfo {year} {2022})\BibitemShut {NoStop}%
\bibitem [{Pad(2020)}]{Paddlequantum}%
  \BibitemOpen
  \href {https://github.com/PaddlePaddle/Quantum} {\bibinfo {title} {{Paddle Quantum}}} (\bibinfo {year} {2020})\BibitemShut {NoStop}%
\bibitem [{\citenamefont {Woodruff}(2014)}]{Woodruff_2014}%
  \BibitemOpen
  \bibfield  {author} {\bibinfo {author} {\bibfnamefont {D.~P.}\ \bibnamefont {Woodruff}},\ }\bibfield  {title} {\bibinfo {title} {Computational advertising: Techniques for targeting relevant ads},\ }\href {https://doi.org/10.1561/0400000060} {\bibfield  {journal} {\bibinfo  {journal} {Foundations and Trends{\textregistered} in Theoretical Computer Science}\ }\textbf {\bibinfo {volume} {10}},\ \bibinfo {pages} {1} (\bibinfo {year} {2014})}\BibitemShut {NoStop}%
\bibitem [{\citenamefont {Kandala}\ \emph {et~al.}(2017)\citenamefont {Kandala}, \citenamefont {Mezzacapo}, \citenamefont {Temme}, \citenamefont {Takita}, \citenamefont {Brink}, \citenamefont {Chow},\ and\ \citenamefont {Gambetta}}]{Kandala_2017}%
  \BibitemOpen
  \bibfield  {author} {\bibinfo {author} {\bibfnamefont {A.}~\bibnamefont {Kandala}}, \bibinfo {author} {\bibfnamefont {A.}~\bibnamefont {Mezzacapo}}, \bibinfo {author} {\bibfnamefont {K.}~\bibnamefont {Temme}}, \bibinfo {author} {\bibfnamefont {M.}~\bibnamefont {Takita}}, \bibinfo {author} {\bibfnamefont {M.}~\bibnamefont {Brink}}, \bibinfo {author} {\bibfnamefont {J.~M.}\ \bibnamefont {Chow}},\ and\ \bibinfo {author} {\bibfnamefont {J.~M.}\ \bibnamefont {Gambetta}},\ }\bibfield  {title} {\bibinfo {title} {Hardware-efficient variational quantum eigensolver for small molecules and quantum magnets},\ }\href {https://doi.org/10.1038/nature23879} {\bibfield  {journal} {\bibinfo  {journal} {Nature}\ }\textbf {\bibinfo {volume} {549}},\ \bibinfo {pages} {242} (\bibinfo {year} {2017})}\BibitemShut {NoStop}%
\end{thebibliography}
%apsrev4-2.bst 2019-01-14 (MD) hand-edited version of apsrev4-1.bst
%Control: key (0)
%Control: author (8) initials jnrlst
%Control: editor formatted (1) identically to author
%Control: production of article title (0) allowed
%Control: page (0) single
%Control: year (1) truncated
%Control: production of eprint (0) enabled
%

\widetext
\pagebreak
%\newpage
\appendix
\AtAppendix{\counterwithin{lemma}{section}}
%\usepackage{sectsty}

%\sectionfont{\fontsize{14}{14}\selectfont}

%\makeatletter
% \renewcommand\thesection{A\@arabic\c@section}
% \renewcommand\thetable{A\@arabic\c@table}
% \renewcommand \thefigure{A\@arabic\c@figure}
% \renewcommand{\theequation}{A.\arabic{equation}}
%\makeatother

\section{Proofs of theorems in the main text}

\label{PROOFS}
\label{Proof1}
\begin{theorem}
\label{Haartheorem}(Theorem \ref{Haarrp} in main manuscript)
Let $ U \in$ $\mathbb{U}(N)$ be sampled uniformly from the Haar measure($\mu_H$) and let $\vec{x_1}$, $\vec{x_2}$ $\in$ $\mathbb{R}^N$. Then the matrix $\Pi^{k\times N}$ obtained by considering any k rows of $U$ followed by multiplication with $\sqrt{\frac{N}{k}}$ satisfy 
\begin{equation}
     (1-\epsilon)|\vec{x_1}-\vec{x_2}|_2\leq|\Pi \cdot (\vec{x_1} -\vec{x_2})|_2\leq (1+\epsilon)|\vec{x_1}-\vec{x_2}|_2
\end{equation}
with probability greater than $(1- \frac{N-k}{4kN\epsilon^2})$ with $\epsilon \in R_{\ge 0}$ being the error threshold.
\end{theorem}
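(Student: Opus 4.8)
The plan is to reduce the two--sided $\ell_2$ estimate to a one--dimensional concentration statement about the squared norm of the projection, and then apply Chebyshev's inequality. Write $v := \vec{x_1}-\vec{x_2}\in\mathbb{R}^N$; by homogeneity it is enough to control $|\Pi v|_2^2$ relative to $|v|_2^2$. Since the Haar measure is invariant under left multiplication by any unitary (in particular by permutation matrices), the distribution of ``any $k$ rows of $U$'' equals that of the first $k$ rows, and $Uv$ is uniformly distributed on the sphere of radius $|v|_2$ in $\mathbb{C}^N$. Writing $w := Uv/|v|_2$ for a Haar--random unit vector, this gives
\[
|\Pi v|_2^2 \;=\; \frac{N}{k}\,|v|_2^2\sum_{i=1}^k |w_i|^2 \;=:\; \frac{N}{k}\,|v|_2^2\, S .
\]

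Next I would compute the first two moments of $S$. By permutation symmetry $\mathbb{E}[|w_i|^2]=1/N$, so $\mathbb{E}[S]=k/N$ and hence $\mathbb{E}\,|\Pi v|_2^2=|v|_2^2$: the projection is unbiased in squared norm, which is precisely the property highlighted after the JL lemma. For the variance, $(|w_1|^2,\dots,|w_N|^2)$ is a uniform point on the probability simplex, i.e.\ $\mathrm{Dirichlet}(1,\dots,1)$ (equivalently this is a second--moment Weingarten computation for Haar unitary entries), which yields $\mathrm{Var}(|w_i|^2)$ and a \emph{negative} covariance $\mathrm{Cov}(|w_i|^2,|w_j|^2)$. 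Summing over the $k$ diagonal and $k(k-1)$ off--diagonal terms gives $\mathrm{Var}(S)=\dfrac{k(N-k)}{N^2(N+1)}$, hence
\[
\mathrm{Var}\!\left(|\Pi v|_2^2\right) \;=\; \frac{N-k}{k(N+1)}\,|v|_2^4 \;\le\; \frac{N-k}{kN}\,|v|_2^4 .
\]

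Finally I would pass from the squared norm back to the norm. Chebyshev gives $\Pr\!\big[\,\big|\,|\Pi v|_2^2-|v|_2^2\,\big|>2\epsilon\,|v|_2^2\,\big]\le \mathrm{Var}(|\Pi v|_2^2)/(4\epsilon^2|v|_2^4)\le \tfrac{N-k}{4kN\epsilon^2}$, and on the complementary event one has $(1-\epsilon)^2|v|_2^2\le|\Pi v|_2^2\le(1+\epsilon)^2|v|_2^2$, so taking square roots delivers the asserted two--sided inequality with the stated probability; here one uses $\big|\,|\Pi v|_2^2-|v|_2^2\big| = \big|\,|\Pi v|_2-|v|_2\big|\,\big(|\Pi v|_2+|v|_2\big)$. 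The main obstacle, and essentially the only place genuine computation is needed, is the variance evaluation: one must handle the fourth moments / covariances of the coordinates of a Haar--random unit vector correctly, and it is the \emph{negativity} of the cross--covariances that produces the favorable $(N-k)$ factor instead of a bound degrading with $k$. A secondary point to handle carefully is the square--root step, since $|\Pi v|_2$ can in principle be close to $0$: the lower--tail deviation of $|\Pi v|_2$ is only controlled at scale $\epsilon$ (not $2\epsilon$), so one either uses one--sided (Cantelli) estimates on each tail separately or accepts a slightly more generous constant in the final probability.
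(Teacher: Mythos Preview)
Your proposal is correct and follows essentially the same strategy as the paper: normalize $v=\vec{x_1}-\vec{x_2}$, use left-invariance of the Haar measure to get $\mathbb{E}[|\Pi v|_2^2]=|v|_2^2$, compute the variance of the squared projected norm via the fourth moments of a Haar-random unit vector, and then apply Chebyshev with deviation $2\epsilon$ before passing to the unsquared norm. The one substantive difference is in the variance computation: you invoke the $\mathrm{Dirichlet}(1,\dots,1)$ law of $(|w_1|^2,\dots,|w_N|^2)$ to get the exact value $\mathrm{Var}(|\Pi v|_2^2)=\tfrac{N-k}{k(N+1)}|v|_2^4$, whereas the paper cites the anti-concentration identity $\sum_i \mathbb{E}[|y_i|^4]=2/N$ ``at large $N$'' together with the constraint $\sum_i|y_i|^2=1$ and permutation symmetry, arriving at $\tfrac{N-k}{k(N-1)}$; your route is slightly cleaner and yields a strictly smaller variance, so the subsequent bound $\le\tfrac{N-k}{kN}$ is rigorous for you while the paper treats it as a large-$N$ scaling. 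The square-root issue you flag at the end (that $|\Pi v|_2^2\ge 1-2\epsilon$ does not quite give $|\Pi v|_2\ge 1-\epsilon$) is present in the paper's proof as well and is handled there in the same informal way.
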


\begin{proof}
Let $P_k$ be the projector operator to any k basis states of the Hilbert space. Without loss of generality, let us consider $\vec{x_1}-\vec{x_2} = \vec{v}$ to be of unit norm. The components of such $\vec{x_1} - \vec{x_2}$ will be $v_1,v_2,v_3,...,v_N$ such that $\Sigma_i |v_i|^2 = 1$.Then,\\ 
\begin{equation}
    |\Pi \cdot \vec{v}|_2^2 = \frac{N}{k} \sum_{i=1}^k |y_{i}|^2
\end{equation}
where $y_1,y_2,...,y_k$ are k components of the vector $U.(\vec{x_1}-\vec{x_2})$ depending on the $P_k$'s projection subspace. Then, \begin{equation}
   E_{U\sim \mu_H}[|y_i|^2] = E_{U\sim \mu_H}[|y_j|^2] 
   \label{Mean}
\end{equation} $\forall i,j$ because we know that the Haar measure satisfies the translational invariance
\begin{equation}
    \int_{U \sim \mu_H} f(U) d\mu(U) = \int_{U \sim \mu_H} f(V U) d\mu (U)
    \label{Haar}
\end{equation} when V is an element belonging to the $U(N)$ group. By choosing $V$ to be the matrix that switches $i$-th and $j$-th components of an $N$-dimensional complex vector, we get the required relation Eq. \ref{Mean}. Now using the fact that $\vec{v}$ is unit norm, we get 
\begin{equation}
   E_{U\sim \mu_H}[|\Pi \cdot \vec{v}|^2_2] = 1
   \label{1st order}
\end{equation}
Now, to show that sampling $U$ from the Haar measure produces a low distortion ($\epsilon$) with a very high probability, we should look at the variance of the projected norm. We should essentially show that $ P_{U \sim \mu_H }((1- 2\epsilon)| \leq |\Pi \cdot \vec{v}|^2_2 \leq (1+ 2\epsilon))$ with a very high probability. Consider Variance $Var_{\mu_H}$
\begin{eqnarray}
Var_{\mu_H} =E_{U\sim \mu_H}[(\sum_{i=1}^k|y_{i}|^2)^2] - E^2_{U\sim \mu_H}(\sum_{i=1}^k|y_{i}|^2) \\
= E_{U\sim \mu_H}[\sum_{i=1}^k|y_{i}|^4] + E_{U\sim \mu_H}[\sum_{i \neq j}^k|y_{i}|^2 |y_{j}|^2] - \frac{k^2}{N^2} \label{Var_Haar}
\end{eqnarray}

We know that the Haar measure and quantum circuits which are 2-designs (with $O(\log(N)$) depth anti concentrate (\cite{Hangleiter_2018,Dalzell_2022}). Specifically for Haar measure at large N, we have
\begin{equation}
   \sum_i^N E_{U\sim \mu_H}[|y_i|^4] = \frac{2}{N}
   \label{fourth}
\end{equation}
Using $(\sum_{i=1}^N(|y_{i}|^2))^2=1$, we get 
\begin{equation}
    E_{U\sim \mu_H}[\sum_{i \neq j}^N(|y_i|^2 |y_j|^2)] =\frac{N-2}{N}
\end{equation}

Then, similar to our previous arguments, we can show that 
\begin{equation}
    E_{U\sim \mu_H}[(|y_{i}|^4)] = E_{U\sim \mu_H}[(|y_{j}|^4)] 
    \label{fourthequal}
\end{equation}
$\forall$ $i,j$ and 
\begin{equation}
    E_{U\sim \mu_H}[(|y_{i}|^2|y_{j}|^2)] = E_{U\sim \mu_H}[(|y_{k}|^2|y_{l}|^2)] 
    \label{twotwoequal}
\end{equation}
 $\forall(i,j)$ and $k,l$. (Eqs. \ref{fourthequal},\ref{twotwoequal}) can be proved by using V in Eq.\ref{Haar} to be $i \leftrightarrow j$ basis switch and $(i,j) \leftrightarrow (k,l)$ basis switch. Eqs. \ref{fourthequal},\ref{twotwoequal} allows us to write Eq.\ref{Var_Haar} as,\\
\begin{equation}
    Var_{U \sim \mu_H} = \frac{2k}{N^2} + \frac{N-2}{N}\frac{k^2-k}{N^2-N} - \frac{k^2}{N^2} = \frac{(N-k)k}{N^2(N-1)}
\end{equation}
The variance for the $|\Pi \cdot \vec{v}|^2_2$ would be $\frac{N-k}{k(N-1)}$ which scales as $O(\frac{N-k}{Nk})$ for large $N,k$. Using the derived variance and the deviation from the mean to be $2 \epsilon$  gives (using Chebysev's inequality)
\begin{equation}
    P_{U\sim \mu_H}[||\Pi \cdot \vec{v}|_2-1| \leq \epsilon] \geq (1- \frac{N-k}{4 kN\epsilon^2})
\end{equation}

\end{proof}

\begin{theorem}(Theorem \ref{2designrp} in main manuscript)
Let $ U \in$ $\mathbb{U}(N)$ be sampled uniformly from the $\alpha$ approximate unitary 2- design ($\mu_{2,\alpha}$) and let $\vec{x_1}$, $\vec{x_2}$ $\in$ $\mathbb{R}^N$. Then the matrix $\Pi^{k\times N}$ obtained by considering any k rows of $U$ followed by multiplication with $\sqrt{\frac{N}{k}}$ satisfy
\begin{equation}
     (1-\epsilon)|\vec{x_1}-\vec{x_2}|_2\leq|\Pi \cdot (\vec{x_1} -\vec{x_2})|_2\leq (1+\epsilon)|\vec{x_1}-\vec{x_2}|_2
\end{equation}
with a probability greater than $1- (\frac{N-k}{4 kN\epsilon^2} + \frac{\alpha}{4\epsilon^2 k})$ with $\epsilon \in R_{\ge 0}$ being the error threshold.
\end{theorem}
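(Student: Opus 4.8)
The plan is to run the proof of Theorem~\ref{Haarrp} essentially unchanged, but to isolate the single point where the Haar property is used --- the first and second moments of the squared projected length --- and to replace the exact Haar values by the Haar values plus an error that the diamond-norm condition defining an $\alpha$-approximate $2$-design controls.

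Concretely, as in Theorem~\ref{Haarrp}, normalize $\vec v=\vec x_1-\vec x_2$ so that $|\vec v|_2=1$, let $P_k=\sum_{i\in S}\ket{i}\!\bra{i}$ be the rank-$k$ projector onto the chosen $k$ rows, and set $f(U):=|\Pi\cdot\vec v|_2^2=\tfrac{N}{k}\,\bra{v}U^{\dagger}P_kU\ket{v}$. The useful observation is that $f(U)$ is a polynomial of degree $(1,1)$ in the entries of $U,U^{\dagger}$, while $f(U)^2=\tfrac{N^2}{k^2}\,\bra{vv}(U^{\dagger})^{\otimes 2}P_k^{\otimes 2}U^{\otimes 2}\ket{vv}$ has degree $(2,2)$; cyclicity of the trace then gives $\mathbb{E}[f]=\tfrac{N}{k}\operatorname{tr}\!\big[P_k\,\Phi^{(1)}_{\mathcal{U}(N)}(\ket{v}\!\bra{v})\big]$ and $\mathbb{E}[f^2]=\tfrac{N^2}{k^2}\operatorname{tr}\!\big[P_k^{\otimes 2}\,\Phi^{(2)}_{\mathcal{U}(N)}(\ket{vv}\!\bra{vv})\big]$. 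Hence both moments of $f$ depend on the sampling distribution only through its first and second moment operators.

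Now write $\Phi^{(t)}_{\mu_{2,\alpha}}=\Phi^{(t)}_{\mu_H}+\Delta^{(t)}$. The $\Phi^{(t)}_{\mu_H}$ pieces reproduce \emph{verbatim} the Haar computation of Theorem~\ref{Haarrp}: $\mathbb{E}_{\mu_H}[f]=1$ and $\operatorname{Var}_{\mu_H}[f]=\tfrac{N-k}{k(N-1)}=O\!\big(\tfrac{N-k}{Nk}\big)$. For the corrections, the defining bound $\|\Delta^{(2)}\|_{\diamond}\le\alpha/N^{2}$, together with $\|\Delta^{(2)}(X)\|_1\le\|\Delta^{(2)}\|_{\diamond}\|X\|_1$, $\|\ket{vv}\!\bra{vv}\|_1=1$, and $\|P_k^{\otimes 2}\|_{\infty}=1$, gives via H\"older's inequality $\big|\mathbb{E}_{\mu_{2,\alpha}}[f^2]-\mathbb{E}_{\mu_H}[f^2]\big|\le\tfrac{N^2}{k^2}\cdot\tfrac{\alpha}{N^2}$, i.e.\ an extra contribution to the second moment (hence to the variance) of order $\alpha/k^{2}$, in particular at most $\alpha/k$. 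An analogous estimate --- writing $\Delta^{(1)}(\rho)=\operatorname{tr}_2\!\big[\Delta^{(2)}(\rho\otimes I/N)\big]$, so that $\Delta^{(1)}$ inherits the bound $\alpha/N^{2}$ --- shows $|\mathbb{E}_{\mu_{2,\alpha}}[f]-1|=O(\alpha/(Nk))$, negligible next to the variance terms. Adding the correction to $\operatorname{Var}_{\mu_H}[f]$ and applying Chebyshev's inequality with deviation $2\epsilon$, exactly as in Theorem~\ref{Haarrp} (so that $\big|\,|\Pi\cdot\vec v|_2-1\,\big|>\epsilon$ has probability at most $\operatorname{Var}_{\mu_{2,\alpha}}[f]/(4\epsilon^2)$), yields the claimed success probability $1-\big(\tfrac{N-k}{4kN\epsilon^2}+\tfrac{\alpha}{4\epsilon^2 k}\big)$, the second term absorbing the approximate-design correction.

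The one step that needs care --- and hence the ``main obstacle'' --- is the norm bookkeeping in the correction estimate. One must arrange $\mathbb{E}[f^2]$ (via cyclicity) so that $\Delta^{(2)}$ acts on the \emph{trace-one} state $\ket{vv}\!\bra{vv}$ with $P_k^{\otimes 2}$ playing the role of the bounded observable, rather than the cyclically equivalent but useless arrangement in which $\Delta^{(2)}$ is applied to $P_k^{\otimes 2}$; then $\|\Delta^{(2)}(\ket{vv}\!\bra{vv})\|_1\le\|\Delta^{(2)}\|_{\diamond}$ is immediate and only $\|P_k^{\otimes 2}\|_{\infty}=1$ enters --- which is exactly why the rank of $P_k$ is irrelevant and the bound holds for \emph{any} choice of $k$ rows with the same $\alpha$-dependence. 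The remaining ingredients, the partial-trace control of $\Delta^{(1)}$ and the Chebyshev passage from $|\Pi\cdot\vec v|_2^2$ to $|\Pi\cdot\vec v|_2$, are routine and identical to Theorem~\ref{Haarrp}.
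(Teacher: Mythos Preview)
Your proposal follows the same skeleton as the paper: bound the deviation of the first two moments of $f(U)=|\Pi\vec v|_2^2$ from their Haar values using the $2$-design condition, infer a variance bound, and apply Chebyshev with deviation $2\epsilon$ exactly as in Theorem~\ref{Haarrp}. The paper organizes the moment calculation slightly differently: it packages the variance in one shot via $M=(\ket{v}\bra{v}-I/N)^{\otimes 2}$, so that $\operatorname{tr}\!\big[P_k^{\otimes 2}\Phi^{(2)}(M)P_k^{\otimes 2}\big]$ equals the (unscaled) variance directly, and then bounds the deviation by invoking the \emph{monomial} definition of an approximate $2$-design together with a factor $|P_k^{\otimes 2}|_2=k$, obtaining $k\alpha/N^2$ for the unscaled variance difference and hence $\alpha/k$ after the $(N/k)^2$ rescaling. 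Your route --- applying $\Delta^{(2)}$ to the trace-one state $\ket{vv}\bra{vv}$ and using H\"older with $\|P_k^{\otimes 2}\|_\infty=1$ against the diamond-norm bound --- is a bit cleaner and in fact yields the tighter correction $\alpha/k^2$, which you then relax to $\alpha/k$; you also track the first-moment shift $|\mathbb E_{\mu_{2,\alpha}}[f]-1|=O(\alpha/(Nk))$ explicitly, whereas the paper tacitly takes it to be zero. Either bookkeeping delivers the stated probability bound.
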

\label{Proof2}

\begin{proof}

Define a function 
\begin{equation}
    M(U) = (\ket{v}\bra{v} - \frac{1}{N})^{\otimes 2}
\end{equation}
Let $Var_{\mu_{2,\alpha}}$ be defined as follows
\begin{eqnarray}
Var_{\mu_{2,\alpha}} =E_{U\sim \mu_{2,\alpha}}[(\sum_{i=1}^k|y_{i}|^2)^2] - E^2_{U\sim \mu_{2,\alpha}}(\sum_{i=1}^k|y_{i}|^2) \\
= E_{U\sim \mu_{2,\alpha}}[\sum_{i=1}^k|y_{i}|^4] + E_{U\sim \mu_{2,\alpha}}[\sum_{i \neq j}^k|y_{i}|^2 |y_{j}|^2] - \frac{k^2}{N^2} \label{Var_2design}
\end{eqnarray}

In terms of $M(U)$ we can write Eq. \ref{Var_Haar} and Eq.\ref{Var_2design} as 
\begin{equation}
    Tr(E_{U \sim \mu_{2,\alpha}}[P_k^{\otimes 2}U^{\otimes 2}(M(U))(U^\dagger)^{\otimes 2}P_k^{\otimes 2}]) = Var_{\mu_{2,\alpha}}
\end{equation}
\begin{equation}
    Tr(E_{U \sim \mu_H}[P_k^{\otimes 2}U^{\otimes 2}(M(U))(U^\dagger)^{\otimes 2}P_k^{\otimes 2}]) = Var_{\mu_H}
\end{equation}
Then,
\begin{equation}
    |Var_{\mu_{2,\alpha}}-Var_{\mu_H}| = 
    |Tr(E_{U \sim \mu_{2,\alpha}}[P_k^{\otimes 2}U^{\otimes 2}(M(U))(U^\dagger)^{\otimes 2}P_k^{\otimes 2}]
      - E_{U \sim \mu_H}[P_k^{\otimes 2}U^{\otimes 2}(M(U))(U^\dagger)^{\otimes 2}P_k^{\otimes 2}])|
\end{equation}
\begin{equation*}
        \leq |P_k^{\otimes 2}|_2|E_{U \sim \mu_{2,\alpha}}[U^{\otimes 2}(M(U))(U^\dagger)^{\otimes 2}]-E_{U \sim \mu_H}[U^{\otimes 2}(M(U))(U^\dagger)^{\otimes 2}]|
\end{equation*}
\begin{equation}
    \leq k \frac{\alpha}{N^2}
    \label{v_bound}
\end{equation}
    where we used the monomial definition of approximate unitary 2-designs and its equivalence (See \cite{low2010pseudorandomness}) to the diamond norm definition used in the main text.
Using the theorem \ref{Haartheorem}, we get 
\begin{equation}
    Var_{\mu_{2,\alpha}} \geq \frac{(N-k)k}{N^2(N-1)} - k \frac{\alpha}{N^2} 
\end{equation}
The variance for the $|\Pi \cdot \vec{v}|^2_2$ would be greater than $\frac{N-k}{k(N-1)} - \frac{\alpha}{k} $ which scales as $O(\frac{N-k}{Nk}) - \frac{\alpha}{k} $ for large $N,k$. Using the derived variance (upper bound from Eq.\ref{v_bound}) and the deviation from the mean to be $2 \epsilon$   gives (using Chebysev's inequality )
\begin{equation}
    P_{U\sim \mu_H}[||\Pi \cdot \vec{v}|_2-1| \leq \epsilon] \geq 1- (\frac{N-k}{4 kN\epsilon^2} + \frac{\alpha}{4\epsilon^2 k})
\end{equation}
\end{proof}

\begin{theorem}
(Theorem \ref{entropy accuracy} in main manuscript)
For a random matrix $\Pi$ satisfying the Johnson-Lindenstrauss (JL) lemma with a distortion $\epsilon \sqrt{\delta}$, where $\epsilon \leq 1/6$ and $\delta \leq 1/2$, the difference in the Von Neumann entropy of a density matrix $\rho$ computed using the random projection with $\Pi$ (denoted as $\tilde{S}(\rho)$) and the true entropy ($S(\rho)$) can be bounded as follows
\begin{equation}
    |{\tilde{S}(\rho) - S(\rho)}| \leq \sqrt{3\epsilon}S(\rho) + \sqrt{\frac{9}{2}}\epsilon
\end{equation}
with probability at least ($1-\delta$).
\end{theorem}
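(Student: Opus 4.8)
The plan is to split the argument into a deterministic spectral–stability part and a probabilistic part, and then to push the spectral bound through the modulus of continuity of $h(x)=-x\ln x$.

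First I would fix the eigendecomposition $\rho=\sum_{i=1}^{r}p_i\ket{u_i}\bra{u_i}$ with $r=\operatorname{rank}(\rho)\le k$ and observe that the singular values $\tilde p_1\ge\dots\ge\tilde p_r\ge 0$ returned after projecting are the square roots of the nonzero eigenvalues of the $k\times k$ matrix $\Pi\rho^{2}\Pi^{\dagger}$, which coincide with those of $\Lambda_r G\Lambda_r$, where $\Lambda_r=\operatorname{diag}(p_1,\dots,p_r)$ and $G_{ij}=\braket{u_i|\Pi^{\dagger}\Pi|u_j}$ is the Gram matrix of the images $\Pi\ket{u_i}$. The key claim to establish is the approximate–isometry (subspace–embedding) bound $\lVert G-\mathbb{I}_r\rVert\le c\,\epsilon\sqrt{\delta}$ for an absolute constant $c$: I would derive it by applying the JL distance guarantee (with distortion $\epsilon\sqrt{\delta}$) to the vectors $\ket{u_i}$ and $\ket{u_i}\pm\ket{u_j}$ and invoking a polarization identity to turn preserved norms into preserved inner products, then promoting this to a uniform estimate over the unit sphere of $\operatorname{range}(\rho)$ with a standard $\gamma$-net. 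Courant--Fischer then yields $(1-c\epsilon\sqrt{\delta})\,p_i^{2}\le\tilde p_i^{2}\le(1+c\epsilon\sqrt{\delta})\,p_i^{2}$, i.e. $\tilde p_i=(1+\xi_i)p_i$ with $|\xi_i|\le c'\epsilon\sqrt{\delta}$; the hypothesis $\delta\le 1/2$ is exactly what keeps $1-c\epsilon\sqrt{\delta}>0$ and the square-root remainder small.

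Next I would convert the perturbation of the spectrum into a perturbation of the entropy. Using the identity $h((1+\xi)x)=(1+\xi)h(x)-(1+\xi)x\ln(1+\xi)$ and $\sum_i h(p_i)=S(\rho)$ gives the exact relation
\begin{equation}
\tilde S(\rho)-S(\rho)=\sum_i\xi_i\,h(p_i)\;-\;\sum_i(1+\xi_i)\,p_i\ln(1+\xi_i).
\end{equation}
Since $h\ge 0$ on $[0,1]$, the first sum is at most $\max_i|\xi_i|\,S(\rho)$; the second is a trace-defect term bounded by $\max_i|\ln(1+\xi_i)|\cdot\sum_i\tilde p_i$, and $\sum_i\tilde p_i\le\sqrt{1+c\epsilon\sqrt{\delta}}$ from the upper bound of the previous step. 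Expanding $\ln(1+\xi_i)$ and $\sqrt{1\pm c\epsilon\sqrt{\delta}}$ to first order and using $\epsilon\le 1/6$ together with $\delta\le 1/2$ to absorb the higher-order remainders into the constants, both sums become explicit multiples of $\epsilon\sqrt{\delta}$, which I would then close up against the stated envelope $\sqrt{3\epsilon}\,S(\rho)+\sqrt{9/2}\,\epsilon$.

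Finally, the probability: the only randomness is whether $\Pi$ genuinely achieves the claimed distortion on all the vectors used in the spectral step. The $\gamma$-net contains $(C/\gamma)^{r}$ points, and Theorem~\ref{Haarrp} (or Theorem~\ref{2designrp} in the approximate-design case) bounds the single-vector failure probability by a quantity of order $1/(k\epsilon^{2})$; choosing the reduced dimension $k$ so that the union bound over the net is at most $\delta$ is precisely the quantitative meaning of phrasing the assumption as ``distortion $\epsilon\sqrt{\delta}$''. The main obstacle I anticipate is the spectral step: obtaining a \emph{rank-free} bound $\lVert G-\mathbb{I}_r\rVert=O(\epsilon\sqrt{\delta})$ instead of one that degrades like $r\,\epsilon\sqrt{\delta}$ (hence the net argument, or an equivalent matrix-concentration input), and then tracking the constants through the square-root and logarithm expansions tightly enough to land on exactly $\sqrt{3\epsilon}$ and $\sqrt{9/2}\,\epsilon$ under the stated ranges of $\epsilon$ and $\delta$.
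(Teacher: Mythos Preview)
Your overall architecture --- reduce to a multiplicative perturbation of the singular values via a Gram-matrix/subspace-embedding bound, then push that perturbation through $h(x)=-x\ln x$ --- matches the paper. The entropy step in particular (your identity $h((1+\xi)x)=(1+\xi)h(x)-(1+\xi)x\ln(1+\xi)$ and the two resulting sums) is essentially the same as the paper's computation.

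The substantive divergence is in how the operator-norm bound on $G-\mathbb{I}_r$ is obtained and how the hypothesis ``distortion $\epsilon\sqrt{\delta}$'' is read. The paper interprets that hypothesis as the \emph{JL second-moment property}
\[
E_{\Pi}\bigl[\,\bigl||\Pi x|_2^2-1\bigr|^2\,\bigr]\le \epsilon^2\delta,
\]
and then invokes a ready-made subspace-embedding result (Theorem~13 of \cite{Woodruff_2014}) which converts this directly into $\|W^{T}\Pi^{T}\Pi W-I\|_2\le 3\epsilon$ with probability at least $1-\delta$. The ``$3$'' there is precisely the origin of the constants $\sqrt{3\epsilon}$ and $\sqrt{9/2}\,\epsilon$; note in particular that the $\sqrt{\delta}$ factor is \emph{absorbed into the failure probability} via Markov/trace arguments and does not survive in the spectral bound. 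The remaining eigenvalue comparison is done via an Ostrowski-type perturbation (Theorem~13 of \cite{kontopoulou2020randomized}) applied to $\Sigma_p(I+E)\Sigma_p$.

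Your plan instead tries to manufacture the subspace embedding by hand (polarization on $\ket{u_i}\pm\ket{u_j}$ plus a $\gamma$-net) and then to recover the probability by union-bounding over the net using the Chebyshev-type Theorems~\ref{Haarrp}/\ref{2designrp}. There is a real obstacle here: those theorems give only a \emph{polynomially} small single-vector failure probability $O\bigl(1/(k\epsilon^{2})\bigr)$, so the union bound over $(C/\gamma)^{r}$ net points forces $k$ to scale exponentially in the rank $r$ --- exactly the ``rank-free'' issue you flag but do not resolve. The paper's moment-property route sidesteps this completely. Relatedly, your spectral bound would come out as $\|G-\mathbb{I}_r\|\le c\,\epsilon\sqrt{\delta}$ rather than $3\epsilon$, so you would not land on the stated constants $\sqrt{3\epsilon}$ and $\sqrt{9/2}\,\epsilon$; you would get a bound of the form $c'\epsilon\sqrt{\delta}\,S(\rho)+c''\epsilon\sqrt{\delta}$, nominally tighter but with different numerics and a different (and, as written, incorrect) reading of the hypothesis.
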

\label{Proof3}
 To prove the theorem we will need to use theorems 10 and 13 of \cite{kontopoulou2020randomized}. According to theorem 13, let $\textbf{DAD}$ be a symmetric positive definite matrix such that $\textbf{D}$ is a diagonal matrix and $A_{ii} = 1$ for all $i$. Also, let $\textbf{DED}$ be a perturbation matrix such that $|E|_2 \leq \lambda _{min} (A)$. Now, let $\lambda _i$ be the $i$-th eigenvalue of $\textbf{DAD}$ and let $\lambda_i^{'}$ be the eigenvalue of $\textbf{D(A+E)D}$.  Then, $ \forall i,$
\begin{equation}
    |\lambda_i - \lambda_i^{'}| \leq \frac{|E|_2}{\lambda_{min}(A)}
    \label{eig perturbation}
\end{equation}
 Now, consider a distribution $\mathcal{D}$ on matrices $\Pi \in \mathbb{R}^{k\times N}$ (or $\mathbb{R}^{N \times k}$ if data vectors get multiplied from the right) satisfying 
\begin{equation}
    E_{\Pi \sim \mathcal{D}}||\Pi x|^2_2 -1|^2 \leq \epsilon^2 \delta
    \label{JL moment}
\end{equation}
then, according to theorem 13 of \cite{Woodruff_2014}, for any orthonormal matrix $O$ with $N$ rows 
\begin{equation}
    Pr_{\Pi \sim D}[|O^T \Pi^T \Pi O - I|_2 \geq 3\epsilon] \leq \delta
    \label{2nd order JL}
\end{equation}\\

Having these two results, we can derive the bounds on the accuracy of entropy computed post-random projection. In this regard, let us look at a general semi-positive definite density matrix $\rho$ which could be written as $\rho = W \Sigma_p W^T$ where $W$ has orthonormal columns and $\Sigma_p$ is a diagonal matrix containing the singular values of $\rho$. We note that the eigenvalues of $\rho \rho^T = W \Sigma^2_p W^T$ are equal to the eigenvalues of the diagonal matrix $\Sigma_p^2$. Similarly, the eigenvalues of $W\Sigma_p W^T \Pi \Pi^T W \Sigma_p W^T$ are equal to the eigenvalues of $\Sigma_p W^T \Pi \Pi^T W \Sigma_p$. Now, using the Eq.\ref{eig perturbation}, we can compare the eigenvalues of the matrices
\begin{equation*}
    \Sigma_p I_k \Sigma_p \hspace{5 pt} \rm{and} \hspace{5 pt} \Sigma_p W^T \Pi \Pi^T W \Sigma_p
\end{equation*}
Using $E = W^T \Pi \Pi^T W - I_k$ in the Eq. \ref{2nd order JL}, we get that
\begin{equation}
    |E|_2 \leq 3\epsilon \le 1
    \label{E limit}
\end{equation}
with a probability greater than $1-\delta$.\\\\
The eigenvalues of $\Sigma_p I_k \Sigma_p$ are equal to $p_i^2$ for $i=1,..,k$ and the eigenvalues of the $\Sigma_p W^T \Pi \Pi^T W \Sigma_p$ are equal to $\tilde{p_i}^2$, where $\tilde{p_i}$ are the singular values of $\Sigma_p W^T \Pi$. These are exactly equal to the singular values of $W \Sigma_p W^T \Pi = \rho \Pi$. This along with Eqs.\ref{E limit},\ref{eig perturbation} leads us to conclude:
\begin{equation}
    |p_i^2 - \tilde{p_i}^2| \leq 3\epsilon p_i^2
\end{equation}
This result guarantees that the singular values of $\rho$ (from the main text) are captured with a perturbative factor of $3\epsilon$. Using this, we can find bounds to the error in the entropy computed after the random projection. We start with the upper bound
\begin{eqnarray*}
    \Sigma_{i=1}^k \tilde{p}_i \ln (\frac{1}{\tilde{p}_i}) \leq \Sigma_{i=1}^k (1+3\epsilon)^{1/2}p_i \ln (\frac{1}{(1-3\epsilon)^{1/2}p_i}) \\
    \leq (1+3\epsilon)^{1/2} S(\rho) + \frac{\sqrt{1+3\epsilon}}{2} \ln (\frac{1}{1-3\epsilon}) \\
    \leq (1 + 3\epsilon)^{1/2} S(\rho) + \frac{\sqrt{1+3\epsilon}}{2} \ln (1+6\epsilon)\\
    \leq (1+\sqrt{3\epsilon}) S(\rho) + \sqrt{\frac{9}{2}} \epsilon
\end{eqnarray*}
In the second to last inequality, we used $1/(1-3\epsilon) \leq (1+6\epsilon)$ for any $\epsilon \leq 1/6$, and in the last inequality we used $\ln (1+6\epsilon) \leq 6\epsilon$ for $0 \leq \epsilon \leq 1/6$. Similarly, we can find the lower bound to be
\begin{equation*}
    \Sigma_{i=1}^k \tilde{p}_i \ln (\frac{1}{\tilde{p}_i}) \geq (1 -\sqrt{3\epsilon}) S(\rho) - {\frac{3}{2}}\epsilon
\end{equation*}
Combining both that bounds we get the bound for the error in entropy obtained after random projection $\tilde{S}$ wrt true entropy $S$
\begin{equation}
    |\tilde{S}(\rho)-S(\rho)| \leq \sqrt{3\epsilon}S(\rho) + \sqrt{\frac{9}{2}}\epsilon 
\end{equation}

\def\mathunderline#1#2{\color{#1}\underline{{\color{black}#2}}\color{black}}
{\color{black} \section{Image reconstruction post-quantum random projection}
\label{imagereconsapp}
The figure \ref{fig: recons} was a demonstration of the reconstruction of one image vector of the dataset. Since the dimensionality reduction is for the dataset as a whole, here we attach the plot of average $|\vec{x}-\vec{x}_{recons}|$ for all the image vectors $\vec{x}$ (the subset of MNIST containing 1000 images) and their reconstructed vectors $\vec{x}_{recons}$ in Fig.\ref{Fig:quant_image_recons} for various reduced dimensions along with their $95 \% $ confidence intervals.

\begin{figure}[h!]
    \centering   \includegraphics[width=0.45\textwidth]{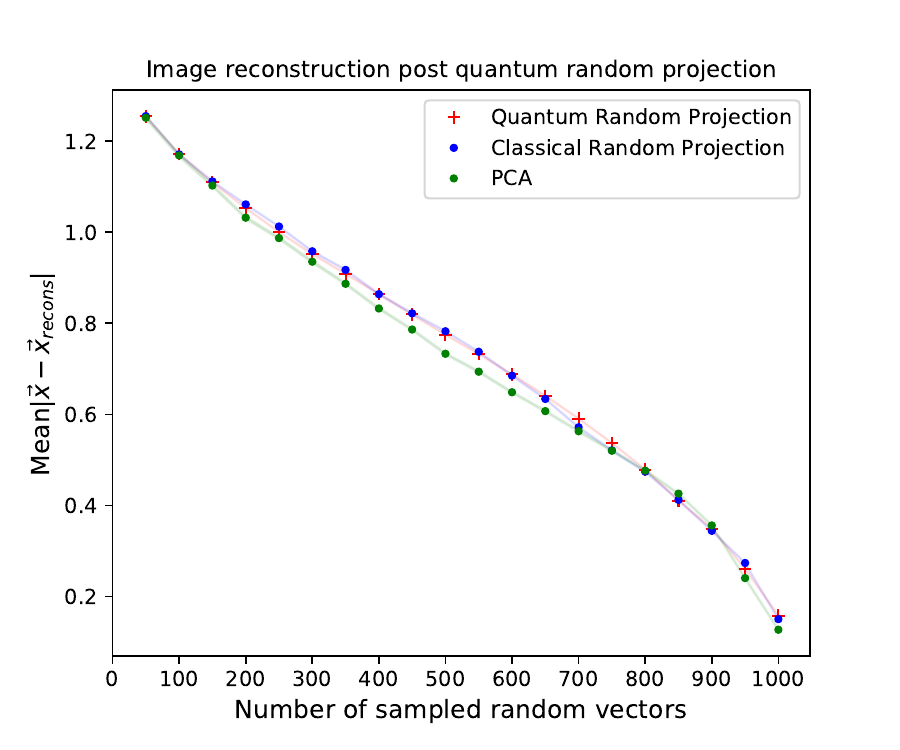}
    
    \label{Fig:quant_image_recons}
    \caption{The average norms of difference between image data vectors and the image vectors reconstructed from PCA, Classical Random Projection and Quantum Random Projection schemes for varying reduced dimensions.}
\end{figure}
The average norm of the difference between any two unit norm vectors in 1024 dimensional space is $\sqrt{2} \sim 1.414$. This sets a standard to compare the average norms plotted in the \ref{Fig:quant_image_recons}. \\\\
As mentioned in the experiments section, we assume that we can make arbitrary projection operators with any number of basis vectors, i.e, if the $P_k$ operator projects to the first $k$ basis state $(\ket{p_1},\ket{p_2},..,\ket{p_k})$ in any basis. Then, 
\begin{equation}
    P_k = \sum_{i=1} ^k \ket{p_i}\bra{p_i}
\end{equation}
where we don't have any restriction on what basis we pick and what values k can take. Quantum projectors are easy to construct when k takes values 128,256 and 512 (corresponding to measuring 1,2 and 3 qubits respectively). Though it is impractical to construct such quantum projection operators for $k=400,700$, we used those values only to compare the performance of quantum random projection against classical random projection which doesn't have any restrictions on the values $k$ can take. \\\\
For the specific numbers mentioned in the figure, we used the first 400 and 700 components of the wave vector after the random quantum circuit ($U$) and added extra basis states with zero amplitudes to make them 1024 dimensional. Then, the wavevectors are reconstructed by the action of the inverse of $U$ i.e., $U^\dagger{}$. We understand that this process is more insightful when the projection operators are just 1,2 or 3 qubit measurements to specific states ($\ket{0}$ or $\ket{1}$). There, if the quantum random projection is made such that one of the qubits is projected to say $\ket{0}$ state. Then, reconstruction is done by appending the $U^\dagger{}$ circuit to the reduced quantum state + measured qubit in $\ket{0}$ (This is equivalent to adding zeros to the basis elements where the measured qubit is in $\ket{1}$ state just like how we boosted dimensions from k=400 or 700 to 1024.)\\}
\section{Performance of quantum random projection on exponential decay profile}
\label{EXP}
\begin{figure}[htbp]
  \centering
  \includegraphics[width=0.8\textwidth]{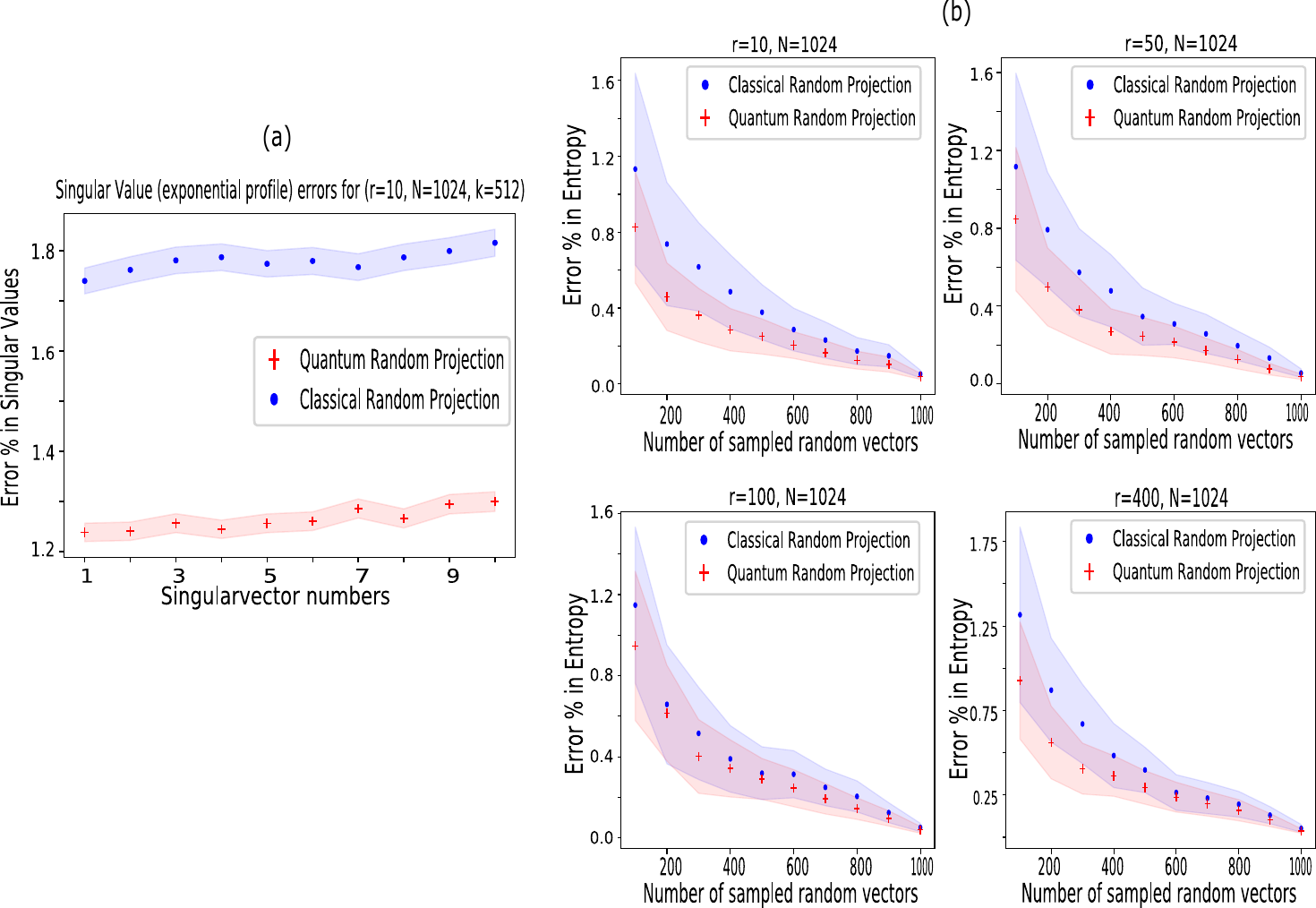}
  \caption{The figure contains the analogous plots discussed in the main text for the density matrices with exponential decay profile. (a) shows the accuracy with which the quantum random projector and the classical random projector pick the singular values of the density matrix for $r = 10$ when reducing the system size by half. The envelope represents 95 \% confidence intervals by running the experiments over 10,000 randomly generated density matrices. The error \% observed for exponential decay profile is similar to the values we obtained for the linear decay profile. (b) shows the accuracies of quantum random projection and classical random projection in the entropy computation of randomly generated density matrices of size N=1024 and ranks $r=10,50,100,400$ with exponentially decaying profile. The envelopes represent their 90 \% confidence intervals by running the experiments over 100 randomly generated density matrices. The accuracies remain constant because the singular values  don't vary much while varying the ranks owing to the rapid drop in successive singular values in an exponential decay profile.}
  \label{fig: exp}
\end{figure}

\section{Performance of quantum random projection on larger datasets}
\label{LARGER}
\begin{figure}[htbp]
  \centering
  \includegraphics[width=0.8\textwidth]{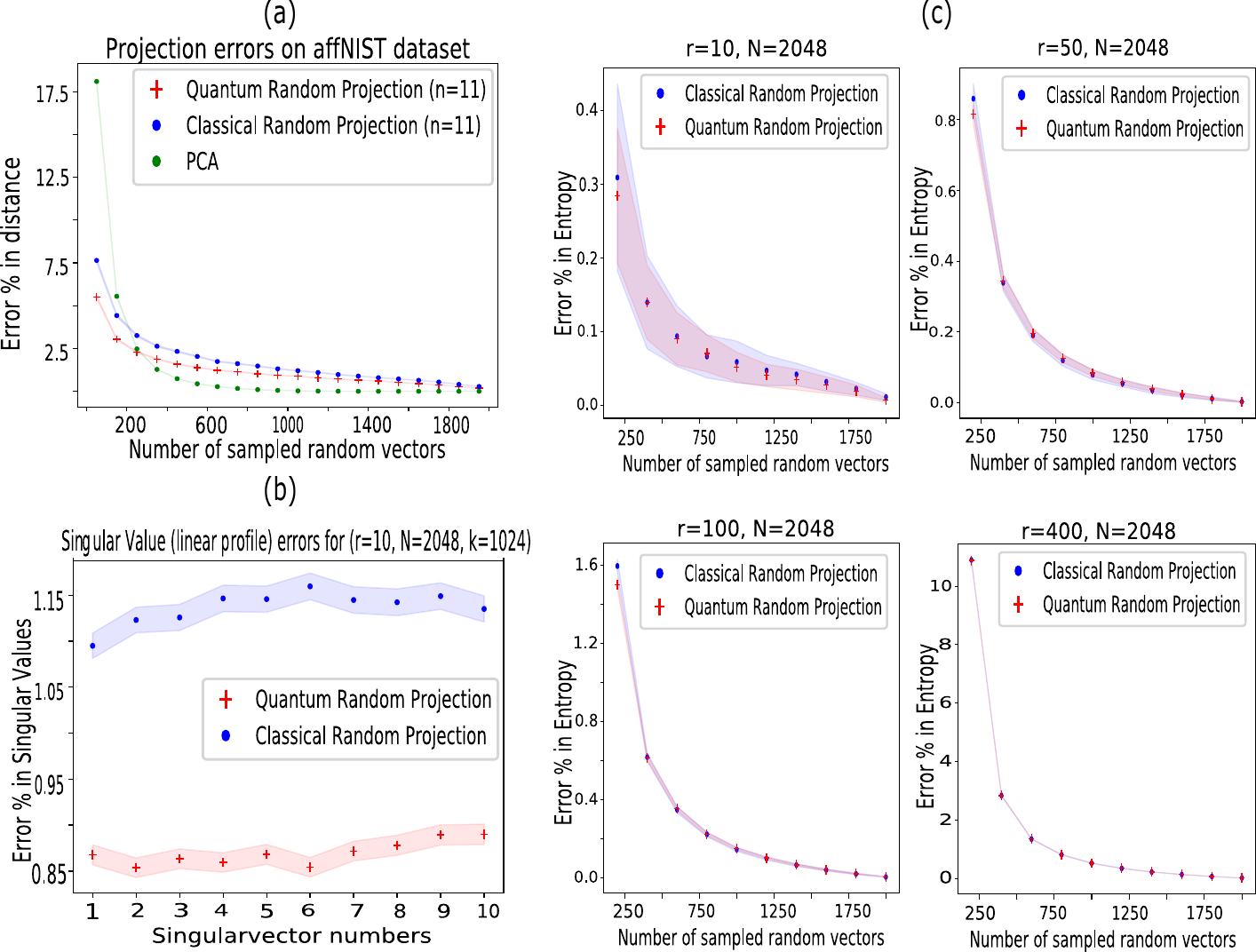}
  \caption{The figure contains the analogous plots discussed in the main text for system size $= 2048$ ($n=11$). The error \% observed for the $n=11$ is better than the values obtained for the $n=10$ case . This is because of improved efficiency in random projection for larger datasets as the error bounds derived in the theorems in our main text become tighter.  (a) shows the mean percentage errors in the distance between 10,000 different random pair of data vectors in the affNIST image dataset which are 40 $\times$ 40 images and was boosted to 2048 $\times$ 1 datavectors for our experiment. Here, we can clearly notice that at lower reduced dimensions, the random projection methods perform better than even PCA. Everywhere , the quantum random projection performs slightly better than the classical random projection. (b) shows the accuracy with which the quantum random projector and the classical random projector pick the singular values of the density matrix for $r = 10$ when reducing the system size by half. The envelope represents 95 \% confidence intervals by running the experiments over 10,000 randomly generated density matrices. (c) shows the accuracies of quantum random projection and classical random projection in the entropy computation of randomly generated density matrices of size N=2048 and ranks $r=10,50,100,400$  with linearly decaying singular value profile. The envelopes represent their 90 \% confidence intervals by running the experiments over 100 randomly generated density matrices. The accuracies improve with a decrease in the rank of the system just  like the $n=10$ case. However, the accuracies here are better than $n=10$ case due to the improved efficiency of random projection for larger datasets.}
  \label{fig: 11}
\end{figure}
\newpage

\section{Performance of quantum random projection constructed using lesser expressive local random quantum circuits}
\label{NON EXACT}
\begin{figure}[htbp]
  \centering
  \includegraphics[width=\textwidth]{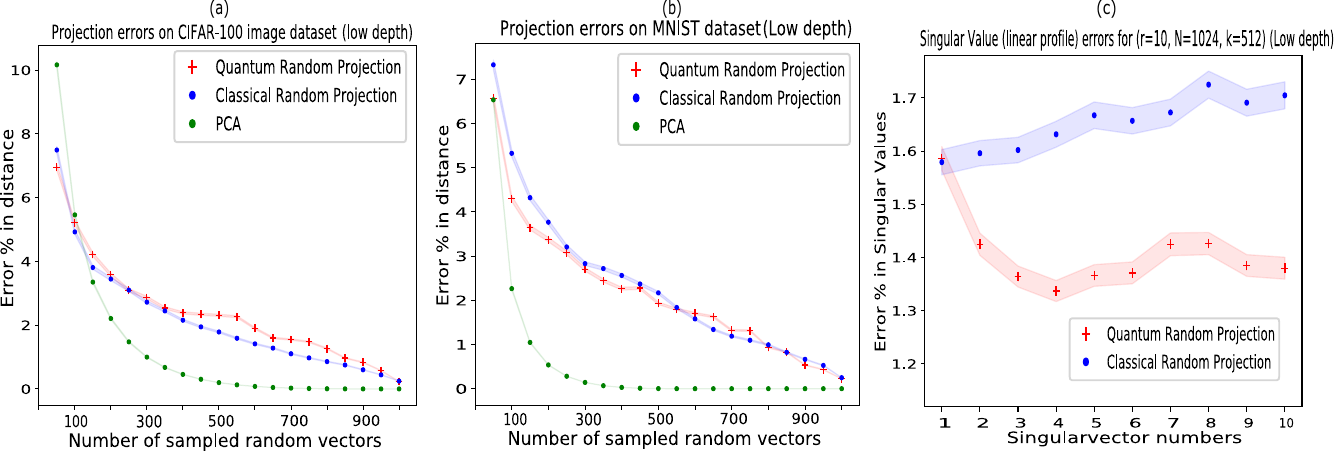}
  \caption{The figure contains the analogous plots discussed in the main text but the quantum random projector used here has been obtained from a local random quantum circuit far from the exact 2-design limit. This is our proxy for approximate 2-design. More specifically, we picked a depth of 50 in the quantum circuit shown in the Fig.\ref{fig:rqc} . Figures (a) and (b)  show the mean percentage errors in the distance between 10,000 different random pair of data vectors in the MNIST image dataset and CIFAR-100 image dataset which are 28 $\times$ 28 images and was boosted to 1024 $\times$ 1 data vectors for our experiment. The performance of this quantum random projector is not as good as the exact 2-design quantum random projector.  This is because the error bounds derived in the theorems for approximate unitary 2-designs are not as tight as an exact 2-design quantum random projector. Fig.(c)  shows the accuracy with which the approximate unitary 2-design quantum random projector and the classical random projector pick the singular values of the density matrix for $r = 10$ when reducing the system size by half. The envelope represents 95 \% confidence intervals by running the experiments over 10,000 randomly generated density matrices. The drop in performance of the quantum random projector here could be attributed to the error bounds in JL lemma not being as tight as the exact 2-design quantum random projector. }
  \label{fig: low}
\end{figure}
\newpage
\section{Details regarding the VQSVD performed to retrieve the dominant singular vectors}
\label{VQSVD App} 
The variational quantum singular value decomposition was used to retrieve the dominant singular vectors of a randomly generated data matrix with rank ($r=5$) and singular values following a linearly decaying profile. { \color{black} The matrix on which one has to perform quantum random projection and quantum SVD needs to be loaded as a sum of unitaries or Pauli strings (unit depth). This process has exponential complexity for an exact representation of the matrix. But with importance sampling (as mentioned in the \cite{Wang2021variationalquantum}) one only uses a subset of Pauli strings which approximates the matrix to sufficient accuracy. However, for our computation, just like the exact encoding scheme, we used the exact matrix. The demonstration of accurately retrieving the Singular vectors implies the same when the importance sampling creates the matrix with high accuracy.}\\

Since, the system size we used for this variational algorithm is large, we had to use the block initialisation strategy discussed in Ref.\cite{Grant2019initialization} with two identity blocks in our training ansatz to avoid the barren plateau issue \cite{McClean_2018}. Each block used in our ansatz has a hardware efficient ansatz circuit \cite{Kandala_2017} of depth 25 followed by it's inverse circuit (this part will remain an inverse circuit only at the beginning of the training. During training, the parameters of these two parts update independently) to start the training procedure close to identity and hence avoid the barren plateau. The accuracies with which we retrieved the singular values and the dominant singular vectors could also be improved with other ansatze which avoid barren plateau.\\

{\color{black} The projection operators in the figure are just measurements of certain qubits and making sure they are in a certain state ($\ket{0}$) or operators of the form $\frac{1}{2}(1+\sigma_z^i)$ which projects to $\ket{0}$ state of qubit $i$.}
\begin{figure}[htbp]
  \centering
  \includegraphics[width=\textwidth]{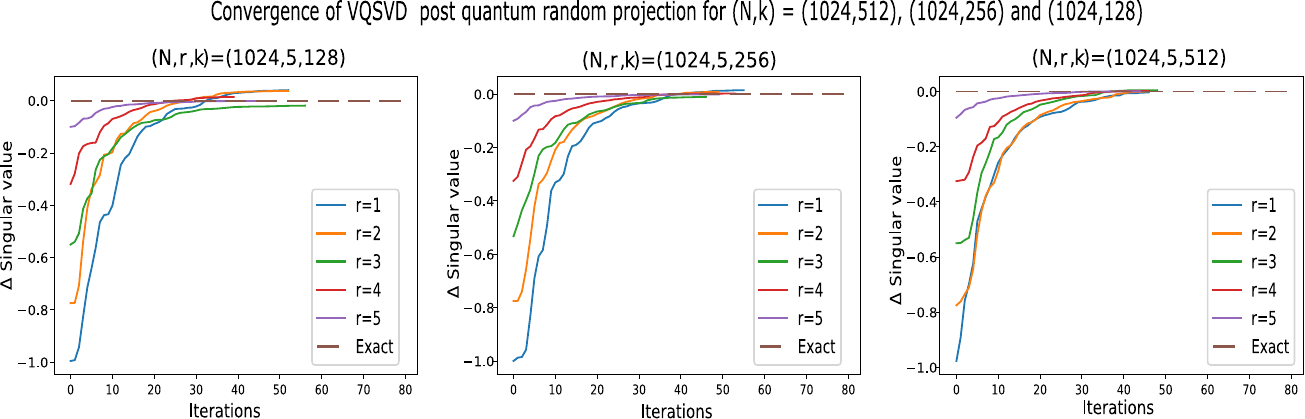}
  \caption{The convergence of singular values obtained using the variational quantum singular value decompositon algorithm to their true values post quantum random projection (2-design quantum random projector). We observe that some times the converged singular value is greater than the true value, this is beacuse of the distortion of singular values created by the quantum random projection. }
  \label{fig: VQSVD convergence}
\end{figure}

\end{document}